\documentclass[a4paper,USenglish,cleveref,autoref,thm-restate]{lipics-v2021}



\bibliographystyle{plainurl}

\title{Parameterized Complexity of Finding Dissimilar Shortest Paths} 


\author{Ryo Funayama}{Graduate School of Information Science and Technology, Hokkaido University, Japan}{funayama.ryo.o3@elms.hokudai.ac.jp}{}{}

\author{Yasuaki Kobayashi}{Faculty of Information Science and Technology, Hokkaido University, Japan}{koba@ist.hokudai.ac.jp}{https://orcid.org/0000-0003-3244-6915}{JSPS KAKENHI Grant Numbers JP20H00595 and JP23H03344}

\author{Takeaki Uno}{National Institute of Informatics, Japan}{uno@nii.ac.jp}{}{}

\authorrunning{R. Funayama et al.} 

\Copyright{Ryo Funayama, Yasuaki Kobayashi, and Takeaki Uno} 

\ccsdesc[100]{Mathematics of computing $\rightarrow$ Discrete mathematics $\rightarrow$ Graph theory $\rightarrow$ Graph algorithms} 

\keywords{Diverse solutions, Shortest paths, Parameterized complexity} 

\category{} 

\relatedversion{} 




\nolinenumbers 
\usepackage{mathtools}
\usepackage{tcolorbox}

\newcommand{\dist}{{\rm dist}}

\newcommand{\xor}{\mathbin{\triangle}}

\newcommand{\optf}{{\rm path}}
\newcommand{\minbp}{{\rm mbp}}
\newcommand{\bp}{{\rm bp}}

\newcommand{\gad}{{\rm gad}}
\newcommand{\True}{{\tt true}}
\newcommand{\False}{{\tt false}}

\crefname{observation}{Observation}{Observations}

\hideLIPIcs

\EventEditors{John Q. Open and Joan R. Access}
\EventNoEds{2}
\EventLongTitle{42nd Conference on Very Important Topics (CVIT 2016)}
\EventShortTitle{CVIT 2016}
\EventAcronym{CVIT}
\EventYear{2016}
\EventDate{December 24--27, 2016}
\EventLocation{Little Whinging, United Kingdom}
\EventLogo{}
\SeriesVolume{42}
\ArticleNo{23}

\begin{document}

\maketitle

\begin{abstract}
We consider the problem of finding ``dissimilar'' $k$ shortest paths from $s$ to $t$ in an edge-weighted directed graph $D$, where the dissimilarity is measured by the minimum pairwise Hamming distances between these paths.
More formally, given an edge-weighted directed graph $D = (V, A)$, two specified vertices $s, t \in V$, and integers $d, k$, the goal of \textsc{Dissimilar Shortest Paths} is to decide whether $D$ has $k$ shortest paths $P_1, \dots, P_k$ from $s$ to $t$ such that $|A(P_i) \xor A(P_j)| \ge d$ for distinct $P_i$ and $P_j$.
We design a deterministic algorithm to solve \textsc{Dissimilar Shortest Paths} with running time $2^{O(3^kdk^2)}n^{O(1)}$, that is, \textsc{Dissimilar Shortest Paths} is fixed-parameter tractable parameterized by $k + d$.
To complement this positive result, we show that \textsc{Dissimilar Shortest Paths} is W[1]-hard when parameterized by only $k$ and paraNP-hard parameterized by $d$.
\end{abstract}

\section{Introduction}
The shortest path problem is one of the most fundamental problems, which is of great importance in both theoretical and practical points of view.
This problem has a wide range of applications in many real-world problems, such as navigation systems, supply chain management, and social network analysis.

In solving such real-world problems, we usually describe these problems as mathematical models, and then apply optimization algorithms to obtain solutions.
However, due to the complex nature of real-world problems, we may overlook several intricate constraints and/or preferences, which are difficult to accurately incorporate into mathematical models.
Moreover, we may even ignore these intricate factors in order to simplify mathematical models, allowing us to solve optimization problems efficiently.
Given this circumstance, optimization algorithms may not generate a good solution for original real-world problems, even if they generate an optimal solution for their mathematical models.

To circumvent this issue, finding \emph{diverse} solutions is an intuitive approach.
In this approach, algorithms are anticipated to generate a moderate number of high-quality and \emph{diverse} solutions, where by diverse solutions, we mean solutions that are ``sufficiently different'' from each other.
With such diverse solutions available, users can then select a desired solution by fully leveraging their cognitive abilitie.
Consequently, solution diversity would broaden the users' options, which is crucial within this approach.

There are numerous studies aimed at finding diverse solutions for various combinatorial problems.
In particular, several theoretical investigations have emerged recently from the perspectives of polynomial-time solvability~\cite{deBergMS23:arXiv:Finding,HanakaKKLO22:AAAI:Computing}, approximability~\cite{DoGN023,GaoGMKTTY22,HanakaKKKKO22:arXiv:Framework}, and fixed-parameter tractability~\cite{BasteFJMOPR22:AI:Diversity,BasteJMPR19:Algorithms:FPT,EibenLW23:arXiv:Determinantal,FominGJPS20,FominGPP021:STACS:Diverse,HanakaKKO21:AAAI:Finding,MerklPS23:ICDT:Diversity}.
Among these, the problem of finding diverse short(est) paths is particularly intriguing~\cite{Akgun:finding:2000,ChondrogiannisB20:VLDB:Finding,HackerBCA21SIGSPATIAL:Most,HanakaKKLO22:AAAI:Computing,Liu:Finding:2018,Voss:heuristic:2015,ZhengYYW07:INFOCOM:Finding}.

Given an unweighted directed graph $D$ with two specified vertices $s$ and $t$, and a positive threshold $\theta$, the problem of determining whether $D$ has two arc-disjoint paths $P_1$ and $P_2$ from $s$ to $t$ with length at most $\theta$ is known to be NP-complete~\cite{LiMS90}.
Hanaka et al.~\cite{HanakaKKLO22:AAAI:Computing} mentioned that finding two (not necessarily arc-disjoint) paths $P_1$ and $P_2$ from $s$ to $t$ with length at most $\theta$ that maximize $|A(P_1) \xor A(P_2)|$ is NP-hard.
Here, $A(P)$ denotes the set of arcs in a path $P$.
Thus, it is reasonable to restrict ourselves to consider finding diverse \emph{shortest} paths.
Zheng et al.~\cite{ZhengYYW07:INFOCOM:Finding} studied the problem of finding $k$ shortest paths $P_1, \dots, P_k$ from $s$ to $t$ in an arc-weighted directed graph $D$ minimizing
\begin{align*}
    \sum_{1 \le i < j \le k} |A(P_i) \cap A(P_j)|.
\end{align*}
They gave a polynomial-time algorithm by reducing the problem to the minimum-cost flow problem.
This objective function is equivalent to maximizing
\begin{align}\label{eq:maxsum}
    \sum_{1 \le i < j \le k} |A(P_i) \xor A(P_j)| = \sum_{1 \le i < j \le k}(|A(P_i)| + |A(P_j)| -|A(P_i) \xor A(P_j)|)
\end{align}
when all paths have an equal number of arcs.
Hanaka et al.~\cite{HanakaKKLO22:AAAI:Computing} gave a similar reduction to compute $k$ shortest paths $P_1, \dots, P_k$ directly maximizing~(\ref{eq:maxsum}), where each path may have a different number of arcs.
However, the objective function (\ref{eq:maxsum}) has a potential drawback: some paths found by the algorithm can be very similar or even identical.
A natural way to overcome this drawback is to consider the following objective function: finding $k$ shortest paths $P_1, \dots, P_k$ maximizing
\begin{align*}
     \min_{1 \le i < j \le k} |A(P_i) \xor A(P_j)|.
\end{align*}

In this paper, we work on this max-min counterpart of the above diverse shortest paths problem, which is formalized as follows.
\begin{tcolorbox}
\begin{description}
  \item[Problem.] \textsc{Dissimilar Shortest Paths}
  \item[Input.] An arc-weighted directed graph $D = (V, A, \ell)$ with $\ell\colon A \to \mathbb R_{>0}$, $s, t \in V$, and nonnegative integers $k, d \in \mathbb N$.
  \item[Task.] Decide if there are $k$ shortest paths from $s$ to $t$ in $D$ such that $|A(P_i) \xor A(P_j)| \ge d$ for $1 \le i < j \le k$.
\end{description}
\end{tcolorbox}

We show the following parameterized upper and lower bound results for this problem.
\begin{theorem}
    \textsc{Dissimilar Shortest Paths} can be solved deterministically in time $2^{O(3^{k}dk^2)}n^{O(1)}$, that is, it is fixed-parameter tractable parameterized by $k + d$.
\end{theorem}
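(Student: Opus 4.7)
My plan is a layered dynamic program on the shortest-path DAG, combined with a representative-set argument that keeps the per-level table fixed-parameter tractable. First I would compute the shortest-path DAG $H = (V, A')$ where $A' = \{(u,v) \in A : \dist(s,u) + \ell(u,v) + \dist(v,t) = \dist(s,t)\}$; since every shortest $s$-$t$ path of $D$ uses only arcs of $A'$, it suffices to find $k$ $s$-$t$ paths in $H$ whose pairwise arc-symmetric-differences are at least $d$. Because every arc of $H$ goes from smaller to larger $\dist(s, \cdot)$ value, the $k$ paths can be advanced in lockstep along the distance levels $\lambda_0 = 0 < \lambda_1 < \cdots < \lambda_L = \dist(s,t)$. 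A partial configuration at level $\lambda_i$ is then encoded by a $k$-tuple of current heads $(v_1, \ldots, v_k)$ each satisfying $\dist(s, v_j) = \lambda_i$, together with the $\binom{k}{2}$ pairwise prefix-overlap counters, each truncated at $d$ (since any larger value is indistinguishable for verifying $|A(P_i) \xor A(P_j)| \ge d$). A transition picks one outgoing arc in $H$ per path and updates the counters accordingly.

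The naive number of configurations per level is as large as $n^k (d+1)^{\binom{k}{2}}$, which is not FPT, so the central ingredient will be a pruning step. I would group configurations by \emph{signature} --- the truncated overlap matrix together with the partition of $[k]$ induced by equal heads --- and within each signature retain a small \emph{representative family} of head-tuples with the property that any feasible completion of a discarded tuple is also realizable from some retained one. This fits the standard representative-set paradigm, which admits a deterministic implementation via Lov\'asz-style representative families. The $2^{O(3^k d k^2)}$ factor in the running time should then arise by compounding: (i) an exponential $2^{O(3^k)}$ from a representative family naturally indexed by ternary labels over the $k$ paths (for instance ``committed-identical'', ``committed-divergent'', and ``undetermined''); (ii) the per-pair remaining-diversity budget of size $O(d)$; and (iii) the $\binom{k}{2} = O(k^2)$ pairs of paths, each contributing an independent constraint.

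The main obstacle I foresee is establishing the representative-family bound itself. Dissimilarity is a pairwise, non-monotone constraint rather than a matroid independence constraint, so a single-matroid representative-set theorem does not directly apply. To overcome this, I would construct a composite combinatorial structure tailored to the problem: for each of the $\binom{k}{2}$ pairs track a ``deviation-capacity'' resource modelling how many further distinguishing arcs are still required, and then prove via a careful exchange argument that any signature-equivalent representative preserves every admissible continuation while respecting all pairwise diversity thresholds simultaneously. Once this bound is in place, each level update runs in time polynomial in the representative-family size, and iterating over the $O(n)$ distance levels yields the claimed $2^{O(3^k d k^2)} n^{O(1)}$ running time.
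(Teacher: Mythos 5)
Your plan has a genuine gap at exactly the point you flag as the ``main obstacle'': the representative-family bound for head-tuples is never established, and without it your dynamic program has $n^k(d+1)^{O(k^2)}$ states per level, which is only an XP bound in $k$, not FPT. The standard representative-set machinery (Lov\'asz/Bollob\'as-type or matroid-based) gives bounds for monotone independence-type constraints; your constraint is a conjunction of $\binom{k}{2}$ pairwise lower bounds on symmetric differences whose future increments depend on which concrete arcs the $k$ heads will share downstream, so two configurations with identical ``signature'' can have disjoint sets of feasible completions. The ``careful exchange argument'' you invoke is precisely the missing theorem, and nothing in the proposal indicates how it would go; the proposed running time $2^{O(3^k d k^2)}$ is then justified only by pattern-matching (``ternary labels'' per path), not derived from any counting of representatives. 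As written, the argument does not prove the theorem. (There are also smaller technical wrinkles: in a weighted DAG the $k$ paths need not have vertices at common distance levels, so lockstep advancement needs heads that may sit ``inside'' an arc, and the overlap/symmetric-difference bookkeeping must handle prefixes of different arc-lengths --- fixable, but unaddressed.)

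For comparison, the paper avoids any per-level state over head-tuples. It follows the diverse-solutions strategy of Fomin et al.: first greedily compute paths that are pairwise far apart with geometrically decreasing thresholds $3^{k-i}d$ (each greedy step is a small dynamic program over the DAG that tracks, for each of the at most $k$ already-chosen paths, a truncated distance-so-far, giving $2^{O(k^2\log d)}n^{O(1)}$ time); if the greedy stalls at $k'<k$ paths, the whole space of shortest paths is partitioned into $k'$ balls of radius $q=3^{k-k'-1}d$ around the greedy paths, any two paths in different balls are automatically at distance at least $d$, and inside one ball every candidate path differs from the center by a ``bypass'' of at most $q$ arcs, so color-coding with an $(m,qr)$-perfect hash family plus a colorful-bypass dynamic program finds the required paths deterministically. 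The $3^k$ in the running time comes from the geometric threshold sequence (the ball radius), not from any ternary labelling of paths. If you want to salvage your approach, you would need to either prove the representative-family lemma you postulate or replace the pruning step with a mechanism like the paper's ball decomposition that bounds how far candidate paths can stray from a small set of reference paths.
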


\begin{theorem}\label{thm:W[1]-hardness}
    \textsc{Dissimilar Shortest Paths} is W[1]-hard parameterized by $k$, even on unweighted directed acyclic graphs whose underlying undirected graphs have pathwidth at most~$4$.
\end{theorem}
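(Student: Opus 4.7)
The plan is to give an fpt-reduction from \textsc{Multicolored Clique} parameterized by the number of colors $k$, a canonical W[1]-hard problem. Given an instance $(G, V_1, \ldots, V_k)$ with $|V_i| = n$, I construct an unweighted DAG $D$ with source $s$, sink $t$, and integer $d$ such that $G$ has a multicolored $k$-clique if and only if $D$ has $k$ pairwise $d$-dissimilar shortest $s$-$t$ paths. Since the output parameter is still $k$, this yields W[1]-hardness parameterized by $k$.

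Structurally, $D$ will be a long chain of constant-size gadgets glued at single vertices on an $s$-$t$ backbone. Every gadget is designed so that each $s$-$t$ path uses the same number of arcs in it, so all $s$-$t$ paths are shortest and each corresponds bijectively to a sequence of local choices. Two flavors of gadgets are used: \emph{selector gadgets}, one per pair $(i, v)$ with $v \in V_i$, each a binary ``select/skip'' choice encoding whether the path represents vertex $v$ of color class $i$; and \emph{non-edge gadgets}, one per non-edge $uv$ with $u \in V_i$, $v \in V_j$, $i \neq j$, in which paths representing either $u$ or $v$ take a common shared branch. Each gadget has at most five vertices (two backbone endpoints and at most three internal ones), so sliding a bag containing the current gadget's endpoints together with its internal vertices along the backbone gives a path decomposition of the underlying undirected graph of width at most~$4$.

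Assuming (without loss of generality, after a standard padding) that $G$ is regular, the threshold $d$ is chosen so that two \emph{honest} paths representing $(i, u)$ and $(j, v)$ with $i \neq j$ are $d$-dissimilar exactly when $uv \in E(G)$. A direct count shows that the Hamming contribution coming from non-edge gadgets equals $2M$ times the number of non-edges incident to exactly one of $u, v$, where $M$ is the branch length in a non-edge gadget; by regularity this value is constant across edges and strictly smaller across non-edges, so a single threshold separates the two cases. Hence $k$ pairwise $d$-dissimilar honest paths encode exactly a multicolored $k$-clique, and conversely a multicolored $k$-clique yields $k$ such paths by selecting the clique vertices.

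The main obstacle is ruling out \emph{dishonest} paths---those taking no ``select'' branch or several of them---since an arbitrary $s$-$t$ path in $D$ need not correspond to a single vertex. I plan to attach small identity-consistency gadgets and amplify each branch length so that any set of $k$ paths containing a dishonest one fails to meet the $d$-dissimilarity threshold, either by sharing too many arcs with any honest path or by causing two dishonest paths to be too similar themselves. Keeping this amplification inside constant-size gadgets so as not to exceed pathwidth~$4$ is expected to be the most delicate point of the proof.
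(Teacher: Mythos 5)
Your reduction is from \textsc{Multicolored Clique} rather than the paper's reduction from \textsc{Unary Bin Packing}, but the difference that matters is not the source problem: it is that your backward direction is missing its essential mechanism. The crux of any such reduction is ruling out solutions that do not decode to a clique, and you explicitly defer exactly this point (``dishonest'' paths) to unspecified ``identity-consistency gadgets.'' Note that the dissimilarity constraint only imposes \emph{lower} bounds on pairwise Hamming distances, so a path that behaves inconsistently (e.g.\ takes the shared branch in non-edge gadgets belonging to several different vertices, or takes no select branch) can easily be \emph{farther} from every honest path than honest paths are from each other; you cannot penalize it by a distance threshold alone. The paper resolves this with a global extremal counting argument: it asks for $2k$ paths, designs the gadgets so that limited arc-disjoint capacity forces $\sum_{i<i'}|P_i\cap P_{i'}|$ to be at least a specific quantity, and chooses the threshold so that the required pairwise dissimilarity forces this sum to meet that bound with equality, whence every pair is at exactly the extremal distance and the solution is rigid (the crossing structure in the $H_j$'s, realized via a proper edge coloring of $K_{2k}$ minus a perfect matching, then decodes the bin assignment). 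Your sketch has no analogue of this tight-budget argument, and with only $k$ paths and multi-branch gadgets there is no obvious capacity obstruction that forces the rigidity you need.

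There is also a concrete flaw already at the level of honest paths: you only build non-edge gadgets for cross-class non-edges ($u\in V_i$, $v\in V_j$, $i\neq j$). Two honest paths representing distinct vertices $u,u'$ of the \emph{same} color class then share no penalizing gadget; with your regularity normalization their intersection is the same as (or smaller than) that of an adjacent cross-class pair, so they meet the threshold. Hence $k$ paths all selecting distinct vertices of a single color class would be pairwise $d$-dissimilar without $G$ containing any multicolored clique, breaking the backward direction even before dishonest paths enter the picture. (This particular issue is likely repairable by also adding gadgets for same-class pairs, but it illustrates that the threshold calibration and the decoding argument are asserted rather than proved.) Finally, the pathwidth-$4$ claim and the exact arithmetic separating edges from non-edges are plausible but would need to be carried out; as it stands the proposal is an outline with the hardest step unsolved, not a proof.
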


\begin{theorem}\label{thm:np-hardness}
    \textsc{Dissimilar Shortest Paths} is NP-hard for some constant $d$ even on unweighted directed graph.
\end{theorem}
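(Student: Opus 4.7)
The plan is to establish \Cref{thm:np-hardness} by a polynomial-time reduction from a classical NP-hard problem such as 3-SAT, with $d$ fixed at some small constant independent of the instance. A preliminary observation shapes the choice of $d$: in any simple directed graph, two distinct $s$-$t$ shortest paths of common length always differ on at least four arcs, because a disagreement between them requires a detour of length at least two on each side (a single-arc detour would need a second arc between the same pair of endpoints, which is forbidden in a simple digraph). Consequently $d \le 4$ collapses the dissimilarity requirement to mere distinctness of shortest paths, which is polynomial-time checkable via standard counting of shortest paths, so the reduction must use a constant $d$ at least $6$.

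Given a 3-CNF formula $\phi$ with $n$ variables and $m$ clauses, the construction I have in mind builds a layered unweighted directed graph $D$ with designated $s$ and $t$ such that every $s$-$t$ shortest path has a common length $L$. The graph is assembled from two types of gadgets interleaved along the layering: variable selector gadgets, one per variable, each offering a small constant number of local parallel branches and encoding the variable's value through the branch taken; and clause gadgets, one per clause, that couple the variable choices by arranging shared or divergent structure between the candidate paths at specific positions. The target $k$ is polynomial in $|\phi|$, and the construction is arranged so that a family of $k$ pairwise $d$-dissimilar shortest $s$-$t$ paths in $D$ corresponds to a family of $k$ assignments of $\phi$ certifying satisfiability through a combinatorial condition encoded by the clause gadgets.

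I expect the principal obstacle to be keeping $d$ genuinely constant while still encoding NP-hard combinatorial structure. A naive construction based solely on parallel binary branches gives only pairwise Hamming-distance constraints on the encoded assignments, which reduces to polynomial-time code-construction problems rather than NP-hardness. The clause gadgets must therefore be designed so that a single-location discrepancy between two candidate paths propagates, via the shared or divergent structure introduced in the clause gadget, into additional arc-differences elsewhere along the paths, allowing the constant threshold $d$ to genuinely separate feasible from infeasible pairs of assignments. Arranging this amplification while keeping every candidate path shortest (so that each assignment is representable by an $s$-$t$ shortest path) is the central technical difficulty; once the gadgets are in place, both directions of the reduction follow by a direct correspondence between path families and assignment families.
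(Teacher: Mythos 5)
This is a plan, not a proof: the step you yourself identify as ``the central technical difficulty'' --- designing clause gadgets that amplify a single-location discrepancy into enough arc-differences while keeping every candidate path shortest and $d$ constant --- is exactly the content of the theorem, and it is left unresolved. Worse, the surrounding frame is also unsubstantiated: you say a family of $k$ pairwise $d$-dissimilar shortest paths should correspond to ``a family of $k$ assignments of $\phi$ certifying satisfiability through a combinatorial condition encoded by the clause gadgets,'' but satisfiability of a 3-CNF formula is witnessed by a single assignment, and you never say what condition on a \emph{family} of assignments you intend to encode, nor why pairwise Hamming-type constraints between path-encoded assignments (which, as you note, tend to collapse into polynomial-time code-construction questions) would capture it. There is also a structural obstacle your sketch does not confront, which the paper flags explicitly: finding pairwise arc-disjoint shortest $s$-$t$ paths is polynomial-time solvable by flow techniques, so any reduction must live in the narrow regime where paths are allowed to overlap slightly; nothing in your outline explains how the clause coupling survives this.

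The paper's actual route shows what a workable resolution looks like, and it is quite different from ``amplification.'' It reduces from \textsc{Maximum Independent Set} on cubic graphs and makes \emph{every} $s$-$t$ path in the constructed digraph a shortest path of the same constant length $\ell = 35$ (using a Vizing $4$-edge-coloring to organize constantly many layers of constant-size edge gadgets). Then the constant threshold $d = 2\ell - 2$ simply means ``any two chosen paths share at most one arc,'' i.e.\ almost arc-disjointness, following the construction of Bachtler et al. The instance asks for $k + 2|E|$ such paths; $2|E|$ auxiliary paths are forced to occupy each edge gadget in a canonical way, and the rigidity of the gadget (its unique triple of nearly disjoint traversals) then forces the remaining $k$ paths to be vertex paths of pairwise nonadjacent vertices, i.e.\ an independent set. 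The key idea you are missing is precisely this: keep $d$ constant not by amplifying differences but by making the common shortest-path length itself a constant, so that dissimilarity degenerates into near-disjointness, and then encode the NP-hard structure via gadgets whose near-disjoint traversals are essentially unique.
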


\Cref{thm:W[1]-hardness,thm:np-hardness} indicate that the simultaneous parameterization of $k$ and $d$ seems to be vital to have the fixed-parameter tractability of \textsc{Dissimilar Shortest Paths}.

To show the algorithmic result, we adopt the strategy of \cite{FominGPP021:STACS:Diverse}, where they design a fixed-parameter algorithm for computing diverse perfect matchings.
In this strategy, we first compute dissimilar paths in a greedy manner.
If we successfully compute such a set of $k$ dissimilar paths, we are done.
Otherwise, by appropriately setting the dissimilarity in the greedy computation, the solution space (i.e., the collections of shortest paths) can be partitioned into at most $k$ ``balls with bounded radii''.
This partition enables us to compute a set of dissimilar paths in a ball by the color-coding technique~\cite{AlonYZ95}.

It is worth noting that a typical argument to prove the hardness of finding diverse solutions is refused.
For instance, the problem of finding diverse perfect matchings is NP-hard~\cite{FominGJPS20} as it is already NP-hard to find a set of \emph{disjoint} solutions~\cite{Holyer81:SICOMP:NP}.
A similar argument is used to show the NP-hardness of finding diverse global minimum cuts and diverse interval schedulings~\cite{HanakaKKKKO22:arXiv:Framework}.
However, it is well known that the problem of finding disjoint shortest paths from $s$ to $t$ in a directed graph is solvable in polynomial-time using standard network flow algorithms.
This fact requires nontrivial ideas for proving our lower bound results~\Cref{thm:W[1]-hardness,thm:np-hardness}.

We would like to mention that Eiben et al.~\cite{EibenLW23:arXiv:Determinantal} recently give a general algebraic technique to solve many diverse versions of combinatorial problems.
In fact, their technique is applicable to our problem \textsc{Dissimilar Shortest Paths}, which yields a \emph{randomized} $2^{O(dk^2)}n^{O(1)}$-time algorithm.
In contrast to this, our algorithm is \emph{deterministic}, consisting of elementary dynamic programming algorithms with a stadard tool.

\section{Preliminaries}

In this paper, we assume that the readers are familiar with standard terminology in graph theory and parameterized complexity.

For sets $X, Y$, we denote by $X \xor Y$ the symmetric difference of $X$ and $Y$, that is, $X \xor Y = (X \setminus Y) \cup (Y \setminus X)$.
It is easy to check that the following triangle inequality hold.
\begin{observation}\label{obs:triangle-inequality}
    For sets $X, Y, Z$, it holds that $|X \xor Z| \le |X \xor Y| + |Y \xor Z|$.
\end{observation}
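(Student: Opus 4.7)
The plan is to reduce the inequality to two completely standard set-theoretic facts about the symmetric difference operator. First, I would record that $\xor$ is both associative and commutative on sets, and satisfies $S \xor S = \emptyset$ for every $S$. This is immediate from the elementwise characterization ``$x \in A \xor B$ iff $x$ lies in exactly one of $A$ and $B$,'' together with the standard $\mathbb{F}_2$-style cancellation when the same set is XOR'd with itself.

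From those properties I obtain the key algebraic identity
\begin{equation*}
    (X \xor Y) \xor (Y \xor Z) \;=\; X \xor (Y \xor Y) \xor Z \;=\; X \xor Z.
\end{equation*}
Second, I would use the trivial containment $A \xor B \subseteq A \cup B$, which gives the size bound $|A \xor B| \le |A| + |B|$ for arbitrary sets $A$ and $B$.

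Applying the size bound with $A := X \xor Y$ and $B := Y \xor Z$ and substituting the identity above yields
\begin{equation*}
    |X \xor Z| \;=\; \bigl|(X \xor Y) \xor (Y \xor Z)\bigr| \;\le\; |X \xor Y| + |Y \xor Z|,
\end{equation*}
which is exactly the claim. An equivalent route, if one prefers a single-shot argument, is to identify each set with its indicator vector in $\mathbb{F}_2^U$ over the ambient universe $U$; then $|X \xor Z|$ is precisely the Hamming distance between the indicators of $X$ and $Z$, and the inequality is the standard triangle inequality for Hamming distance.

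There is no real obstacle here: the statement is folklore and the only thing to be careful about is to justify the two ingredients (the identity $(X \xor Y) \xor (Y \xor Z) = X \xor Z$ and the bound $|A \xor B| \le |A| + |B|$) from first principles rather than treating them as given, since the excerpt does not pre-declare any algebraic structure on $\xor$. Both justifications are one-line elementwise checks.
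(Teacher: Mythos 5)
Your proof is correct: the identity $(X \xor Y) \xor (Y \xor Z) = X \xor Z$ combined with $|A \xor B| \le |A| + |B|$ gives exactly the claimed inequality, and the Hamming-distance reformulation is an equally valid one-liner. The paper itself offers no proof (it states the observation as ``easy to check''), so your argument simply supplies the standard verification that the authors leave implicit, and it does so cleanly.
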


We use the following identity.
\begin{observation}\label{obs:partition}
    For sets $W, X, Y, Z$ with $W \cap X = \emptyset$ and $Y \cap Z = \emptyset$, it holds that
    \begin{align*}
        |(W \cup Z) \xor (Y \cup Z)| = |W \xor Y| + |X \xor Y| + |W \xor Z| + |X \xor Z| - |W \cup X| - |Y \cup Z|.
    \end{align*}
\end{observation}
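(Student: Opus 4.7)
The plan is to verify the identity by direct computation. The main tool is the standard expansion $|A \xor B| = |A| + |B| - 2|A \cap B|$, applied to each of the four symmetric differences on the right-hand side, together with the two disjointness hypotheses $W \cap X = \emptyset$ and $Y \cap Z = \emptyset$.

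First I would sum the four expansions to obtain
\begin{align*}
    |W \xor Y| + |X \xor Y| + |W \xor Z| + |X \xor Z| = 2(|W| + |X| + |Y| + |Z|) - 2S,
\end{align*}
where $S := |W \cap Y| + |W \cap Z| + |X \cap Y| + |X \cap Z|$. Since $W \cap X = \emptyset$ and $Y \cap Z = \emptyset$, we have $|W \cup X| + |Y \cup Z| = |W| + |X| + |Y| + |Z|$, so the right-hand side of the claimed identity collapses to $|W| + |X| + |Y| + |Z| - 2S$.

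Next I would identify $S$ with $|(W \cup X) \cap (Y \cup Z)|$. Distributivity gives
\begin{align*}
    (W \cup X) \cap (Y \cup Z) = (W \cap Y) \cup (W \cap Z) \cup (X \cap Y) \cup (X \cap Z),
\end{align*}
and these four pieces are pairwise disjoint, because any two of them are contained either in $W \cap X$ or in $Y \cap Z$, each of which is empty by hypothesis. Therefore $S = |(W \cup X) \cap (Y \cup Z)|$, and one more application of $|A \xor B| = |A| + |B| - 2|A \cap B|$ with $A := W \cup X$ and $B := Y \cup Z$ turns the right-hand side into $|(W \cup X) \xor (Y \cup Z)|$, as required.

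There is no serious obstacle here; this is essentially bookkeeping. The only step that needs care is the pairwise-disjointness check for the four intersections appearing in $S$, which crucially uses both hypotheses $W \cap X = \emptyset$ and $Y \cap Z = \emptyset$ simultaneously.
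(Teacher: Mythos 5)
Your argument is correct, and it establishes the identity the paper actually uses: the left-hand side of the statement should read $|(W \cup X) \xor (Y \cup Z)|$ (the printed $W \cup Z$ is a typo, as a small example such as $W=\{1\}, X=\{2\}, Y=\{3\}, Z=\{4\}$ shows), and that is precisely what your computation yields, matching what the paper's own proof computes. Your route is a genuine, if mild, variation on the paper's. The paper extracts a one-sided splitting identity --- for disjoint $Y, Z$ and any set $A$, $|A \xor (Y \cup Z)| = |A \xor Y| + |A \xor Z| - |A|$, proved by counting $|A \setminus (Y \cup Z)|$ and $|(Y \cup Z) \setminus A|$ --- and applies it three times, first with $A = W \cup X$ to split $Y \cup Z$, then symmetrically to split $W \cup X$ inside $|A \xor Y|$ and $|A \xor Z|$; the two subtracted terms $|W \cup X|$ and $|Y \cup Z|$ fall out of these applications. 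You instead flatten all five symmetric differences at once via $|A \xor B| = |A| + |B| - 2|A \cap B|$, and the only step carrying real content is the identification $S = |(W \cup X) \cap (Y \cup Z)|$, where, exactly as you note, both hypotheses are needed to make the four intersections $W \cap Y$, $W \cap Z$, $X \cap Y$, $X \cap Z$ pairwise disjoint (each pairwise overlap sits inside $W \cap X$ or $Y \cap Z$). The paper's version packages the bookkeeping into a reusable lemma applied recursively, which makes the two disjointness hypotheses enter one at a time; your version is a single self-contained expansion that makes the cancellation of the $-|W \cup X| - |Y \cup Z|$ terms completely explicit. Both are rigorous and of comparable length.
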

\begin{proof}
    Let $A = W \cup X$. Then we have
    \begin{align*}
        |A \xor (Y \cup Z)| &= |(A \setminus (Y \cup Z))| + |(Y \cup Z) \setminus A|\\
        &= |A| - |A \cap (Y \cup Z)| + |(Y \cup Z) \setminus A|\\
        &= |A| - |A \cap Y| - |A \cap Z| + |Y \setminus A| + |Z \setminus A|\\
        &= |A| + (|Y \setminus A| + |A| - |A \cap Y|) + (|Z \setminus A| + |A| - |A \cap Z|) - 2|A|\\
        &= |A \xor Y| + |A \xor Z| - |A|.
    \end{align*}
    By applying the same argument to $|A \xor Y|$ and $|A \xor Z|$, the observation follows.
\end{proof}

\paragraph*{Graphs}
Let $D = (V, A)$ be a directed graph.
We denote the vertex set and the arc set of $D$ by $V(D)$ and $A(D)$, respectively.
For $X \subseteq V$, we denote by $D[X]$ the subgraph of $D$ induced by $X$.
For $v \in V$, we denote by $\delta^-_D(v)$ the set of incoming arcs to $v$.
For two vertices $u$ and $v$, a directed path from $u$ to $v$ is called an \emph{$uv$-path}.
For an $uv$-path $P$ and an arc $(v, w)$, the $uw$-path obtained by concatenating $P$ and $(v, w)$ is denoted by $P + (v, w)$.

\paragraph*{Perfect hash family}
We use the derandmomized version of the famous color-coding technique due to \cite{AlonYZ95}.
Let $n$ and $k$ be positive integers.
We denote by $[n]$ the set of positive integers between $1$ and $n$.
A family $\mathcal F$ of functions $f\colon [n] \to [k]$ is called an \emph{$(n, k)$-perfect hash family} if for every subset $S \subseteq [n]$ with $|S| = k$, there is a function $f \in \mathcal F$ such that for $i, j \in S$ with $i \neq j$, it holds that $f(i) \neq f(j)$.
\begin{theorem}[\cite{NaorSS95:FOCS:Splitters}]\label{thm:perfect}
    For positive integers $n, k$, there is an $(n, k)$-perfect hashing family $\mathcal F$ of size $e^kk^{O(\log k)}$.
    Moreover, such a family $\mathcal F$ can be computed in time $e^kk^{O(\log k)}n\log n$.
\end{theorem}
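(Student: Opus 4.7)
The plan is to combine a probabilistic existence argument with a two-stage derandomization that first compresses the universe. First I would establish existence via the probabilistic method: a uniformly random function $f\colon [n]\to [k]$ is injective on a fixed $k$-subset $S\subseteq[n]$ with probability $k!/k^k$, which by Stirling is $\Theta(\sqrt{k}\,e^{-k})$. A union bound over the $\binom{n}{k}$ possible $k$-subsets shows that $N=O(e^k k^{3/2}\log n)$ independent random functions already form an $(n,k)$-perfect hash family with positive probability. This yields the existence of a family, but of size still carrying a $\log n$ factor, so the main work is to remove this dependence and to derandomize.

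To eliminate the $\log n$ factor from the family size, I would use a universe-reduction in the style of Naor--Schulman--Srinivasan. Construct a ``splitter'' family $\mathcal{G}$ of functions $g\colon [n]\to [m]$ with $m=k^{O(1)}$ such that for every $k$-subset $S\subseteq[n]$ at least one $g\in\mathcal{G}$ is injective on $S$; this can be realized explicitly from error-correcting codes (for instance, by identifying each element of $[n]$ with the evaluation vector of a low-degree polynomial over a field of size $\Theta(k^2)$, and letting the hash functions be the coordinate projections). A careful iterated application of this idea brings $|\mathcal{G}|$ down to $k^{O(\log k)}$ while preserving the splitting property on every $k$-subset. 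Then on the reduced universe $[m]$ of polynomial-in-$k$ size, a perfect hash family $\mathcal{H}$ can be built of size $e^k\cdot k^{O(\log k)}$ by derandomizing the probabilistic argument above via the method of conditional expectations, which is feasible in time polynomial in $m^k=k^{O(k)}$, or equivalently via $k$-wise almost-independent distributions whose support has the required size. The composition $\{h\circ g : h\in\mathcal{H},\, g\in\mathcal{G}\}$ is then an $(n,k)$-perfect hash family of size $e^k\cdot k^{O(\log k)}$, independent of $n$.

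For the stated running time, observe that each splitter function $g\in\mathcal{G}$ is described by $O(\log n)$ bits and is evaluated at all $n$ points in time $O(n\log n)$; the small family $\mathcal{H}$ is computed in time $e^k k^{O(\log k)}$ independent of $n$, and every composed function $h\circ g$ is listed by applying $h$ pointwise. Multiplying through, the total construction time is $e^k k^{O(\log k)}\cdot n\log n$, matching the claim.

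The hard part will be the explicit, deterministic construction of the splitter $\mathcal{G}$ with size $k^{O(\log k)}$ (as opposed to $\operatorname{poly}(k)\log n$): iterating splitters recursively so that the universe contracts geometrically from $n$ to $k^{O(1)}$ while the size stays quasi-polynomial in $k$ requires the careful code-based construction of Naor--Schulman--Srinivasan; everything else reduces to either a straightforward union bound or a conditional-expectations derandomization on a universe of size $k^{O(1)}$.
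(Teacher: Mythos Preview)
The paper does not prove this statement at all: it is quoted as a black-box result from Naor--Schulman--Srinivasan, with no argument given. So there is nothing in the paper to compare your sketch against, and your outline of the probabilistic method followed by code-based universe reduction and composition is indeed a reasonable high-level summary of the NSS construction.

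That said, there is a genuine gap in your proposal, caused by taking the paper's stated bound too literally. The size $e^k k^{O(\log k)}$ written in the paper is missing a $\log n$ factor (the correct NSS bound is $e^k k^{O(\log k)}\log n$), and no $(n,k)$-perfect hash family of size independent of $n$ can exist for $k\ge 2$: the map $x\mapsto (f(x))_{f\in\mathcal F}$ must be injective on $[n]$, since any two elements lie in a common $k$-set on which some $f$ is injective, forcing $|\mathcal F|\ge \log_k n$. Consequently your step ``a careful iterated application of this idea brings $|\mathcal G|$ down to $k^{O(\log k)}$'' cannot work as stated; the splitter family from $[n]$ to a universe of size $k^{O(1)}$ necessarily has size growing with $n$. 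The honest version of your argument yields $|\mathcal G|=k^{O(1)}\log n$ (via code projections) and $|\mathcal H|=e^k k^{O(\log k)}$ on the small universe, whose product gives the true NSS bound and the running time you compute. For the paper's purposes this is harmless, since only the construction time $e^k k^{O(\log k)} n\log n$ is ever used.
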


\section{FPT algorithm}
In this section, given an edge-weighted directed graph $D = (V, A)$ with $\ell \colon A \to \mathbb R_+$, vertices $s, t \in V$, and integer $k, d$, we give an fixed-parameter tractable algorithm for \textsc{Dissimilar Shortest Paths}.
For notational convenience, we may regard $D$ as its edge set.
Thus, $|D|$ indicates the number of arcs in it.

\subsection{Preprocessing}
We apply the following simple preprocessing, which is crucial for subsequent phases.
In this preprocessing, we delete all edges that do not belong to any shortest path from $s$ to $t$ in $D$.
This can be done by first computing the distance labeling $\dist$ from $s$ and deleting edge $(u, v) \in A$ if $\dist(s, u) \neq \dist(s, v) + \ell((u, v))$, where $\dist(s, u)$ is the shortest path length from $s$ to $u$.
We also remove all vertices that are not reachable from $s$ or not reachable to $t$.
The directed graph obtained by applying this preprocessing is denoted by $D'$.
Since $\ell(e) > 0$ for all $e \in A$, $D'$ is indeed acyclic.
The following observation is immediate.

\begin{observation}\label{obs:preprocessing}
    Every shortest path from $s$ to $t$ in $D$ appears as a path from $s$ to $t$ in $D'$.
    Conversely, every path from $s$ to $t$ in $D'$ appears as a shortest path from $s$ to $t$ in $D$.
    Moreover, $s$ and $t$ are unique source and sink vertices of $D'$, respectively. 
\end{observation}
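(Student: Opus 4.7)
The plan is to verify the three assertions independently, each by a short structural argument that exploits the preprocessing rule and the strict positivity of $\ell$. For the first claim I would take any shortest $s$-$t$ path $P$ in $D$ and show that no edge and no vertex of $P$ is removed. By subpath optimality, every vertex $v$ on $P$ has the property that the prefix of $P$ ending at $v$ realizes $\dist(s, v)$, so every arc $(u, v)$ of $P$ satisfies $\dist(s, v) = \dist(s, u) + \ell((u, v))$ and hence survives the deletion rule; furthermore, every vertex of $P$ is reachable from $s$ along the prefix and reaches $t$ along the suffix, so no vertex is pruned by the reachability clean-up either.

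For the converse I would take an arbitrary $s$-$t$ path $Q = v_0 v_1 \cdots v_m$ with $v_0 = s$ and $v_m = t$ in $D'$ and prove by induction on $i$ that the prefix $v_0 \cdots v_i$ has weight exactly $\dist(s, v_i)$ in $D$. The case $i = 0$ is immediate. For the inductive step, the arc $(v_{i-1}, v_i)$ survived preprocessing, hence $\dist(s, v_i) = \dist(s, v_{i-1}) + \ell((v_{i-1}, v_i))$, and combining this with the inductive hypothesis yields the desired equality. Setting $i = m$ shows that $Q$ has weight $\dist(s, t)$, so it is indeed a shortest $s$-$t$ path in $D$.

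For the third claim I would show that $s$ has no incoming arc and $t$ has no outgoing arc in $D'$, from which uniqueness of the source and sink follows. If some $(u, s)$ survived, the deletion rule would force $0 = \dist(s, s) = \dist(s, u) + \ell((u, s))$, which is impossible since $\ell$ is strictly positive. If some $(t, v)$ survived, then $v$ still reaches $t$ in $D'$ (else it would have been pruned in the reachability clean-up), and concatenating $(t, v)$ with a $v$-$t$ walk in $D'$ yields a directed cycle, contradicting the acyclicity of $D'$ already noted above. Every other vertex lies on some shortest $s$-$t$ path in $D$ and, by the first claim, on an $s$-$t$ path in $D'$, so it has at least one incoming and one outgoing arc; hence $s$ and $t$ are the unique source and sink.

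The main obstacle, such as it is, is purely bookkeeping: one has to invoke positivity of $\ell$ twice (once to exclude incoming arcs at $s$ and once to rule out the cycle created by an outgoing arc at $t$), and one has to be careful that the reachability clean-up is performed within the edge-deleted graph in both directions, so that every surviving vertex is guaranteed to reach $t$ in $D'$ rather than only in $D$. Beyond that, the argument is entirely elementary.
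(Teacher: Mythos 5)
Your argument is correct and is precisely the elementary verification the paper leaves implicit (it states the observation without proof, calling it immediate): tightness of every arc on a shortest path for the forward direction, telescoping $\dist(s,v_i)=\dist(s,v_{i-1})+\ell((v_{i-1},v_i))$ along any surviving path for the converse, and positivity of $\ell$ together with acyclicity of $D'$ for the source/sink claim. Your caveat that the reachability clean-up must be performed in the edge-deleted graph is exactly the right reading of the preprocessing (and you correctly use the tightness condition $\dist(s,v)=\dist(s,u)+\ell((u,v))$, which the paper's text states with $u$ and $v$ inadvertently swapped).
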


The directed graph $D'$ can be computed in polynomial time from $D$.
By this observation, we can assume that the input graph $D$ is acyclic and satisfying the properties in \Cref{obs:preprocessing} in the rest of this section,.

\subsection{An overview of our algorithm}
We first give an outline of our algorithm, which is used in \cite{FominGPP021:STACS:Diverse}.
To this end, we need to implement two subroutines, whose details are postponed to the next two subsections, and hence use them as a black box in this subsection.

For $u, v \in V$, we denote by $\mathcal P(u, v)$ the set of all paths from $u$ to $v$ in $D$.
By~\Cref{obs:preprocessing}, $\mathcal P(s, t)$ consists of all shortest paths from $s$ to $t$ in the original graph.
In the following, we assume the following two subroutines.

\begin{lemma}\label{lem:farthest-path}
    Given a non-negative integer $q$ and $r$ paths $P_1, \dots, P_r \in \mathcal P(s, t)$, there is an algorithm with running time $2^{O(r\log q)}n^{O(1)}$ that decides whether there is a path $P \in \mathcal P(s, t)$ such that $|P \xor P_i| \ge q$ for $1 \le i \le r$.
    Moreover, the algorithm finds such a path (if it exists) within the same running time bound.
\end{lemma}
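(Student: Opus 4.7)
The plan is to prove this by a dynamic programming on the (already preprocessed) acyclic graph $D$. The starting point is the identity
\begin{align*}
|P \xor P_i| \;=\; |P \setminus P_i| + |P_i \setminus P| \;=\; 2|P \setminus P_i| + |P_i| - |P|,
\end{align*}
obtained by substituting $|P \cap P_i| = |P| - |P \setminus P_i|$ into $|P_i \setminus P| = |P_i| - |P \cap P_i|$. Since $|P_i|$ is fixed by the input, the condition $|P \xor P_i| \ge q$ depends only on $|P \setminus P_i|$ and $|P|$.

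I would set up a DP with state $(v, \tau)$, where $v \in V$ and $\tau = (\tau_1, \dots, \tau_r) \in \{0, 1, \dots, q\}^r$, storing the value
\begin{align*}
L^*(v, \tau) \;=\; \min\bigl\{\, |Q| : Q \in \mathcal{P}(s, v) \text{ and } \min(|Q \setminus P_i|, q) = \tau_i \text{ for all } i \,\bigr\}
\end{align*}
(with $L^*(v, \tau) = \infty$ if no such $Q$ exists). The truncation of $\tau_i$ at $q$ is the key to keeping the state space at $O(n(q+1)^r)$ instead of $n^{O(r)}$: once $|Q \setminus P_i|$ reaches $q$, the trivial inclusion $Q \setminus P_i \subseteq Q \xor P_i$ already secures $|Q \xor P_i| \ge q$ regardless of what happens next, so the exact count beyond $q$ is irrelevant. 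Keeping $|Q|$ as the DP value (rather than as a state coordinate) further avoids a factor of $n$ in the state. The DP is filled in topological order: for each state $(v, \tau)$ with $L^*(v, \tau) < \infty$ and each outgoing arc $(v, w)$, we relax $L^*(w, \tau') \gets \min\{L^*(w, \tau'),\, L^*(v, \tau) + 1\}$ with $\tau'_i = \min(\tau_i + [(v, w) \notin P_i],\, q)$. We accept iff there is $\tau$ with $L^*(t, \tau) < \infty$ such that for every $i$ either $\tau_i = q$ or $L^*(t, \tau) \le 2\tau_i + |P_i| - q$, and recover the witnessing path by standard backtracking.

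For correctness, any accepted $(t, \tau)$ yields a valid $Q$: when $\tau_i = q$ the identity gives $|Q \xor P_i| \ge |Q \setminus P_i| \ge q$, and when $\tau_i < q$ substituting $|Q \setminus P_i| = \tau_i$ and $|Q| = L^*(t, \tau)$ into the identity gives $|Q \xor P_i| = 2\tau_i + |P_i| - L^*(t, \tau) \ge q$. Conversely, any valid $P^*$ reaches some $(t, \tau^*)$ with $\tau^*_i = \min(|P^* \setminus P_i|, q)$ and $L^*(t, \tau^*) \le |P^*|$, which passes the acceptance test by the same identity. The state space has size $O(n(q+1)^r)$ and each arc generates at most $(q+1)^r$ transitions, so the total running time is $O(|A(D)| (q+1)^r) = 2^{O(r \log q)} n^{O(1)}$, as required. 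The main obstacle I anticipate is arguing that the truncation at $q$ is information-preserving: once $\tau_i = q$ the exact value of $|Q \setminus P_i|$ is lost, but the above inclusion ensures the $i$-th final inequality remains guaranteed without it.
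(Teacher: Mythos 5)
Your proof is correct and takes essentially the same approach as the paper's: a dynamic program over the topologically ordered preprocessed DAG whose state at each vertex is an $r$-dimensional counter vector with entries capped at $q$, giving $n(q+1)^r = 2^{O(r\log q)}n$ states and the claimed running time. The only difference is bookkeeping: the paper's recurrence (\Cref{lem:farthest-path-dp}) tracks the partial Hamming distance $|Q \xor (P_k \cap A_i)|$ against the prefix of each $P_k$ directly, using the partition identity of \Cref{obs:partition}, whereas you track the truncated value of $|Q \setminus P_k|$ together with the minimum path length and invoke the identity $|Q \xor P_k| = 2|Q \setminus P_k| + |P_k| - |Q|$ only in the acceptance test at $t$ --- both variants are valid.
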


\begin{lemma}\label{lem:paths-in-a-ball}
    Given non-negative integers $d, q, r$ and a path $P \in \mathcal P(s, t)$, there is an algorithm with running time $2^{O(qr^2)}n^{O(1)}$ that decides whether there are $r$ paths $P_1, \dots, P_r \in \mathcal P(s, t)$ such that
    \begin{itemize}
        \item for $1 \le i \le r$, $|P \xor P_i| \le q$ and
        \item for $1 \le i < j \le r$, $|P_i \xor P_j| \ge d$.
    \end{itemize}
    Moreover, the algorithm finds such a set of paths (if it exists) within the same running time bound.
\end{lemma}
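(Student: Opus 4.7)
The plan is to combine color-coding on non-$P$-arcs with a joint dynamic programming (DP) guided by the structure of the reference path $P = v_0 v_1 \cdots v_m$ (with $v_0=s$, $v_m=t$). First I would apply \Cref{thm:perfect} to obtain an $(|A(D)\setminus A(P)|, rq)$-perfect hash family $\mathcal F$ of size $2^{O(rq)}$, giving colorings $\chi \colon A(D)\setminus A(P) \to [rq]$. For any valid $(P_1, \dots, P_r)$ the union $\bigcup_i (P_i \setminus P)$ has at most $rq$ arcs, so some $\chi^* \in \mathcal F$ (a \emph{good} coloring) assigns them distinct colors. The algorithm iterates over $\chi \in \mathcal F$ and outputs ``yes'' as soon as the subroutine below succeeds, and by the above this already happens at $\chi^*$.

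Next I would exploit the structure of $P$. Since $D$ is a DAG with unique source $s$ and sink $t$ (\Cref{obs:preprocessing}), each $P_i$ visits $V(P)$ in the order induced by $P$. Between consecutive visits at $v_j$ and $v_{j'}$, $P_i$ takes a \emph{hop}, a subpath with no intermediate $V(P)$-vertex. Either the hop equals the $P$-arc $(v_j, v_{j+1})$ (trivial), or it consists entirely of non-$P$-arcs since $P$-arcs connect only consecutive $V(P)$-vertices; in the non-trivial case $P_i$ skips the $j'-j$ $P$-arcs between $v_j$ and $v_{j'}$, which all lie in $M_i := P \setminus P_i$. Together with $|P_i \xor P| \le q$ this forces $j'-j \le q$ for every hop. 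For each $(j, j', C')$ with $0 < j'-j \le q$ and $C' \subseteq [rq]$ I would precompute whether a non-trivial hop from $v_j$ to $v_{j'}$ exists whose arcs have colors exactly $C'$, via a standard colored-path DP on $D[(V \setminus V(P)) \cup \{v_j, v_{j'}\}]$, in $2^{O(rq)} n^{O(1)}$ total time.

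I would then run a joint DP along $v_0, \dots, v_m$. The state records, for each $i \in [r]$, a triple $(\sigma_i, C_i, m_i)$ where $\sigma_i \in \{\text{at }v_j\} \cup \{\text{on hop to }v_{j'} : j < j' \le j+q\}$, $C_i \subseteq [rq]$ is the colors used so far by $P_i$, and $m_i \in \{0,\dots,q\}$ counts $P$-arcs already skipped; it additionally stores $\mu_{ij} = \min(|M_i^{\le j} \xor M_j^{\le j}|, d)$ for every pair $i < j$. Transitions at $v_j \to v_{j+1}$ let each path currently at $v_j$ choose the trivial hop or a precomputed non-trivial hop, updating $C_i$, $m_i$, and each $\mu_{ij}$ (incremented exactly when $P_i$ and $P_j$ disagree on skipping $(v_j, v_{j+1})$); paths whose pending hop ends at $v_{j+1}$ return to ``at $v_{j+1}$''. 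Acceptance requires every path to reach $v_m$ and $|C_i \xor C_j| + \mu_{ij} \ge d$ for all pairs. Under a good coloring, setting $N_\ell := P_\ell \setminus P$, the bijection between arcs of $\bigcup_k N_k$ and their colors gives $|N_i \xor N_j| = |C_i \xor C_j|$, so the check is equivalent to $|P_i \xor P_j| \ge d$. Counting, the state space has size $((q+1)^2 \cdot 2^{rq})^r \cdot (d+1)^{\binom{r}{2}} = 2^{O(qr^2)}$, using the assumption $d \le 2q$ (otherwise output ``no'' by \Cref{obs:triangle-inequality}); each transition can be processed in $2^{O(qr^2)}$ time.

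The hard part will be the pairwise bookkeeping: arguing that the truncated counters $\mu_{ij}$ carry exactly the information needed for the final inequality, that the joint transitions update all $\binom{r}{2}$ counters consistently, and that overlaps $C_i \cap C_j \ne \emptyset$ (arising when $P_i, P_j$ share a non-$P$-arc) remain consistent under the identity $|N_i \xor N_j| = |C_i \xor C_j|$ guaranteed by good colorings.
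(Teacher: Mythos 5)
Your pairwise bookkeeping is actually the sound part: the truncated counters $\mu_{ij}$ handle the $P$-arc contribution exactly, $|C_i \xor C_j| \le |(P_i \setminus P) \xor (P_j \setminus P)|$ holds for an arbitrary coloring, and under a good coloring equality holds, so the final test $|C_i \xor C_j| + \mu_{ij} \ge d$ is a correct surrogate for $|P_i \xor P_j| \ge d$; the hop decomposition along $P$ is also a legitimate alternative to the paper's decomposition of each difference $P \xor P_i$ into minimal bypasses. The genuine gap is elsewhere: nowhere do you enforce the first condition of the lemma, $|P \xor P_i| \le q$. Your state caps only $m_i = |P \setminus P_i|$ at $q$; the number of non-$P$ arcs of $P_i$ is unconstrained, because every non-$P$ arc of $D$ receives some color in $[qr]$, so $C_i \subseteq [qr]$ is vacuous and $|C_i|$ can be far smaller than $|P_i \setminus P|$ when colors collide. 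Hence the DP may accept via paths with $|P \xor P_i| \gg q$ that merely happen to be pairwise $d$-far, i.e., it can answer ``yes'' even though no $r$ paths inside the ball exist. This is not harmless slack: the lemma is applied to each ball $\mathcal P_i$, and the outer algorithm's correctness (via \Cref{lem:distance-between-balls}) needs the returned paths to genuinely lie in their ball; a path escaping its ball can be close to, or even equal to, a path returned for another ball.

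The fix is small but must be part of the algorithm: require every hop appended to $P_i$ to be colorful and to use colors disjoint from the current $C_i$, so that $P_i \setminus P$ is colorful and $|C_i| = |P_i \setminus P|$ exactly, and add the acceptance condition $|C_i| + m_i \le q$ for every $i$. Completeness is preserved because under the good coloring all arcs of $\bigcup_i (P_i \setminus P)$ of the target solution receive distinct colors, and soundness now holds for every coloring because colorfulness certifies $|P \xor P_i| = |C_i| + m_i \le q$. This is exactly the role colorfulness plays in the paper's proof, where the coloring is over all arcs (including those of $P$), each solution path is encoded as a $C$-colorful bypass with $|C| \le q$, so $|P \xor Q| = |C| \le q$ holds by construction, and the pairwise condition is then checked purely on $r$-tuples of realizable color sets, avoiding your joint $r$-path product DP altogether. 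With the fix, your state-space count and the $d \le 2q$ reduction give the claimed $2^{O(qr^2)}n^{O(1)}$ bound.
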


As a first step to solve \textsc{Dissimilar Shortest Paths}, we compute (at most) $k$ paths $P_1, \ldots, P_{k'} \in \mathcal P(s, t)$ of $D$ in a greedy manner.
To this end, we first compute an arbitrary path $P_1$ from $s$ to $t$ in $D$.
By~\Cref{obs:preprocessing}, we have $P_1 \in \mathcal P(s, t)$.
Suppose that paths $P_1, \dots, P_{i-1} \in \mathcal P(s, t)$ are computed so far.
Then we compute a path $P_{i} \in \mathcal P(s, t)$ that satisfies $|P_{i} \xor P_j| \ge 3^{k - i}d$ for all $1 \le j < i$.
Such a path $P_{i}$ can be found by the algorithm in \Cref{lem:farthest-path} with $q = 3^{k - i}d$ if it exists.
We repeat this until $k$ paths are found or no such a path is found.
If this procedure successfully computes the set of $k$ paths $P_1, \ldots, P_k \in \mathcal P(s, t)$, we are done.
Thus, in the following, we assume that there are $k'$ paths $\mathcal P = \{P_1, \dots, P_{k'}\}$ computed by this procedure for some $k' < k$.
By the construction of $\mathcal P$, these paths satisfy that $|P_i \xor P_j| \ge 3^{k - j}d$ for $1 \le i < j \le k'$.

\begin{lemma}\label{lem:uniqueness-of-a-ball}
    Let $k' < k$.
    For $P \in \mathcal P(s, t)$, there is a unique path $P_i \in \mathcal P$ such that $|P \xor P_i| < 3^{k - k' - 1}d$.
\end{lemma}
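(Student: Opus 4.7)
The plan is to derive both existence and uniqueness from two facts already at hand: the stopping condition of the greedy procedure, and the spacing guarantee that greedy enforces between $P_i$ and $P_j$ in $\mathcal P$.

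For existence, I would use why the greedy stopped. The procedure terminated with $k' < k$ paths because it could not find any $P_{k'+1} \in \mathcal P(s,t)$ satisfying $|P_{k'+1} \xor P_j| \ge 3^{k-(k'+1)}d = 3^{k-k'-1}d$ for all $j \le k'$. In other words, for every $P \in \mathcal P(s,t)$ there exists at least one index $i \in \{1,\dots,k'\}$ with $|P \xor P_i| < 3^{k-k'-1}d$.

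For uniqueness, I would suppose for contradiction that there are two indices $i < j$ in $\{1,\dots,k'\}$ such that both $|P \xor P_i| < 3^{k-k'-1}d$ and $|P \xor P_j| < 3^{k-k'-1}d$. Applying the triangle inequality of \Cref{obs:triangle-inequality} to the sets $A(P_i)$, $A(P)$, $A(P_j)$ gives
\begin{align*}
|A(P_i) \xor A(P_j)| \le |A(P_i) \xor A(P)| + |A(P) \xor A(P_j)| < 2 \cdot 3^{k-k'-1}d.
\end{align*}
On the other hand, because $P_j$ was added to $\mathcal P$ by the greedy step at iteration $j$, it satisfies $|A(P_i) \xor A(P_j)| \ge 3^{k-j}d$. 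Since $j \le k'$, we have $3^{k-j}d \ge 3^{k-k'}d = 3 \cdot 3^{k-k'-1}d$, which contradicts the previous inequality. Hence the index $i$ guaranteed by the existence argument is unique.

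The argument is short and the only place I see any real care is needed is making sure the exponents are correctly tracked: the greedy demands distance $\ge 3^{k-i}d$ at iteration $i$, so the stopping condition uses exponent $k-k'-1$, and the spacing between surviving paths uses $k-j \ge k-k'$, which is exactly one step larger; this gap factor of $3$ against the triangle-inequality factor of $2$ is what makes uniqueness work, so I would double-check this indexing rather than any substantive mathematical step.
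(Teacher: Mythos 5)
Your proof is correct and follows essentially the same route as the paper: existence from the greedy stopping condition (otherwise $P$ would have been taken as the $(k'+1)$th path), and uniqueness by combining the triangle inequality bound $2\cdot 3^{k-k'-1}d$ with the greedy spacing guarantee $|P_i \xor P_j| \ge 3^{k-j}d \ge 3^{k-k'}d$. Your indexing of the exponents matches the paper's, so no further checking is needed.
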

\begin{proof}
    Suppose otherwise.
    If there is no such a path $P_i \in \mathcal P$, the above greedy algorithm computes $P \in \mathcal P(s, t)$ as a $(k'+1)$th path.
    If there are two paths $P_i, P_j \in \mathcal P$ with $i < j$ such that both $|P \xor P_i| < 3^{k-k'-1}d$ and $|P \xor P_j| < 3^{k-k'-1}d$ hold.
    Then, by~\Cref{obs:triangle-inequality}, we have
    \begin{align*}
        |P_i \xor P_j| \le |P_i \xor P| + |P \xor P_j| < 2 \cdot 3^{k-k'-1}d < 3^{k-k'}d.
    \end{align*}
    However, it holds that $|P_i \xor P_j| \ge 3^{k-j}d \ge 3^{k-k'}d$, which yields a contradiction.
\end{proof}

Let $q = 3^{k-k' - 1}d$ and let $\mathcal P_i = \{P \in \mathcal P(s, t) : |P \xor P| < q\}$ for $1 \le i \le k'$.
By~\Cref{lem:uniqueness-of-a-ball}, $\{\mathcal P_1, \dots, \mathcal P_{k'}\}$ is a partition of $\mathcal P(s, t)$.
Moreover, the following lemma holds.
\begin{lemma}\label{lem:distance-between-balls}
    For $P \in \mathcal P_i$ and $P' \in \mathcal P_j$ with $i \neq j$, it holds that $|P \xor P'| \ge d$.
\end{lemma}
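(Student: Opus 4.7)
The plan is to transport the lower bound on $|P_i \xor P_j|$ coming from the greedy construction of $\mathcal P$ down to $|P \xor P'|$ by means of the triangle inequality from \Cref{obs:triangle-inequality}. Without loss of generality I take $i < j$.

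First I would apply \Cref{obs:triangle-inequality} twice along the chain $P_i, P, P', P_j$ to obtain
\begin{align*}
|P_i \xor P_j| \le |P_i \xor P| + |P \xor P'| + |P' \xor P_j|.
\end{align*}
Then I would plug in the three quantitative facts at hand: from the greedy construction of $\mathcal P$, $|P_i \xor P_j| \ge 3^{k-j}d$, and since $j \le k'$ this is at least $3 \cdot 3^{k-k'-1}d = 3q$; from $P \in \mathcal P_i$ and $P' \in \mathcal P_j$, both $|P \xor P_i|$ and $|P' \xor P_j|$ are strictly less than $q$. Rearranging the displayed inequality then yields $|P \xor P'| > 3q - 2q = q$, and since $k' < k$ we have $q = 3^{k-k'-1}d \ge d$, from which the desired bound $|P \xor P'| \ge d$ follows.

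There is no real combinatorial obstacle here: the argument is purely the triangle inequality combined with the factor-of-three geometric schedule chosen during the greedy phase. The only step to double-check is the arithmetic gap, namely that the greedy threshold at index $j$ is at least $3q$, leaving exactly enough slack to absorb the two radii strictly less than $q$ at the endpoints while still keeping a margin of at least $d$ in the middle. This is precisely the reason the greedy schedule uses base three in $3^{k-i}d$.
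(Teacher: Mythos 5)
Your proof is correct and follows essentially the same route as the paper: the triangle inequality along the chain $P_i, P, P', P_j$, combined with the greedy guarantee $|P_i \xor P_j| \ge 3^{k-j}d \ge 3q$ and the two radius bounds strictly below $q$. The paper phrases it as a contradiction ($|P \xor P'| < d$ would force $|P_i \xor P_j| < q + d + q < 3^{k-k'}d$), while you rearrange directly, but the argument and the use of the base-three schedule are identical.
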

\begin{proof}
    Suppose $|P \xor P'| < d$.
    Then, by~\Cref{obs:triangle-inequality},
    \begin{align*}
        |P_i \xor P_j| &\le |P_i \xor P| + |P \xor P'| + |P' \xor P_j|\\
        &< q + d + q\\
        &< 3^{k - k'}d,
    \end{align*}
    which contradicts the fact that $|P_i \xor P_j| \ge 3^{k - j}d$.
\end{proof}

Let $r_1, \dots, r_{k'}$ be non-negative integers with $k = r_1 + \dots + r_{k'}$.
For $1 \le i \le k'$, we compute $r_i$ paths $P^i_1, \dots, P^i_{r_i} \in \mathcal P_i$ such that $|P^i_j \xor P^i_{j'}| \ge d$.
This can be done by the algorithm of \Cref{lem:paths-in-a-ball} by setting $P = P_i$ and $r = r_i$.
Once we have those paths $P^i_1, \dots, P^i_{r_i}$ for each $1 \le i \le k'$, by~\Cref{lem:distance-between-balls}, we conclude that there are $k$ paths $P_1, \dots, P_k$ such that $|P_i \xor P_i| \ge d$ for $1 \le i < j \le k$.
Thus, by trying all possible combinations of $r_1, \dots, r_{k'}$, we can solve \textsc{Dissimilar Shortest Paths}.

We next estimate the running time bound of the above algorithm.
We can compute the directed acyclic graph $D'$ in~\Cref{obs:preprocessing} and a path $P_1$ in polynomial time by a standard single-source shortest path algorithm.
For $2 \le i \le k$, given $\{P_1, \dots, P_{i-1}\} \subseteq \mathcal P(s, t)$, we can compute a path $P_i \in \mathcal P(s, t)$ that satisfies $|P_i \xor P_j| \ge 3^{k - i}d$ for all $1 \le j < i$ in time $2^{O(k^2\log d)}n^{O(1)}$ by the algorithm in \Cref{lem:farthest-path}.
By just repeating this at most $k$ times, we can compute the paths $\{P_1, \dots, P_{k'}\}$ in time $2^{O(k^2\log d)}n^{O(1)}$ as well.
For each combination $0 \le r_1, \dots, r_{k'} \le k$ with $k = r_1 + \dots + r_{k'}$, we compute $r_i$ paths in $\mathcal P_i$ that are ``far away'' from each other in time $2^{O(3^kdr_i^2)}n^{O(1)} \subseteq  2^{O(3^kdk^2)}n^{O(1)}$ for all $i$.
This can be done by the algorithm in \Cref{lem:paths-in-a-ball}.
Since the number of such combinations on $r_i$ is upper bounded by $k^{O(k)}$, the total running time of our algorithm is upper bounded by $2^{O(3^kdk^2)}n^{O(1)}$ as well.

We would like to note that the above strategy due to \cite{FominGPP021:STACS:Diverse} is quite versatile.
When we have two subroutines stated in \Cref{lem:farthest-path} and \Cref{lem:paths-in-a-ball},  we can find ``dissimilar'' solutions for other combinatorial problems.
In particular, it is possible to extend the strategy to other metrics defined on solutions.
However, our algorithms of \Cref{lem:farthest-path} and \Cref{lem:paths-in-a-ball} heavily rely on some properties of Hamming distance, which are hard to generalize them to other metrics.

\subsection{Proof of \texorpdfstring{\Cref{lem:farthest-path}}{}}

Let $D = (V, A)$ be a directed acyclic graph with $s, t \in V$.
Let $P_1, \dots, P_r \in \mathcal P(s, t)$ and $q \in \mathbb N$.
The goal of this subsection is to show an algorithm to compute a path $P \in \mathcal P(s, t)$ such that $|P \xor P_i| \ge q$ for $1 \le i \le r$.

Let $(v_1, \dots, v_n)$ be a topological ordering of the vertices in $D$, that is, every arc $(v_i, v_j) \in A$ satisfies $i < j$.
Due to the preprocessing phase, $s$ is the unique source vertex and $t$ is the unique sink vertex in $D$, which implies that $v_1 = s$ and $v_n = t$.
For $1 \le i \le n$, let $A_i$ be the set of arcs in $D$ that are incoming to some vertex in $\{v_1, \dots, v_i\}$ (i.e., $A_i = \{(v_j, v_{j'}) \in A: 1 \le j' \le i\}$).    
For each arc $e = (v_i, v_j) \in A$, we denote by $L(e)$ the integer vector $(\gamma'_1, \dots, \gamma'_r)$ defined as $\gamma'_k = |(P_k \cap (A_j \setminus A_i)) \xor \{e\}|$ for $1 \le k \le r$.

For $1 \le i \le n$ and for each vector $\Gamma = (\gamma_1, \dots, \gamma_r)$, our algorithm computes a path $P \in \mathcal P(s, v_i)$ such that $|P \xor (P_j \cap A_i)| \ge \gamma_j$ for $1 \le j \le r$.
We define $\optf(i, \gamma) = \True$ if and only if such a path $P$ exists. 

\begin{lemma}\label{lem:farthest-path-dp}
    For $1 \le j \le n$ and vector $\Gamma = (\gamma_1, \dots, \gamma_r)$,
    \begin{align*}
        \optf(j, \Gamma) = \begin{dcases}
            \True & \text{if } j = 1 \text{ and } \Gamma = (0, \dots, 0)\\
            \False & \text{if } j = 1 \text{ and } \Gamma \neq (0, \dots, 0)\\
            \bigvee_{e = (v_i, v_j) \in \delta^-_D(v_j)} \optf(v_i, \Gamma - L(e)) & \text{otherwise}
        \end{dcases},
    \end{align*}
    where we define $\optf(j, \Gamma) = \False$ if $\Gamma$ contains a negative component.
\end{lemma}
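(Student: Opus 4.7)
The plan is to prove both directions of the recurrence by induction on $j$ in the topological order. Unpacking the semantics, $\optf(j, \Gamma)$ asserts the existence of some $P \in \mathcal P(s, v_j)$ satisfying $|P \xor (P_k \cap A_j)| \ge \gamma_k$ for every $1 \le k \le r$, with the convention that any $\Gamma$ having a negative component evaluates to $\False$.

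For the base case $j = 1$, we have $v_1 = s$ and $A_1 = \emptyset$, so the only element of $\mathcal P(s, v_1)$ is the empty walk $P$, for which $|P \xor (P_k \cap A_1)| = 0$ for every $k$. The defining condition thus holds exactly when every $\gamma_k \le 0$, which together with the sign convention singles out $\Gamma = (0, \ldots, 0)$, matching the first two branches of the recurrence.

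For the inductive step, I observe that any $P \in \mathcal P(s, v_j)$ has a unique last arc $e = (v_i, v_j) \in \delta^-_D(v_j)$, and stripping $e$ yields a $P' \in \mathcal P(s, v_i)$ with $P = P' + e$. The heart of the proof is the identity, valid for every $k$,
\[
|P \xor (P_k \cap A_j)| = |P' \xor (P_k \cap A_i)| + L(e)_k,
\]
which I would establish as follows. The arc set $A_j$ decomposes as the disjoint union $A_i \sqcup (A_j \setminus A_i)$. Because $D$ is acyclic and $P'$ ends at $v_i$, every arc of $P'$ has head in $\{v_1, \ldots, v_i\}$, so $P' \subseteq A_i$; conversely $e$ has head $v_j$, hence $e \in A_j \setminus A_i$. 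Consequently $P = P' \sqcup \{e\}$ respects the same partition, and likewise $P_k \cap A_j = (P_k \cap A_i) \sqcup (P_k \cap (A_j \setminus A_i))$. The symmetric difference then splits additively in cardinality, and the second summand is $L(e)_k$ by definition.

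Given this identity, the condition $|P \xor (P_k \cap A_j)| \ge \gamma_k$ for all $k$ is equivalent to $|P' \xor (P_k \cap A_i)| \ge \gamma_k - L(e)_k$ for all $k$. Thus, once the last arc $e$ is fixed, witnesses $P$ for $\optf(j, \Gamma)$ ending in $e$ correspond bijectively to witnesses $P'$ for $\optf(i, \Gamma - L(e))$, and taking the disjunction over all $e \in \delta^-_D(v_j)$ yields the recurrence. I do not foresee a genuine obstacle; the only piece demanding care is the partition argument that lets the symmetric difference split additively across the ``old'' arcs $A_i$ and the ``new'' arcs $A_j \setminus A_i$, and this rests only on the topological order together with the fact that $e$'s head is $v_j$.
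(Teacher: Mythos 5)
Your proof is correct and follows essentially the same route as the paper's: induction along the topological order, peeling off the unique last arc $e = (v_i, v_j)$ of a witness path, and establishing the additive identity $|P \xor (P_k \cap A_j)| = |P' \xor (P_k \cap A_i)| + L(e)_k$, from which both directions of the recurrence (and the base case at $v_1 = s$) follow immediately. The only cosmetic difference is that you derive this identity directly from the disjointness of $A_i$ and $A_j \setminus A_i$ together with $P' \subseteq A_i$ and $e \in A_j \setminus A_i$, whereas the paper routes the same computation through \Cref{obs:partition} and then cancels terms.
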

\begin{proof}
    The proof is done by induction on $j$.
    The base case $j = 1$ is clear from the definition.

    Suppose that $j > 1$ and $\optf(j, \Gamma) = \True$.
    Let $P \in \mathcal P(s, v_j)$ such that $|P \xor (P_k \cap A_j)| \ge \gamma_k$ for $1 \le k \le r$ and let $e = (v_i, v_j)$ be the unique arc incoming to $v_j$ in $P$.
    Let $P' = P - e$.
    Then, for $1 \le k \le r$,
    \begin{align*}
        |P \xor (P_k \cap A_j)| &= |(P' \cup \{e\}) \xor ((P_k \cap A_i) \cup (P_k \cap (A_j \setminus A_i)) |\\
        &= |P' \xor (P_k \cap A_i)| + |\{e\} \xor (P_k \cap A_i)| + |P' \xor (P_k \cap (A_j \setminus A_i))| +\\
        &\hspace{4.5cm} |\{e\} \xor (P_k \cap (A_j \setminus A_i))| - |P| - |P_k \cap A_j|,
    \end{align*}
    where the second equality follows from \Cref{obs:partition}.
    As $e \in A_j \setminus A_i$ and $P' \subseteq A_i$, we have $|\{e\} \xor (P_k \cap A_i)| = |P_k \cap A_i| + 1$ and $|P' \xor (P_k \cap (A_j\setminus A_i))| = |P'| + |P_k \cap (A_j \setminus A_i)|$.
    Thus,
    \begin{align*}
        |P \xor (P_k \cap A_j)| &= |P' \xor (P_k \cap A_i)| + |\{e\} \xor (P_k \cap (A_j \setminus A_i))| +\\
        &\hspace{2.5cm} |P_k \cap A_i| + 1 + |P'| + |P_k \cap (A_j \setminus A_i)| - |P| - |P_k \cap A_j|\\
        &= |P' \xor (P_k \cap A_i)| + |\{e\} \xor (P_k \cap (A_j \setminus A_i))| + \\
        &\hspace{2.5cm} \underbrace{1 + |P'| - |P|}_0 + \underbrace{|P_k \cap A_i| + |P_k \cap (A_j \setminus A_i)| - |P_k \cap A_j|}_0\\
        &= |P' \xor (P_k \cap A_i)| + |\{e\} \xor (P_k \cap (A_j \setminus A_i))|.
    \end{align*}
    This implies that $|P' \xor (P_k \cap A_i)| = |P \xor (P_k \cap A_i)| - \gamma'_k \ge \gamma_k - \gamma'_k$.
    Applying the same argument for all $1 \le k \le r$, we have $\optf(v_i, \Gamma - L(e)) = \True$.
    It is not hard to see that this transformation is reversible: For a path $P' \in \mathcal P(s, v_i)$ with $|P' \xor (P_k \cap A_i)|$ and $e = (v_i, v_j)$, $P' + e \in \mathcal P(s, v_j)$ and satisfies $|(P' + e) \xor (P_k \cap A_j)| = |P' \xor (P_k \cap A_i)| + \gamma'_k$.
\end{proof}

By~\Cref{lem:farthest-path-dp}, $D$ has a desired path $P \in \mathcal P(s, t)$ if and only if $\optf(n, (q, \dots, q)) = \True$.
To compute $\optf(n, (q, \dots, q))$, there are $n \cdot (q + 1)^r = 2^{O(r\log q)}n$ subproblems, each of which can be evaluated in time $2^{O(r \log q)}n^{O(1)}$ by dynamic programming.
By a standard trace back technique for dynamic programming, we can construct such a path $P$ in the same running time bound.
Hence, \Cref{lem:farthest-path} follows.

\subsection{Proof of \texorpdfstring{\Cref{lem:paths-in-a-ball}}{}}
To prove \Cref{lem:paths-in-a-ball}, we need some auxiliary definitions.
In the following, we again assume that $D$ is acyclic.

For $P, P' \in \mathcal P(s, t)$, we refer to the set of arcs in $P \xor P'$ as a \emph{bypass}.
The following lemma is immediate from the definition.

\begin{lemma}\label{lem:xor-bypass}
    For $P, P', P'' \in \mathcal P(s, t)$, let $B' = P \xor P'$ and $B'' = P \xor P''$.
    Then, $P' \xor P'' = B' \xor B''$.
\end{lemma}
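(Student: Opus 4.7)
The statement is purely a set-theoretic identity about symmetric differences, with no graph-theoretic content beyond the definition of a bypass. So my plan is to reduce everything to the algebraic structure of symmetric difference on subsets of $A(D)$, and recall that $(2^{A(D)}, \xor)$ is an abelian group: the operation $\xor$ is commutative and associative, every set is its own inverse ($X \xor X = \emptyset$), and $\emptyset$ is the identity.

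Granting these properties, the proof is a one-line rewrite:
\begin{align*}
B' \xor B'' \;=\; (P \xor P') \xor (P \xor P'') \;=\; (P \xor P) \xor (P' \xor P'') \;=\; \emptyset \xor (P' \xor P'') \;=\; P' \xor P''.
\end{align*}
That is, I would just apply commutativity and associativity to pair up the two copies of $P$ and cancel them.

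The only thing that needs a sentence of justification is associativity and commutativity of $\xor$, which is standard (and can be verified element-by-element: an arc $e$ lies in $X \xor Y \xor Z$ iff it lies in an odd number of the sets $X, Y, Z$, a condition symmetric in the three arguments). There is no real obstacle here; the lemma statement even emphasizes this by calling it ``immediate from the definition.'' The only subtle point worth flagging is that $B'$ and $B''$ are defined as subsets of $A(D)$, so the identity takes place in the power set of $A(D)$ and the fact that $P, P', P''$ happen to be paths plays no role in the argument.
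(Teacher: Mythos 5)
Your proof is correct and matches the intended argument: the paper gives no explicit proof, treating the lemma as immediate, and the cancellation $(P \xor P') \xor (P \xor P'') = P' \xor P''$ via associativity, commutativity, and $P \xor P = \emptyset$ is exactly the reasoning behind that remark. Nothing is missing.
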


In the following, we fix $P \in \mathcal P(s, t)$ and denote the vertices of $P$ by $v_1, \dots, v_\ell$ with $v_1 = s$, $v_\ell = t$, and $(v_i, v_{i+1}) \in P$ for $1 \le i < \ell$.
When we refer to a bypass $B$, we implicitly assume that $B = P \xor P'$ for some $P' \in \mathcal P(s, t)$.
Let $P' \in \mathcal P(s, t)$ and let $B = P \xor P'$.
By reversing all arcs of $P'$ in $B$, the obtained graph is Eulerian, which can be decomposed into arc-disjoint directed cycles.
We say that a bypass $B$ is \emph{minimal} if the graph obtained in this way is a (single) cycle.
In other words, $B$ consists of two internally vertex-disjoint paths from $v_i$ to $v_j$ for some $i \le j$.
Conversely, every bypass can be obtained by taking the disjoint union of minimal bypasses.
See~\Cref{fig:bypass} for an illustration.
\begin{figure}
    \centering
    \includegraphics{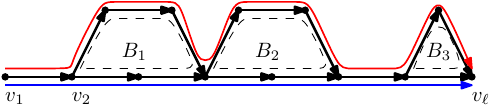}
    \caption{The figure illustrates a bypass $B$ consisting of red and blue paths from $v_1$ to $v_\ell$. The arc set of $B$ can be decomposed into three minimal bypasses $B_1$, $B_2$, and $B_3$.}
    \label{fig:bypass}
\end{figure}
This decomposition enables us to compute a bypass by dynamic programming.

Suppose that there are $r$ paths $P^*_1, \dots, P^*_r \in \mathcal P(s, t)$ that satisfy the conditions of \Cref{lem:paths-in-a-ball}, that is, $|P \xor P^*_i| \le q$ for $1 \le i \le r$ and $|P^*_i \xor P^*_j| \ge d$ for $1 \le i < j \le r$.
For each $i$, let $B^*_i = P \xor P^*_i$, and let $B^* = \bigcup_{1 \le i \le r} B^*_i$.
Obviously, we have $|B| \le qr$.
Let $m = |A|$ be the number of arcs in $D$ and let $\mathcal F$ be an $(m, qr)$-perfect hashing family.
Then, there is a function $f \in \mathcal F$ such that $f(e) \neq f(e')$ for $e, e' \in B^*$ with $e \neq e'$.
In the following, we consider the function $f$ as a (not unnecessarily proper) edge coloring of $D$, and each arc in $B^*$ receives a distinct color under $f$.
For $X \subseteq A$, we write $f(X)$ to denote the set of colors used in $X$.

\begin{lemma}\label{lem:red:coloring}
    For $1 \le i \le r$, let $f(B^*_i) = C^*_i$.
    Then, for $1 \le i < j \le r$, we have $|C^*_i \xor C^*_j| \ge d$.
\end{lemma}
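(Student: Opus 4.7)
The plan is to combine the injectivity of $f$ on $B^*$ with \Cref{lem:xor-bypass} and then invoke the assumed lower bound on $|P^*_i \xor P^*_j|$. The key observation is that when a function is injective on a set $S$, it preserves symmetric differences and cardinalities of subsets of $S$.

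First, I would record the following general fact: if $g \colon U \to V$ is injective on a set $S \subseteq U$, then for every $X, Y \subseteq S$ we have $g(X \xor Y) = g(X) \xor g(Y)$ and $|g(X) \xor g(Y)| = |X \xor Y|$. The proof is one line: since $g|_S$ is a bijection onto $g(S)$, membership of an element of $g(S)$ in $g(X)$ is equivalent to membership of its unique preimage in $X$. I would apply this to $g = f$ and $S = B^*$, noting that $B^*_i, B^*_j \subseteq B^*$ by definition and that $f$ was chosen from the perfect hash family precisely so that it is injective on $B^*$.

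Applying the fact yields
\begin{align*}
    |C^*_i \xor C^*_j| \;=\; |f(B^*_i) \xor f(B^*_j)| \;=\; |B^*_i \xor B^*_j|.
\end{align*}
By \Cref{lem:xor-bypass} with $P$, $P^*_i$, $P^*_j$ playing the roles of $P$, $P'$, $P''$, we have $B^*_i \xor B^*_j = P^*_i \xor P^*_j$, hence $|B^*_i \xor B^*_j| = |P^*_i \xor P^*_j|$. Finally, the hypothesis of \Cref{lem:paths-in-a-ball} guarantees $|P^*_i \xor P^*_j| \ge d$, which chains back to give $|C^*_i \xor C^*_j| \ge d$, as required.

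There is no real obstacle here; the statement is essentially the observation that a perfect hash family that is injective on all the arcs involved transports the Hamming metric on bypasses isometrically to a Hamming metric on color sets. The only care needed is to verify that both $B^*_i$ and $B^*_j$ lie inside the set $B^*$ on which $f$ is guaranteed to be injective, which is immediate from the definition $B^* = \bigcup_{1 \le i \le r} B^*_i$.
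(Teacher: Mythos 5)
Your proof is correct and follows exactly the paper's argument: the chain $|C^*_i \xor C^*_j| = |f(B^*_i) \xor f(B^*_j)| = |B^*_i \xor B^*_j| = |P^*_i \xor P^*_j| \ge d$, justified by the injectivity of $f$ on $B^*$ and \Cref{lem:xor-bypass}. The only difference is that you spell out the injectivity-preserves-symmetric-differences step in slightly more detail than the paper does.
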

\begin{proof}
    By the assumption, $|P^*_i \xor P^*_j| \ge d$.
    Then, we have
    \begin{align*}
        |C^*_i \xor C^*_j| = |f(B^*_i) \xor f(B^*_j)| = |B^*_i \xor B^*_j| = |P^*_i \xor P^*_j| \ge d,
    \end{align*}
    where the second equality follows from the fact that each arc in $B^*$ has a distinct color and the third equality follows from \Cref{lem:xor-bypass}.
\end{proof}

\begin{lemma}\label{lem:red:coloring2}
    Let $B$ and $B'$ be bypasses with $f(B) = C$ and $f(B') = C'$ for some $C, C' \subseteq [qr]$.
    Let $Q = P \xor B$ and $Q' = P \xor B'$ be paths in $\mathcal P(s, t)$.
    Then, $|Q \xor Q'| = |B \xor B'| \ge |C \xor C'|$.
\end{lemma}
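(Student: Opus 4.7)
The plan is to decompose the claim into two independent parts: the equality $|Q \xor Q'| = |B \xor B'|$, and the inequality $|B \xor B'| \ge |C \xor C'|$. The first part is purely about bypasses and is handled by a direct appeal to the preceding lemma, while the second is a general statement about set-valued functions that has nothing to do with the path structure.

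For the equality, I would simply invoke \Cref{lem:xor-bypass} with $P' \coloneqq Q$ and $P'' \coloneqq Q'$. By hypothesis $Q, Q' \in \mathcal P(s,t)$ and $B = P \xor Q$, $B' = P \xor Q'$, so the lemma yields $Q \xor Q' = B \xor B'$; taking cardinalities closes this step.

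For the inequality, my plan is to prove the general fact that applying any function weakly contracts symmetric differences. Concretely, given a color $c \in C \setminus C'$, there must exist an arc $e \in B$ with $f(e) = c$, and such an $e$ cannot lie in $B'$, for otherwise $c = f(e)$ would belong to $f(B') = C'$. Hence $e \in B \setminus B'$. Since distinct colors force distinct witness arcs, choosing one witness per color defines an injection from $C \setminus C'$ into $B \setminus B'$, giving $|C \setminus C'| \le |B \setminus B'|$. The symmetric argument (swapping the roles of $B, C$ and $B', C'$) yields $|C' \setminus C| \le |B' \setminus B|$, and summing the two inequalities produces $|C \xor C'| \le |B \xor B'|$.

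I do not anticipate any real obstacle. The first part is a single line using \Cref{lem:xor-bypass}, and the second is an elementary counting argument that does not even rely on $f$ being a perfect hash on $B^*$; it holds for any function whatsoever on arcs. The lemma is essentially a bookkeeping statement that prepares the ground for the color-coding dynamic program to come.
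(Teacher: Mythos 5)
Your proposal is correct and follows essentially the same route as the paper: the equality $|Q \xor Q'| = |B \xor B'|$ via \Cref{lem:xor-bypass}, and the inequality from the fact that applying the coloring $f$ can only contract symmetric differences. The only difference is that the paper simply asserts $|B \xor B'| \ge |f(B) \xor f(B')|$ while you spell out the witness-arc injection justifying it, which is a harmless (and welcome) elaboration.
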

\begin{proof}
    As $|B \xor B'| \ge |f(B) \xor f(B)|$, we have $|B \xor B'| \ge |C \xor C'|$.
    By~\Cref{lem:xor-bypass}, $|Q \xor Q'| = |B \xor B'| \ge |C \xor C'|$.
\end{proof}

For $C \subseteq [qr]$, a subgraph $D'$ of $D$ is said to be \emph{$C$-colorful} (under $f$) if $f(D') = C$ and $|D'| = |C|$, that is, each arc in $D'$ receives a distinct color of $C$ (under $f$).
We say that $C$ is \emph{realizable} (under $f$) if $D$ has a $C$-colorful bypass.
By~\Cref{lem:red:coloring},
\begin{itemize}
    \item bypass $B^*_i$ is $C^*_i$-colorful with $f(B^*_i) = C^*_i$ for $1 \le i \le r$ and
    \item $|C^*_i \xor C^*_j| \ge d$ for $1 \le i < j \le r$.
\end{itemize}
Conversely, suppose that there are $r$ sets of colors $C_1, \dots, C_r \subseteq [qr]$ such that $D$ has a $C_i$-colorful bypass $B_i$.
By~\Cref{lem:red:coloring2}, there are $r$ paths $Q_1, \dots, Q_r \in \mathcal P(s, t)$ such that $|Q_i \xor Q_j| \ge |C_i \xor C_j|$ for $1 \le i < j \le r$.
Thus, it suffices to compute $r$ realizable sets $C_1, \dots, C_r \subseteq [qr]$ such that $|C_i \xor C_j| \ge d$ for $1 \le i < j \le r$.

\begin{lemma}\label{lem:colorful-minimal-bypass}
    Given integers $i, i'$ with $1 \le i < i' \le \ell$, we can decide whether $D$ has a minimal $C$-colorful bypass $B$ such that $B \cap P = \{(v_j, v_{j+1}): i \le j < i'\}$ for all $C \subseteq [qr]$ with $|C| \le q$ in total time $2^{qr}n^{O(1)}$.
\end{lemma}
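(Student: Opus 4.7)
The plan is to reduce the question, for all admissible $C$ simultaneously, to a single color-subset dynamic program on a suitable sub-DAG of $D$. First I would check whether the restriction of $f$ to the $P$-subpath $R := \{(v_j,v_{j+1}) : i \le j < i'\}$ is injective: if two arcs of $R$ share a color, then no $C$ can be realized by a bypass whose $P$-intersection equals $R$, and the answer is ``no'' for every such $C$. Otherwise set $C_P := f(R)$, which has size $i'-i$.

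Since $B$ is a minimal bypass with $B \cap P = R$, I may write $B = R \cup Q$, where $Q$ is a second $v_i$-to-$v_{i'}$ path internally vertex-disjoint from $R$ and sharing no arc with $P$; then $B$ is $C$-colorful iff $C \supseteq C_P$ and $Q$ is ``rainbow'' with color set exactly $C \setminus C_P$. A short topological-ordering observation further restricts where $Q$ may live: because $P$ is a path in the DAG $D$, its vertices appear in $P$-order in every topological ordering of $D$, so any $v_i$-to-$v_{i'}$ path in $D$ lies topologically between $v_i$ and $v_{i'}$ and in particular cannot visit any of $v_1,\ldots,v_{i-1}$ or $v_{i'+1},\ldots,v_\ell$. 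Hence the valid $Q$'s are precisely the $v_i$-to-$v_{i'}$ paths in the sub-DAG $D^\star$ obtained from $D$ by deleting the arcs of $R$ and the internal vertices $v_{i+1},\ldots,v_{i'-1}$; the acyclicity of $D^\star$ further guarantees that any walk with pairwise distinct arcs is already a simple path.

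I would then run a single dynamic program on $D^\star$: for each vertex $v$ and each $S \subseteq [qr]$ let $T[v,S] = \True$ iff $D^\star$ has a $v_i$-to-$v$ walk whose arcs carry pairwise distinct colors with color set exactly $S$, with base case $T[v_i,\emptyset] = \True$ and recurrence
\[
T[v,S] \;=\; \bigvee_{e = (u,v),\; f(e) \in S} T\bigl[u,\, S \setminus \{f(e)\}\bigr].
\]
There are at most $n \cdot 2^{qr}$ entries, each filled in polynomial time, giving total time $2^{qr}n^{O(1)}$. For each target $C \subseteq [qr]$ with $|C| \le q$ I would then declare ``yes'' iff $C_P \subseteq C$ and $T[v_{i'},\, C \setminus C_P] = \True$.

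The main obstacle is the modeling step rather than the DP itself: one must justify that restricting the second path to $D^\star$ loses no valid minimal bypass, and that tracking colors only (not visited vertices) is nevertheless enough to certify the existence of a genuine second path. Both points are handled by the acyclicity of $D$ via the topological-ordering argument above; once they are in place, the color-subset DP is routine and the claimed $2^{qr}n^{O(1)}$ running time is immediate.
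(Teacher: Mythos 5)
Your proposal is correct and follows essentially the same route as the paper: check that the subpath $P[i,i']$ itself is colorful with colors inside $C$, then run a color-subset dynamic program for a $(C\setminus C_P)$-colorful $v_i$-to-$v_{i'}$ path in the graph with the internal vertices of $P[i,i']$ (and the arcs of $R$) removed, in total time $2^{qr}n^{O(1)}$. Your explicit topological-ordering argument justifying that the second path automatically avoids the rest of $P$ (and hence that the construction really yields a minimal bypass) is a point the paper leaves implicit, but it does not change the approach.
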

\begin{proof}
    Let $P[i, i'] = \{(v_j, v_{j+1}): i \le j < i'\}$ be the subpath of $P$ from $v_i$ to $v_i'$.
    If $P[i, i']$ is not $C'$-colorful for any $C' \subseteq C$, we can clearly conclude that no such a bypass exists.
    Thus, we assume that $P[i, j]$ is $C'$-colorful for some $C' \subseteq C$.
    Under this assumption, it suffices to compute a $(C \setminus C')$-colorful path from $v_i$ to $v_{i'}$ in $D' \coloneqq D[V \setminus V(P[i, i'])]$, which can be computed by the following recurrence.
    For $v \in V$ and $C'' \subseteq C \setminus C'$, $\optf(v, C'') = \True$ if and only if there is a $C''$-colorful path from $v_i$ to $v$ in $D'$.
    Then, it is easy to prove the following recurrence holds:
    \begin{align*}
        \optf(v, C'') = \begin{dcases}
            \True  & \text{if } v = v_i, C'' = \emptyset\\
            \False  & \text{if } v = v_i, C'' \neq \emptyset\\
            \bigvee_{\substack{e=(w,v) \in \delta_{D'}^-(v)\\f(e) \in C''}}\optf(w, C'' \setminus \{f(e)\}) & \text{otherwise}
        \end{dcases}.
    \end{align*}
    The recurrence can be evaluated in time
    \begin{align*}
        \sum_{0 \le i \le q} \binom{qr}{i}n^{O(1)} \subseteq 2^{qr}n^{O(1)}
    \end{align*} by dynamic programming.
\end{proof}

Using the algorithm in \Cref{lem:colorful-minimal-bypass}, we can decide if $D$ has a minimal $C$-colorful bypass $B$ such that $B \cap P = P[i, i']$ for $1 \le i < i' \le \ell$ and $C \subseteq [qr]$.
In the following, we use a Boolean predicate $\minbp$: $\minbp(i, i', C) \coloneqq \True$ if and only if $D$ has such a minimal bypass.

\begin{lemma}\label{lem:colorful-bypass-alg}
    Given an integer $i$ with $1 \le i \le \ell$, we can decide whether $D$ has a $C$-colorful bypass $B$ such that $B \cap P  \subseteq \{(v_j, v_{j + 1}) : 1 \le j < i\}$ for all $C \subseteq [qr]$ with $|C| \le q$ in total time $2^{qr}n^{O(1)}$.
\end{lemma}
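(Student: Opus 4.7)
The plan is to extend the minimal-bypass subroutine of \Cref{lem:colorful-minimal-bypass} into a full bypass via a left-to-right dynamic programming along $P$. I would define a Boolean table $\optf(i, C)$, for $1 \le i \le \ell$ and $C \subseteq [qr]$ with $|C| \le q$, to be $\True$ exactly when $D$ has a $C$-colorful bypass $B$ with $B \cap P \subseteq \{(v_j, v_{j+1}) : 1 \le j < i\}$. The base case $i = 1$ is immediate: the empty bypass witnesses $\optf(1, \emptyset) = \True$, while any other $C$ is unrealizable because $B = P \xor P'$ with $B \cap P = \emptyset$ forces $P \subseteq P'$, hence $P = P'$ in the acyclic graph, hence $B = \emptyset$. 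For $i > 1$, I would case split on whether the final arc $(v_{i-1}, v_i)$ of the prefix $P[1,i]$ lies in the bypass: if not, then $\optf(i, C)$ reduces to $\optf(i-1, C)$; if so, the unique minimal bypass in the decomposition of $B$ containing $(v_{i-1}, v_i)$ must cover exactly an interval $[v_j, v_i]$ for some $j < i$ (it cannot extend past $v_i$ since $B \cap P \subseteq P[1,i]$), and peeling it off leaves a smaller bypass realizing $\optf(j, C \setminus C')$ where $C' \subseteq C$ is its color set. This yields
\begin{align*}
    \optf(i, C) = \optf(i-1, C) \lor \bigvee_{1 \le j < i}\bigvee_{C' \subseteq C}\bigl(\optf(j, C \setminus C') \land \minbp(j, i, C')\bigr).
\end{align*}

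The main technical point will be the converse direction of this recurrence: given a bypass witnessing $\optf(j, C \setminus C')$ together with a minimal $C'$-colorful bypass on $[v_j, v_i]$, I must argue that their union is itself a valid bypass, i.e., coincides with $P \xor P'$ for some actual $P' \in \mathcal P(s, t)$. The key is vertex-disjointness of the detour portions, which I would derive from the topological ordering of the DAG: every internal vertex of the detour of a minimal bypass on $[v_a, v_b]$ has topological position strictly between those of $v_a$ and $v_b$, so detours attached to disjoint intervals of $P$ occupy disjoint position ranges, and internal vertices of one detour cannot coincide with any $P$-vertex outside its own interval. Combined with the disjointness of $C'$ and $C \setminus C'$, this shows that the union is $C$-colorful and that the associated $P'$ is a simple $s$-$t$ path.

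For the running time, I would first invoke \Cref{lem:colorful-minimal-bypass} once per pair $(j, i)$ to tabulate $\minbp(j, i, C')$ for all admissible $C'$ in total time $\ell^2 \cdot 2^{qr} n^{O(1)} = 2^{qr} n^{O(1)}$. The DP has $\ell \cdot 2^{O(qr)}$ states; computing each entry iterates over $j < i$ and $C' \subseteq C$, for an overall cost of $\ell^2 \cdot \sum_{|C| \le q} 2^{|C|} \le \ell^2 \cdot 3^{qr} = 2^{O(qr)} n^{O(1)}$, matching the claimed bound. A standard traceback from the filled table recovers an explicit witness bypass whenever one exists.
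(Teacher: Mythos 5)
Your proposal is correct and follows essentially the same route as the paper: a dynamic program over prefixes of $P$ and color subsets that peels off the rightmost minimal bypass via the $\minbp$ subroutine of \Cref{lem:colorful-minimal-bypass}, with your recurrence being a trivially equivalent reorganization of the paper's (your explicit $\optf(i-1,C)$ branch replaces the paper's disjunction over all $j \le j' \le i$). You even supply a detail the paper only asserts—that gluing a bypass supported on $P[1,j]$ with a minimal bypass on $[v_j, v_i]$ yields a genuine bypass, argued via the topological ordering—and your running-time accounting matches the $2^{O(qr)}n^{O(1)}$ bound needed for \Cref{lem:paths-in-a-ball}.
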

\begin{proof}
    For $1 \le i \le \ell$ and $C \subseteq [qr]$, we define a Boolean predicate $\bp$ as $\bp(i, C) = \True$ if and only if $D$ has a $C$-colorful bypass $B$ such that $B \cap P \subseteq P[1, i]$, where $P[1, i] = \{(v_j, v_{j + 1}) : 1 \le j < i\}$.
    Thus, our goal is to compute $\bp(i, C)$.
    This can be done by evaluating the following recurrence:
    \begin{align*}
        \bp(i, C) = \begin{dcases}
            \True  & \text{if } C = \emptyset\\
            \False  & \text{if } i = 1, C\neq \emptyset\\
            \bigvee_{\substack{1 \le j \le j' \le i\\C' \subseteq C}}(\bp(j, C \setminus C') \land \minbp(j, j', C')) & \text{otherwise}
        \end{dcases}.
    \end{align*}
    Every $C$-colorful bypath $B$ can be decomposed into arc-disjoint minimal bypasses such that the ``rightmost'' one is $C'$-colorful.
    Moreover, from a bypass $B$ with $B \cap P \subseteq P[1, j]$ and a minimal bypass $B'$ with $B' \cap P = P[j, j']$, we can construct a bypass $B'' = B \cup B'$ with $B'' \cap P \subseteq P[1, j']$.
    Thus, the above recurrence follows.
    
    We can evaluate the recurrence within the claimed running time by dynamic programming.
\end{proof}

Now, we are ready to prove \Cref{lem:paths-in-a-ball}.
Let $D = (V, A)$ be the directed acyclic graph obtained by applying the preporcessing in \Cref{obs:preprocessing}.
Let $\mathcal F$ be an $(m, qr)$-perfect hash family with $m = |A|$.
Such a family $\mathcal F$ of size $2^{O(qr)}\log m$ can be computed in time $2^{O(qr)}m\log m$ using~\Cref{thm:perfect}.
For each $f \in \mathcal F$, we can decide whether $C$ is realizable (i.e., $\bp(\ell, C) = \True$) for all $C \subseteq [qr]$ with $|C| \le q$ in total time $2^{qr}n^{O(1)}$ under $f$.
By~\Cref{lem:red:coloring,lem:red:coloring2}, $D$ has $r$ paths $P_1, \dots, P_r$ from $s$ to $t$ such that $|P_i \xor P_j| \ge d$ for $1 \le i < j \le r$ if and only if there are $r$ sets of realizable $C_1, \dots, C_r \subseteq [qr]$ such that $|C_i \xor C_j| \ge d$ for $1 \le i < j \le r$ for some $f \in \mathcal F$.
This can be checked in time $2^{qr^2}n^{O(1)}$ by simply enumerating all combinations of $r$ realizable sets.
Therefore, the total running time of the algorithm of \Cref{lem:paths-in-a-ball} is upper bounded by $2^{O(qr^2)}n^{O(1)}$.

\section{Hardness}


\subsection{W[1]-hardness parameterized by \texorpdfstring{$k$}{}}

We perform a parameterized reduction from \textsc{Unary Bin Packing}.
In \textsc{Unary Bin Packing}, given a set of $n$ positive integers $A = \{a_1, \dots, a_n\}$ and unary-encoded integers $k, B$, the goal is to determine whether there is a partition of $A$ into $k$ sets $A_0, \dots, A_{k-1}$ such that $\sum_{a \in A_i} a = M$ for all $0 \le i <k$.
This problem is known to be W[1]-hard parameterized by~$k$~\cite{JansenKMS13:JCSS:Bin}.

From an instance of \textsc{Unary Bin Packing}, we construct an unweighted directed graph $D$ as follows.
Without loss of generality, we assume that $k \ge 2$, $\sum_{a \in A} = kM$, and $a_i < M$ for all $i$.
Moreover, we can assume that $M - a_i - 2 \ge 0$ for all $i$ by doubling each $a_i$ and $M$.
The graph $D$ consists of two parts $D_1$ and $D_2$.
In the following, we frequently use a directed path of $M-2$ arcs, which is denoted by $P$.
    
We first construct a basic building block $H$.
The graph $H$ contains $k$ copies of $P$.
We then add two vertices $u$ and $v$ to $H$ and add arcs from $u$ to the sources of the copies and arcs from the sinks of the copies to $v$.
The first part $D_1$ of $D$ is constructed by taking $2k-2$ copies $H_1, \dots, H_{2k-2}$ of $H$ and identifying $v_i$ and $u_{i+1}$ for $1 \le i \le 2k-3$, where $u_i$ and $v_i$ are the (unique) source and sink of $H_i$, respectively.
We denote the $k$ copies of $P$ in $H_i$ by $P_{i, 1}, \dots, P_{i, k}$.
We let $s = u_1$ and $r = v_{2k-2}$.

The second part $D_2$ of $D$ consists of $n$ directed acyclic graphs $H'_i$, each of which corresponds to $a_i$ of the instance of \textsc{Unary Bin Packing}.
For $1 \le i \le n$, we construct a directed graph $D_i$ as follows.
Similarly to the construction of $H$, $H'_i$ contains source $p_i$ and sink $q_i$ and $2k-2$ copies of $P$ connected to $p_i$ and $q_i$ in the same way as above.
Moreover, we add three directed paths, where one of them $Q_i$ has $a_i-1$ arcs and the other two $Q'_i$ and $Q''_i$ have $M-a_i-2$ arcs.
We connect them to $H'_i$ by adding arcs from $p_i$ to the source of $Q_i$; from the sink of $Q_i$ to each source of $Q'_i$ and $Q''_i$; from each sink of $Q'_i$ and $Q''_i$ to $q_i$.
For $1 \le i \le n-1$, we identify vertex $q_i$ with $p_{i+1}$ and vertex $r$ with $p_{1}$.
We let $t = q_{n}$.
The construction of $D$ is done, which is illustrated in~\Cref{fig:thegraph}.
Observe that each path from $s$ to $t$ has the same length $\ell \coloneqq (n + 2k-2)M$, and hence all of these paths are shortest $st$-paths.
\begin{figure}
    \centering
    \includegraphics[width=\textwidth]{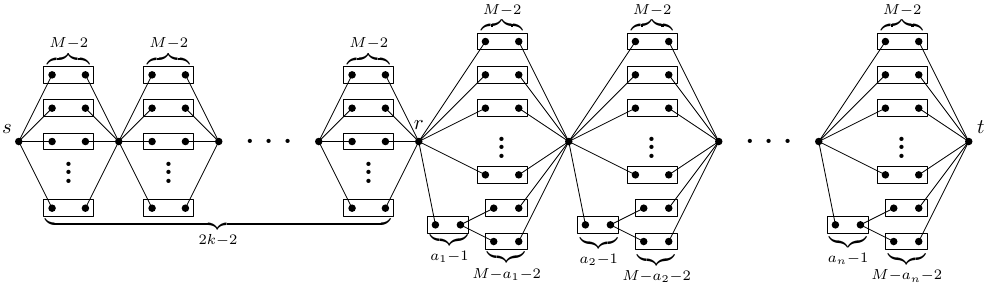}
    \caption{The figure illustrates the graph $D$. The square boxes represent directed paths of some length. All arcs are directed from left to right.}
    \label{fig:thegraph}
\end{figure}

We claim that $(A, k, M)$ is a yes-instance of \textsc{Unary Bin Packing} if and only if $D$ has $2k$ shortest paths $P_1, \dots, P_{2k}$ from $s$ to $t$ such that $|P_i \xor P_{i'}| \ge 2\ell - 2M$ for $1 \le i < i' \le 2k$.

\begin{lemma}\label{lem:w[1]-hardness:forward}
    If there is a partition $\{A_1, \dots, A_k\}$ of $A$ such that $\sum_{a \in A_i} a = B$ for all $i$, then there are $2k$ paths $P_0, \dots, P_{2k-1}$ from $s$ to $t$ in $D$ such that $|P_i \xor P_{i'}| \ge 2L - 2M$ for $1 \le i < i' \le 2k$.
\end{lemma}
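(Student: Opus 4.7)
Every $s$--$t$ path in $D$ has length $\ell = (n+2k-2)M$, since each of the $n + 2k - 2$ blocks contributes exactly $M$ arcs. Hence $|A(P) \xor A(P')| = 2\ell - 2|A(P) \cap A(P')|$ and the goal reduces to producing $2k$ shortest paths that pairwise share at most $M$ arcs. We index the $2k$ desired paths as $P_{j,b}$ with $j \in [k]$ and $b \in \{0,1\}$, and design the routings in $D_1$ and in $D_2$ separately, since their arc sets are disjoint and the two contributions to the arc-intersection of any pair of paths add.

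\textbf{Routing in $D_2$.} In each block $H'_i$, let $j$ be the unique index with $i \in A_j$: we route $P_{j,0}$ through $Q_i \to Q'_i$, $P_{j,1}$ through $Q_i \to Q''_i$, and the remaining $2k-2$ paths bijectively through the $2k-2$ copies of $P$ inside $H'_i$ (any bijection works). A direct case analysis on the possible coincidence types inside $H'_i$ (two routings using the same $P$-copy share $M$ arcs; the routings $Q_i \to Q'_i$ and $Q_i \to Q''_i$ share exactly the $a_i - 1$ arcs of $Q_i$ plus the single arc into the source of $Q_i$, hence $a_i$ arcs; any other coincidence shares $0$ arcs) shows that across all $n$ blocks the pair $(P_{j,0},P_{j,1})$ accumulates exactly $\sum_{i\in A_j} a_i = M$ shared arcs, while any cross-$j$ pair accumulates $0$.

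\textbf{Routing in $D_1$ and conclusion.} Each pair $(P_{j,0},P_{j,1})$ has already used its entire overlap budget inside $D_2$, so it must use disjoint $P$-copies in \emph{every} $H_m$; conversely each cross-$j$ pair has $D_2$-overlap $0$ and may therefore coincide on a $P$-copy in at most one $H_m$. We plan to assign each of the $k$ $P$-copies of $H_m$ to exactly two of the $2k$ paths, which turns the assignment in $H_m$ into a perfect matching $\mathcal M_m$ on the $2k$ paths; the above constraints translate to requiring that the matching $\mathcal M_0 \coloneqq \{\{P_{j,0},P_{j,1}\} : j \in [k]\}$ never appears among $\mathcal M_1, \ldots, \mathcal M_{2k-2}$ and that these $2k-2$ matchings are pairwise edge-disjoint. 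This is exactly a 1-factorization of $K_{2k}$ in which $\mathcal M_0$ is one of the $2k-1$ factors; such a factorization exists classically for every even complete graph, and by relabelling vertices we may prescribe $\mathcal M_0$ as a factor. Combining the two designs, every pair of constructed paths shares exactly $M$ arcs in total (either $M$ inside a single block of $D_1$ or $M$ spread over the blocks of $D_2$ indexed by $A_j$), yielding $|A(P_i) \xor A(P_{i'})| = 2\ell - 2M$ as required. The main obstacle is the 1-factorization step---ensuring that the forbidden pairing can be prescribed as a 1-factor of $K_{2k}$---together with the careful overlap bookkeeping inside $H'_i$.
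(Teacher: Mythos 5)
Your proposal is correct and follows essentially the same route as the paper: route the pair $(P_{j,0},P_{j,1})$ through $Q_i$, $Q'_i$, $Q''_i$ in the blocks of $D_2$ indexed by $A_j$ (accumulating exactly $M$ shared arcs) and route the other paths arc-disjointly there, while in $D_1$ you assign the $k$ copies of $P$ in each $H_m$ according to a perfect matching so that the $2k-2$ matchings together with the partner matching $\mathcal M_0$ form a 1-factorization of $K_{2k}$ — exactly the paper's edge-coloring of $K_{2k}$ minus a perfect matching with $2k-2$ colors, up to relabelling. Your overlap bookkeeping ($a_i$ arcs shared on the $Q$-routes, $M$ arcs per shared $P$-copy, and each cross pair matched in exactly one $H_m$) matches the paper's computation, so no gap.
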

\begin{proof}
    We first construct $2k$ paths $P'_0, \dots, P'_{2k-1}$ from $s$ to $r$ in $D_1$ that satisfy the following conditions: 
    \begin{itemize}\setlength{\leftskip}{0.6cm}
        \item[(P1)] for $0 \le i \le 2k-1$, $P'_i$ shares exactly $M$ arcs with $P'_{i'}$ for all $i'$ except for $i' = i + k$.
        \item[(P2)] for $0 \le i \le 2k-1$, $P'_i \cap P'_{i+k} = \emptyset$.
    \end{itemize}
    Here, the sum of indices is taken modulo $2k$.
    These paths can be constructed in the following way.
    Consider a graph $B$ with $V(B) = \{0, \dots, 2k-1\}$ that is obtained from a complete graph $K_{2k}$ of $2k$ vertices by removing a perfect matching $M_0$.
    This graph $B$ has a proper edge coloring with $2k - 2$ colors as every complete graph of $2k$ vertices has an edge coloring with $2k - 1$ colors, where each color $j$ induces a perfect matching $M_j$ of $K_{2k}$ for $0 \le j \le 2k-2$.
    From this edge coloring, we define the paths $P'_0, \dots, P'_{2k-1}$ as, by letting $M_j = \{e_1, \dots, e_k\}$, $P'_i$ and $P'_{i'}$ contain $P_{j, q}$ (as a subpath) if and only if $e_q = \{i, i'\}$, for $1 \le j \le 2k-2$.
    This construction readily implies that $|P'_i \cap P'_{i'}| = M$ if and only if $i$ is not matched to $i'$ in $M_0$.
    Moreover, $P'_i \cap P'_{i'} = \emptyset$ if $i$ is matched to $i'$ in $M_0$.
    Thus, by renaming these paths, the paths $P'_0, \dots, P_{2k-1}$ eventually satisfy (P1) and (P2).

    We next construct $2k$ paths $P''_0, \dots, P''_{2k-1}$ from $r$ to $t$ in $D_2$.
    For each $1 \le j \le n$, the subpaths of $P''_0, \dots, P''_{2k-1}$ in $H'_j$ is defined as follows.
    Suppose that $a_j$ is contained in $A_q$ for some $0 \le q \le k-1$.
    Then path $P''_{i}$ contains one of $2k-2$ copies of $P$ (as a subpath) if and only if $i \notin \{q, q + k\}$.
    Moreover, we can take these $2k - 2$ paths in such a way that they are arc-disjoint in $H'_j$.
    The path $P''_q$ (resp.\ $P''_{q + k}$) is defined as the one passing through both $Q_j$ and $Q'_j$ (resp.\ $Q_j$ and $Q''_j$) from $p_j$ to $q_j$.
    
    Now, the entire path $P_i$ is obtained by concatenating $P'_i$ and $P''_i$ for each $0 \le i \le 2k-1$.
    Let $P_i$ and $P_{i'}$ be two of these paths.
    If $i' \neq i + k$, then $P_i$ and $P_{i'}$ have $M$ common arcs in the first part $D_1$ and are arc-disjoint in the second part $D_2$, which implies that $|P_i \xor P_{i'}| = |P_i| + |P_{i'}| - 2|P_i \cap P_{i'}| = 2\ell - 2M$.
    Otherwise, $P_i$ and $P_{i'}$ are arc-disjoint in $D_1$, and have $a_j$ common arcs in $H'_j$ if $a_j \in A_i$.
    Thus, $|P_i \cap P_{i'}| = \sum_{a_j \in A_i} a_j = M$.
    Hence, $|P_i \xor P_{i'}| = 2\ell - 2M$ for $1 \le i < i' \le 2k$.
\end{proof}

To show the other direction, we need several observations on a feasible solution of \textsc{Dissimilar Shortest Paths}.
Suppose that there are $2k$ shortest paths $P_0, \dots, P_{2k-1}$ from $s$ to $t$ in $D$ such that $|P_i \xor P_{i'}| \ge 2\ell - 2M$ for $1 \le i < i' \le 2k$.
Similarly to the forward direction, each path $P_i$ can be partitioned into two subpaths $P'_i$ and $P''_i$ at $r$, where $P'_i$ and $P''_i$ belong to $D_1$ and $D_2$, respectively.

We first consider the subpaths $P'_0, \dots, P'_{2k-1}$ in the first part $D_1$.
We say that two paths $P'_i$ and $P'_{i'}$ \emph{cross} if they share an arc at some $H_{j}$, and then a pair $(\{i, i'\}, j)$ is called a \emph{crossing} (in $H_j$).
Clearly, if $P'_i$ and $P'_{i'}$ cross, we have $|P'_i \xor P'_{i''}| \ge M$.
Moreover, for $i, i'$ with $i \neq i'$, there is at most one crossing of the form $(\{i, i'\}, j)$ for some $j$ as otherwise $|P_i \xor P_{i'}| \le 2\ell-4M < 2\ell-2M$.
The following observation is easy to prove.

\begin{observation}\label{obs:w[1]-hardness:crossing}
    For $1 \le j \le 2k-2$, there are at least $k$ crossings in $H_j$.
    Moreover, there are exactly $k$ crossings in $H_j$ if and only if each $0 \le i \le 2k-1$ has exactly one index $i'\neq i$ that forms a crossing $(\{i, i'\}, j)$.
\end{observation}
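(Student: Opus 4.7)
The plan is to reduce the statement to a pigeonhole/convexity count on how the $2k$ paths $P'_0, \dots, P'_{2k-1}$ distribute among the $k$ internally vertex-disjoint copies of $P$ inside a single block $H_j$. First I would observe that $u_j$ is the unique source of $H_j$ and $v_j$ its unique sink, and every $u_j v_j$-path in $H_j$ traverses exactly one of the copies $P_{j,1}, \dots, P_{j,k}$; hence the subpath of each $P'_i$ inside $H_j$ selects a single copy $P_{j, q(i)}$, and two paths form a crossing $(\{i,i'\}, j)$ precisely when $q(i) = q(i')$.

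Next I would set $n_q \coloneqq |\{i : q(i) = q\}|$, so that $\sum_{q=1}^{k} n_q = 2k$, and rewrite the number of crossings in $H_j$ as $\sum_{q=1}^{k} \binom{n_q}{2}$. By convexity of $x \mapsto \binom{x}{2}$ on the nonnegative integers, this sum, subject to $\sum_q n_q = 2k$ over $k$ summands, is minimized when the $n_q$ are as balanced as possible, i.e., when $n_q = 2$ for every $q$, yielding the lower bound $\sum_q \binom{n_q}{2} \ge k \cdot \binom{2}{2} = k$. This gives the first part of the observation.

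For the equality characterization, the convexity estimate is strict: whenever some $n_q \ge 3$ and $n_{q'} \le 1$, the exchange $\binom{n_q}{2} + \binom{n_{q'}}{2} > \binom{n_q - 1}{2} + \binom{n_{q'} + 1}{2}$ shows that any imbalance strictly increases the sum. Hence $\sum_q \binom{n_q}{2} = k$ forces $n_q \in \{0, 1, 2\}$ for all $q$, and then the constraint $\sum_q n_q = 2k$ on $k$ summands forces $n_q = 2$ for every $q$. This is exactly the statement that each index $i$ has a unique partner $i' \neq i$ in the same copy, i.e., forming a crossing $(\{i,i'\}, j)$. I do not foresee a real obstacle here; the only point that needs care is writing the exchange argument cleanly so that both the inequality and its equality case fall out of the same estimate.
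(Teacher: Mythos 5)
Your argument is correct: inside each block $H_j$ every $s$--$r$ path uses exactly one of the $k$ copies $P_{j,1},\dots,P_{j,k}$, so crossings in $H_j$ are exactly the same-copy pairs, the count is $\sum_{q}\binom{n_q}{2}$ with $\sum_q n_q = 2k$, and the convexity/exchange bound gives the minimum $k$ together with the equality case $n_q = 2$ for all $q$, which is precisely the ``unique partner'' condition. The paper states this observation without proof (``easy to prove''), and your pigeonhole argument is essentially the intended one; the only point worth making explicit is that when some $n_q \ge 3$, the averaging constraint $\sum_q n_q = 2k$ over $k$ parts guarantees some $n_{q'} \le 1$, so the exchange step is always available.
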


This observation implies that
\begin{align*}
    \sum_{i < i'}|P'_i \cap P'_{i'}| \ge \sum_{1 \le j \le 2k-2}kM = k(2k-2)M.
\end{align*}

We next consider the subpaths $P''_0, \dots, P''_{2k-1}$ in the second part $D_2$.
Observe that
\begin{align*}
    \sum_{i < i'}|P''_i \cap P''_{i'}| \ge kM.
\end{align*}
To see this, let us consider the graph $H'_j$.
Since there are $2k - 1$ arc-disjoint paths from $p_j$ to $q_j$, at least two paths of $P''_0, \dots, P''_{2k-1}$ have a common arc in $H'_j$.
Moreover, such paths share at least $a_j$ arcs in common as $a_j < M$.
Thus, we have
\begin{align*}
     \sum_{i < i'}|P''_i \cap P''_{i'}| = \sum_{1 \le j \le n}\sum_{i < i'} |H_j \cap P''_i \cap P''_{i'}|
     \ge \sum_{1 \le j \le n} a_j
     = kM.
\end{align*}

Now, we have seen that
\begin{align*}
    \sum_{i < i'} |P_i \xor P_{i'}| &= \sum_{i < i'} (2\ell -2 |P_i \cap P_{i'}|) \\
    &= \sum_{i < i'} (2\ell - 2|P'_i \cap P'_{i'}| - 2|P''_i \cap P''_{i'}|)\\
    &\le 2k\ell(2k-1) - 2kM(2k-1)\\
    &=k(2k-1)(2\ell - 2M).
\end{align*}
As $|P_i \xor P_{i'}| \ge 2\ell-2M$ for $i, i'$ with $i \neq i'$, we have 
\begin{align*}
    \sum_{i < i'}|P_i \xor P_{i'}| \ge k(2k-1)(2\ell - 2M),
\end{align*}
yielding that $|P'_i \xor P'_{i'}| = 2\ell - 2M$ for all $0 \le i < i'\le 2k-1$.
This implies, together with \Cref{obs:w[1]-hardness:crossing}, that there are exactly $k$ crossings in $H_j$ for all $j$.
Moreover, as $a_j < M$,
\begin{itemize}\setlength{\leftskip}{0.6cm}
    \item[(P3)] there are exactly two paths $P_i$ and $P_{i'}$ with $i \neq i'$ such that $P_i$ passes through $Q_j$ and then $Q'_j$ and $P_{i'}$ passes through $Q_j$ and then $Q''_j$; Other $2k-2$ paths are arc-disjoint from any other paths in $H'_j$.
\end{itemize}

Now, we are ready to prove the following lemma.
\begin{lemma}\label{lem:w[1]-hardness:well-formed}
    The subpaths $P'_0, \dots, P'_{2k-1}$ satisfy conditions (P1) and (P2) in \Cref{lem:w[1]-hardness:forward} (by appropriately renaming them).
\end{lemma}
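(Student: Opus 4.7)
The plan is to exploit the equalities that were already forced in the argument just preceding the statement, combined with a counting argument on crossings.

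First, I would show that for every pair $i \neq i'$, the intersection $|P'_i \cap P'_{i'}|$ is either $0$ or $M$. The reason is structural: inside any copy $H_j$, two subpaths share arcs only if they traverse the same copy of $P$, in which case they share the $M-2$ internal arcs together with the two connecting arcs at $u_j$ and $v_j$, giving exactly $M$ shared arcs; distinct copies of $P$ in $H_j$ are arc-disjoint. Combined with the earlier observation that each pair participates in at most one crossing across all $H_1, \dots, H_{2k-2}$, this pins $|P'_i \cap P'_{i'}| \in \{0, M\}$.

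Next, since the two displayed inequalities just before the lemma sandwich $\sum_{i<i'} |P_i \xor P_{i'}|$ between the same value $k(2k-1)(2\ell - 2M)$, equality must hold throughout, which forces $|P_i \cap P_{i'}| = M$ for every pair, and hence $|P'_i \cap P'_{i'}| + |P''_i \cap P''_{i'}| = M$. Together with the previous dichotomy, this means crossing pairs contribute $(M, 0)$ to the split and non-crossing pairs contribute $(0, M)$.

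The main step is to show that the non-crossing relation is a perfect matching on $\{0, \dots, 2k-1\}$. For a fixed $i$, the second part of \Cref{obs:w[1]-hardness:crossing} gives that $i$ has exactly one crossing partner in each $H_j$, yielding $2k-2$ crossing incidences in total; and since each pair crosses in at most one $H_j$, these partners are all distinct. So among the $2k-1$ possible values $i' \neq i$, exactly one, call it $\sigma(i)$, is not a crossing partner of $i$; by symmetry $\sigma$ is an involution without fixed points, i.e., a perfect matching on the $2k$ indices. For every $i' \neq \sigma(i)$ we then have $|P'_i \cap P'_{i'}| = M$, while $|P'_i \cap P'_{\sigma(i)}| = 0$.

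The final step is purely cosmetic: I would relabel the paths by any bijection that sends $\sigma(i)$ to $i + k \pmod{2k}$, which is possible precisely because $\sigma$ is a fixed-point-free perfect matching. Under this renaming, conditions (P1) and (P2) hold by construction. I expect the only subtle point to be justifying that the two connecting arcs at $u_j$ and $v_j$ really are shared whenever a pair uses the same copy of $P$ (so that a crossing contributes exactly $M$, not $M-2$), and that two paths using different copies share nothing inside $H_j$ — both follow directly from the construction of $H$ in \Cref{fig:thegraph}.
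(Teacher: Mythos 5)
Your proof is correct and takes essentially the same route as the paper: both rely on the forced equalities (exactly $k$ crossings in each $H_j$, hence by \Cref{obs:w[1]-hardness:crossing} exactly one crossing partner per gadget) together with the at-most-one-crossing-per-pair fact, to conclude that every path has exactly $2k-2$ distinct crossing partners, so the non-crossing relation is a fixed-point-free perfect matching that can be renamed into the pairs $\{i, i+k\}$, giving (P1) and (P2). The only difference is presentational: the paper argues by contradiction with a pigeonhole step (a path crossing all $2k-1$ others would need two crossings in one $H_j$), whereas you argue directly, which is fine.
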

\begin{proof}
    Suppose for contradiction that $P'_0, \dots, P'_{2k-1}$ do not satisfy (P1).
    Since there are exactly $k$ crossings in $H_j$ for all $1 \le j \le 2k-2$, at least one path, say $P'_0$, shares an arc with $P_{i}$ for each $1 \le i \le 2k-1$.
    This implies that $|P'_0 \cap P'_{i}| = M$ for $1 \le i \le 2k-1$.
    However, as $|P'_0| = M(2k-2)$, there must be two paths $P'_i$ and $P'_{i'}$ that have crossing with $P'_0$ in $H_j$ for some $j$.
    This contradicts the fact that, by~\Cref{obs:w[1]-hardness:crossing}, there is exactly one index $i$ of the form $(\{0, i\}, j)$.
    Thus, $P'_0, \dots, P'_{2k-1}$ satisfy conditions (P1).
    By appropriately renaming them, condition (P2) also holds.
\end{proof}

Now, we assume that $P'_0, \dots, P'_{2k-1}$ satisfy (P1) and (P2).
We construct $A_i$ for $0 \le i \le k-1$ as follows.
The assumptions (P1) and (P2) imply that for $0 \le i \le 2k-1$, $P''_i$ is arc-disjoint from $P''_{i'}$ for all $i' \neq i + k$.
For each $1 \le j \le n$, there are $2k - 1$ arc-disjoint paths from $p_j$ to $q_j$ in $H'_j$.
This implies that at least two paths share an arc in $H'_j$, which must be $P''_i$ and $P''_{i + k}$ for some $0 \le i \le k-1$.
As observed in (P3), $P''_i$ passes through $Q_j$ and then $Q'_j$; $P''_{i + k}$ passes through $Q_j$ and then $Q''_j$.
We then add $a_j$ to $A_i$.
By (P3), each $a_j$ is contained in some $A_i$.
Moreover, as $|P_i \cap P_{i+k}| = M$,
\begin{align*}
    \sum_{a \in A_i} a = |P_i \cap P_{i+k}| = M,
\end{align*}
which implies that $A_0, \dots, A_{k-1}$ is a feasible solution of \textsc{Unary Bin Packing}.

\begin{lemma}\label{lem:w[1]-hardness:backward}
     If there are $2k$ paths $P_0, \dots, P_{2k-1}$ from $s$ to $t$ in $D$ such that $|P_i \xor P_{i'}| \ge 2L - 2M$ for $0 \le i < i' \le 2k-1$, then there is a partition $\{A_0, \dots, A_{k-1}\}$ of $A$ such that $\sum_{a \in A_i} a = M$ for all $i$.
\end{lemma}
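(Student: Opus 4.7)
The plan is to assemble the facts established in the analysis preceding the lemma statement, which has already done most of the heavy lifting. First, I would invoke the counting argument: the derivation that $\sum_{i<i'}|P_i\xor P_{i'}|\le k(2k-1)(2\ell-2M)$, combined with the hypothesis $|P_i\xor P_{i'}|\ge 2\ell-2M$ for every distinct pair, forces $|P_i\xor P_{i'}|=2\ell-2M$ for every pair, and moreover forces every intermediate inequality in that derivation to be tight. In particular, \Cref{obs:w[1]-hardness:crossing} gives exactly $k$ crossings in each $H_j$, and property (P3) holds: in each $H'_j$ there is a unique pair of paths that together use $Q_j$ and then split into $Q'_j$ and $Q''_j$, sharing exactly $a_j$ arcs, while the remaining $2k-2$ paths are pairwise arc-disjoint inside $H'_j$.

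Next, I would apply \Cref{lem:w[1]-hardness:well-formed} to rename the paths so that the $D_1$-portions $P'_0,\dots,P'_{2k-1}$ satisfy (P1) and (P2). In particular (P2) gives $P'_i\cap P'_{i+k}=\emptyset$, and since the total intersection $|P_i\cap P_{i+k}|=\ell-(2\ell-2M)/2=M$, all $M$ shared arcs between $P_i$ and $P_{i+k}$ must lie in $D_2$, i.e.\ $|P''_i\cap P''_{i+k}|=M$. For indices $i,i'$ with $i'\notin\{i,i+k\}$, condition (P1) gives $|P'_i\cap P'_{i'}|=M$, and so the same computation forces $P''_i$ and $P''_{i'}$ to be arc-disjoint.

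I then define the partition: for each $j\in\{1,\dots,n\}$, by (P3) there is a unique unordered pair of indices whose $D_2$-portions share arcs in $H'_j$, and by the preceding paragraph that pair must be of the form $\{i,i+k\}$ for a unique $i\in\{0,\dots,k-1\}$; place $a_j$ into $A_i$. This yields a partition $\{A_0,\dots,A_{k-1}\}$ of $\{a_1,\dots,a_n\}$. Finally, summing the contributions in the $H'_j$ blocks,
\begin{align*}
    \sum_{a_j\in A_i} a_j \;=\; \sum_{j:\,A_i\ni a_j}|P''_i\cap P''_{i+k}\cap A(H'_j)| \;=\; |P''_i\cap P''_{i+k}| \;=\; M,
\end{align*}
so $\{A_0,\dots,A_{k-1}\}$ is a feasible solution to \textsc{Unary Bin Packing}.

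The main obstacle is not the combinatorics itself, since the pre-lemma discussion has already pinned down the tight structural properties (P1)–(P3); rather it is the bookkeeping of showing that the ``uniqueness'' implicit in (P3) forces the sharing pair at each $H'_j$ to be exactly one of the opposite pairs $\{i,i+k\}$ (not an arbitrary pair), so that the assignment $a_j\mapsto A_i$ is well-defined. This is handled by combining (P1), (P2), and the tightness of $|P_i\cap P_{i+k}|=M$ to exclude any other candidate pair.
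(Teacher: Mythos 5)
Your proposal is correct and follows essentially the same route as the paper: use the tightness of the counting bound to get $|P_i \xor P_{i'}| = 2\ell - 2M$ for all pairs together with (P3), apply \Cref{lem:w[1]-hardness:well-formed} to obtain (P1)/(P2), conclude that the only pairs sharing arcs in $D_2$ are the opposite pairs $\{i, i+k\}$ with $|P''_i \cap P''_{i+k}| = M$, and read off the sets $A_i$ from which pair shares arcs in each $H'_j$. Your explicit justification that non-opposite pairs are arc-disjoint in $D_2$ (combining (P1) with $|P_i \cap P_{i'}| = M$) is exactly the step the paper asserts from (P1) and (P2), so there is no substantive difference.
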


Observe that the graph $D$ constructed from an instance of \textsc{Unary Bin Packing} has a linear structure.
In fact, we can show that the pathwidth of the underlying undirected graph is at most $4$.

\begin{corollary}[$\star$]
    \textsc{Dissimilar Shortest Paths} is W[1]-hard parameterized by the number of paths $k$, even on unweighted undirected graph with pathwidth at most $4$.
\end{corollary}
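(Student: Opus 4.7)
The plan is to exhibit an explicit path decomposition of the underlying undirected graph $G$ of $D$ with width at most~$4$, by following the obvious left-to-right chain of blocks $H_1, H_2, \dots, H_{2k-2}$ (identified through $v_i = u_{i+1}$) and $H'_1, H'_2, \dots, H'_n$ (identified through $q_j = p_{j+1}$), joined at $r = p_1$. Since any two consecutive blocks share exactly one vertex, it suffices to build a path decomposition of each individual block whose first and last bags contain the block's source and sink respectively, and then concatenate these decompositions along the shared boundary vertex.

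For a block $H_i$, I would start and end at the bag $\{u_i, v_i\}$ and process the $k$ internal copies of $P$ one by one. If a copy has interior vertices $x_0, x_1, \dots, x_{M-2}$, the bags $\{u_i,v_i,x_0\}$, $\{u_i,v_i,x_0,x_1\}$, $\{u_i,v_i,x_1,x_2\}$, $\dots$, $\{u_i,v_i,x_{M-2}\}$ cover all edges of this copy together with $u_i x_0$ and $x_{M-2} v_i$, and have size at most~$4$. The same recipe handles the $2k-2$ parallel copies of $P$ inside each block $H'_j$ over the base bag $\{p_j, q_j\}$.

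The only non-trivial part is the diamond of $Q_j$, $Q'_j$, $Q''_j$ inside $H'_j$, where the sink $s_Q$ of $Q_j$ branches into the sources of both $Q'_j$ and $Q''_j$. The plan is: walk along $Q_j$ from $p_j$ using bags $\{p_j, q_j, \text{curr}, \text{prev}\}$ of size~$4$; upon reaching $s_Q$, forget $p_j$ (all edges incident to $p_j$ have already been covered by the earlier stages, in particular by the $2k-2$ copies processed over the base $\{p_j, q_j\}$) to obtain $\{q_j, s_Q\}$; traverse $Q'_j$ with bags $\{q_j, s_Q, \text{curr}, \text{prev}\}$ of size~$4$, keeping $s_Q$ alive because it still has the unused edge to the source of $Q''_j$; then introduce $s_{Q''}$ in $\{q_j, s_Q, s_{Q''}\}$ to cover $s_Q s_{Q''}$, forget $s_Q$, and traverse $Q''_j$ with bags $\{q_j, \text{curr}, \text{prev}\}$ of size~$3$; finally arrive at $\{q_j\}$, ready for concatenation with $H'_{j+1}$.

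The main obstacle is this diamond: if we naively keep both $p_j$ and $s_Q$ in the bag while processing $Q'_j$ we end up with bags of size~$5$, so the trick is to forget $p_j$ at the precise moment when the walk reaches $s_Q$, and symmetrically to forget $s_Q$ as soon as $s_{Q''}$ has been introduced. With this scheduling every bag has size at most~$4$, so the concatenated path decomposition of $G$ has width at most~$4$, proving the corollary.
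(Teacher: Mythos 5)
Your construction is correct and follows essentially the same route as the paper: build a path decomposition block by block along the chain $H_1,\dots,H_{2k-2},H'_1,\dots,H'_n$, keeping each block's boundary vertices alive, and concatenate at the shared cut vertices. The only difference is cosmetic: the paper simply places $p_j$, $q_j$, and the degree-$3$ branch vertex of $H'_j$ in every bag of that block (giving width $4$ there), whereas your finer introduce/forget scheduling around the $Q_j,Q'_j,Q''_j$ diamond keeps bags of size at most $4$; both yield the claimed pathwidth bound of $4$.
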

\begin{proof}
    In the following, we refer to $D$ as its underlying undirected graph.
    We construct a path decomposition of $D$ by concatenating path decompositions of $H_i$ for $1 \le i \le 2k-2$ and $H'_i$ for $1 \le i \le n$.
    For $H_i$, we construct a path decomposition as follows.
    If we remove $u_i$ and $v_i$ from $H_i$, the remaining graph is a disjoint union of $k$ paths, and hence it has a path decomposition of width at most $1$.
    By adding $u_i$ and $v_i$ to all bags, we obtain a path decomposition $(X^i_1, \dots, X^i_h)$ of $H_i$ of width at most~$3$ such that $u_i \in X^i_1$ and $v_i \in X^i_h$.
    As $v_i = u_{i + 1}$, the sequence of bags
    \begin{align*}
        (X^1_1, \dots ,X^1_h, X^2_1, \dots, X^2_h, \dots, X^{2k-2}_1, \dots, X^{2k-2}_h)
    \end{align*}
    is a path decomposition of the first part $D_1$.
    For $H'_i$, we can construct a path decomposition in a similar way: by removing $p_i$, $q_i$, and the unique vertex of degree $3$, we have a disjoint union of paths, and then by adding these three vertices into all bags, we have a path decomposition of $H'_i$ of width at most~$4$.
\end{proof}

\subsection{NP-hardness for constant \texorpdfstring{$d$}{}}
We show that \textsc{Dissimilar Shortest Paths} is NP-hard even when $D$ is unweighted and $d$ is constant.
To this end, we give a polynomial-time reduction from \textsc{Maximum Independent Set} on cubic graphs, which is known to be NP-complete~\cite{Garey96:TCS:Some}.
The construction is analogous to the one used in \cite{BachtlerBK22:arXiv:Almost}, where they showed that the problem of finding ``almost arc-disjoint'' paths is NP-hard even on directed acyclic graphs.
Our proof carefully adjusts their construction so that all paths in the directed graph have length exactly $\ell$ for some constant $\ell$.
To be self-contained, we give a complete reduction.

Let $H = (V, E)$ be a cubic graph.
For each $e \in E$, we construct an edge gadget $\gad(e)$ as in \Cref{fig:edge-gadget}~(Left).
\begin{figure}[t]
    \centering
    \includegraphics{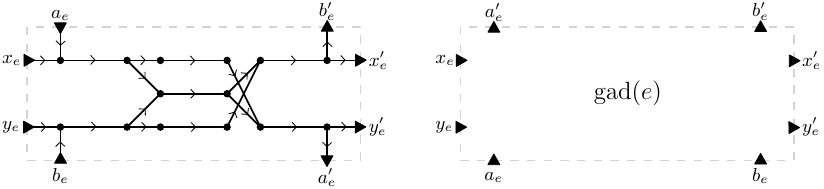}
    \caption{The left figure depicts the edge gadget $\gad(e)$. There are eight special vertices, called input and output gates, which are depicted as triangles.
    For aesthetic purposes, we rather use the right figure, which is obtained by rearranging auxiliary gates.}
    \label{fig:edge-gadget}
\end{figure}
The edge gadget has eight special vertices.
The vertices $x_e$ and $y_e$ are called \emph{vertex-input gates} and the vertices $x'_e$ and $y'_e$ are called \emph{vertex-output gates}. 
The vertices $a_e$ and $b_e$ are called \emph{auxiliary-input gates} and the vertices $a'_e$ and $b'_e$ are called \emph{auxiliary-output gates}.
The vertex-input and auxiliary-input gates are also called \emph{input gates} and the vertex-output and auxiliary-output gates are also called \emph{output gates}.
It is easy to observe that each path from an input gate to an output gate has length~$7$.
The following observation is a key property of $\gad(e)$.

\begin{observation}\label{obs:edge-gadget}
    Suppose that there are two arc-disjoint paths from auxiliary-input gates $a_e, b_e$ to auxiliary-output gates $a'_e, b'_e$.
    Then, one of the two paths, say $P_{a}$, is from $a_e$ to $a'_e$ and the other path, say $P_b$, is from $b_e$ to $b'_e$.
    Moreover, suppose that there is a path $P$ from a vertex-input gate to a vertex-output gate with $|P \cap P_a| \le 1$ and $|P \cap P_b| \le 1$.
    Then, $P$ is from $x_e$ (resp.\ from $y_e$) to $x'_e$ (resp.\ to $y'_e$).
    In particular, these three paths are uniquely determined as in~\Cref{fig:edge-gadget-path}.
    \begin{figure}
        \centering
        \includegraphics{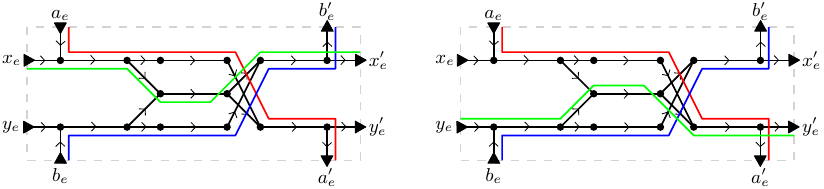}
        \caption{The three paths $P_a$, $P_v$, and $P$ are illustrated as red, blue, and green lines.}
        \label{fig:edge-gadget-path}
    \end{figure}
\end{observation}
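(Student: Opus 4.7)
The plan is to prove the observation by a structural case analysis based on the geometry of $\gad(e)$. Because every input-to-output path in the gadget has the same length $7$, the internal arc set is tightly constrained, so there is only a small, explicit collection of simple input-to-output paths. I would begin by cataloguing these paths: for each pair (input gate, output gate), list the finite set of simple paths of length $7$ connecting them. This yields, for each ordered pair in $\{x_e,y_e,a_e,b_e\}\times\{x'_e,y'_e,a'_e,b'_e\}$, either an empty set (no path exists) or a short list; by the construction of $\gad(e)$ (and the fact that all paths have length $7$), the only ``straight-through'' routes are the ones drawn in \Cref{fig:edge-gadget-path}, and every ``crossed'' route from an input gate to a non-matching output gate (such as $a_e\to b'_e$ or $x_e\to y'_e$) must traverse a specific central bottleneck arc.

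For the first claim, I would argue by contradiction: suppose the two arc-disjoint auxiliary paths are not the pairing $(a_e\to a'_e,\, b_e\to b'_e)$. The only alternative consistent with using exactly $\{a_e,b_e\}$ as endpoints is the crossed pairing $(a_e\to b'_e,\, b_e\to a'_e)$. A direct inspection of the catalogue shows that any $a_eb'_e$-path and any $b_ea'_e$-path must both use the central crossing arc (or, more generally, at least one common arc on the unique ``crossover'' subpath). This contradicts arc-disjointness, so the pairing must be the straight one and $P_a$, $P_b$ are forced to be the unique straight $a_ea'_e$- and $b_eb'_e$-paths respectively (the catalogue will show these are unique once the crossover arcs have been excluded).

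For the second claim, I assume $P_a$ and $P_b$ are these two specific paths, and I fix the arcs they use. Now I consider an arbitrary vertex-input-to-vertex-output path $P$ with $|P\cap P_a|\le 1$ and $|P\cap P_b|\le 1$. I split into cases on the endpoints of $P$: if $P$ goes from $x_e$ to $y'_e$ or from $y_e$ to $x'_e$, then $P$ must cross the region occupied by $P_a\cup P_b$ and, by inspection, accumulates at least two arcs in common with one of them, violating the hypothesis. This leaves only $x_e\to x'_e$ and $y_e\to y'_e$. For each of these, the catalogue plus the budget of one shared arc with each of $P_a$ and $P_b$ singles out a unique route, namely the one depicted in \Cref{fig:edge-gadget-path}; uniqueness of all three paths then follows immediately.

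The main obstacle is the bookkeeping: without the figure in hand the argument is nothing more than a careful enumeration, and one must be sure to cover every combination of endpoint choice for $P$ and every alternative routing for $P_a, P_b$ to rule out the crossed pairing. I would organize the argument around a short table listing, for each ordered endpoint pair, the set of length-$7$ paths and the set of arcs each uses; once this table is written down, each claim in the observation reduces to checking that certain intersections are forced to be nonempty (to derive contradictions) or to be single arcs (to identify the unique surviving route).
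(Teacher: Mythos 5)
The paper never writes down a proof of this observation: the gadget $\gad(e)$ is defined only by its picture, and the observation is asserted as a property one reads off that explicit, constant-size digraph. So the only possible justification is exactly the finite inspection you propose --- catalogue the simple input-to-output paths (all of length $7$), and check the forced arc intersections. In that sense your route coincides with the paper's implicit one.

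The problem is that, as written, your proposal contains none of the actual verification. The two facts that carry all the weight --- (a) every $a_eb'_e$-path and every $b_ea'_e$-path must share an arc, so the crossed auxiliary pairing is impossible and the straight paths $P_a, P_b$ are unique, and (b) every $x_ey'_e$- or $y_ex'_e$-path meets $P_a$ or $P_b$ in at least two arcs, while the straight vertex routes survive and are unique under the budget $|P \cap P_a| \le 1$, $|P \cap P_b| \le 1$ --- are precisely the content of the observation, and you assert them rather than derive them. In particular, the ``specific central bottleneck arc'' is a hypothesis about the gadget's internal wiring, not something you obtain from its definition; if the gadget were wired differently the same sentences could be written and would be false. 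Until the table of length-$7$ paths and their arc sets is actually produced from the gadget and the intersection checks are carried out, the argument is a plan whose correctness is exactly the statement being proved, so you should either reproduce the gadget's arc list and do the enumeration explicitly, or state plainly that the observation is verified by direct inspection of the figure, as the paper does.
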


The uniqueness of these tree paths also yields the following observation.
\begin{observation}\label{obs:edge-gadget-2}
    Suppose that there are two arc-disjoint paths $P_a$ from $a_e$ to $a'_e$ and $P_b$ from $b_e$ to $b'_e$.
    Then, there are no two arc-disjoint paths $P, P'$ from vertex-input gates $x_e, y_e$ to $x'_e, y'_e$ such that
    $|P \cap P_a| \le 1$, $|P \cap P_b| \le 1$, $|P' \cap P_a| \le $, and $|P' \cap P_b| \le 1$.
\end{observation}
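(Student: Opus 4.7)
The plan is to reduce Observation \ref{obs:edge-gadget-2} directly to Observation \ref{obs:edge-gadget} by applying the latter to each vertex path separately. Assume for contradiction that two arc-disjoint paths $P, P'$ from the vertex-input gates to the vertex-output gates exist, with $|P \cap P_a|, |P \cap P_b|, |P' \cap P_a|, |P' \cap P_b| \le 1$.

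First, I would invoke Observation \ref{obs:edge-gadget} on the triple $(P_a, P_b, P)$. Since $|P \cap P_a| \le 1$ and $|P \cap P_b| \le 1$, that observation pins $P$ down to one of exactly two possibilities: the uniquely determined $x_e$-to-$x'_e$ path (call it $P_x$) or the uniquely determined $y_e$-to-$y'_e$ path (call it $P_y$), both as depicted in the green route of \Cref{fig:edge-gadget-path}. The same analysis applied to $(P_a, P_b, P')$ forces $P' \in \{P_x, P_y\}$ as well.

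Next I would split into two cases. If $P = P'$ (both equal $P_x$, or both equal $P_y$), then they are certainly not arc-disjoint (indeed, not even distinct), giving the desired contradiction. Otherwise, one of $\{P, P'\}$ equals $P_x$ and the other equals $P_y$. In this case I would pick out an explicit arc shared by $P_x$ and $P_y$ in $\gad(e)$; once such an arc is exhibited, arc-disjointness of $P$ and $P'$ fails.

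The main obstacle is really just this last step, and it is purely a matter of gadget inspection: I need to verify from the construction of $\gad(e)$ in \Cref{fig:edge-gadget} that, once $P_a$ and $P_b$ are routed from $\{a_e,b_e\}$ to $\{a'_e,b'_e\}$ in the unique way given by Observation \ref{obs:edge-gadget}, the two remaining routings $P_x$ and $P_y$ are forced through a common arc of the gadget. Since the gadget was designed precisely so that the four ``compatible'' paths cannot be arc-disjoint, I expect this verification to be immediate from the picture, with no further combinatorial work required.
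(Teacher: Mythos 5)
Your proposal is correct and follows essentially the same route as the paper, which gives no separate proof and simply notes that the uniqueness in \Cref{obs:edge-gadget} yields \Cref{obs:edge-gadget-2}: once $P_a,P_b$ force any compatible vertex path to be the unique green routing $P_{e,x}$ or $P_{e,y}$ of \Cref{fig:edge-gadget-path}, the claim reduces to the fact (read off the gadget) that these two routings share an arc. Your explicit case split and the final figure-inspection step are exactly the argument the paper leaves implicit.
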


For each $e \in E$, we denote by $P_{e,x}$ (resp.\ by $P_{e, y}$) the green path from $x_e$ to $x'_e$ (resp.\ from $y_e$ to $y'_e$), by $P_{e, a}$ the red path, and by $P_{e, b}$ the blue path in \Cref{fig:edge-gadget-path}.

From the graph $H$, we construct a directed graph $D$ as follows.
Let $V = \{v_1, \dots, v_n\}$.
By Vizing's theorem~\cite{Vizing64}, $H$ has a proper edge coloring with at most four colors, and such a coloring is computable in polynomial time.
We denote by $E_c$ the set of edges that are colored in color~$c$ for $1 \le c \le 4$.
The graph $D$ contains three vertices $s, t, v_V$ and an arc $(s, v_V)$.
Moreover, $D$ contains $n$ vertices corresponding to $V = \{v_1, \dots, v_n\}$ that are adjacent to $v_V$ with arcs $(v_V, v_i)$ for all $1 \le i \le n$.
To construct the remaining part, we use four \emph{layers} $L_1, \dots, L_4$, each of which corresponds to color~$c$.
For each vertex $v_i \in V$, we define a $v_it$-path such that for $1 \le c \le 4$,
\begin{itemize}
    \item if there is an edge $e \in E_c$ incident to $v_i$, the path enters $\gad(e)$ at a vertex-input gate $x_e$ (or $y_e$), passes through $P_{e, x}$ (or $P_{e, y}$), and then leaves at the corresponding vertex-output gate $x'_e$ (or $y'_e$) in $L_c$;
    \item otherwise, the path contains a subpath of length~$7$ in $L_c$.
\end{itemize}
The construction of the path is well-defined since $v_i$ has no two incident edges in $E_c$.
\Cref{fig:partial} illustrates a partial construction of the paths for all $v_i \in V$.
\begin{figure}
    \centering
    \includegraphics{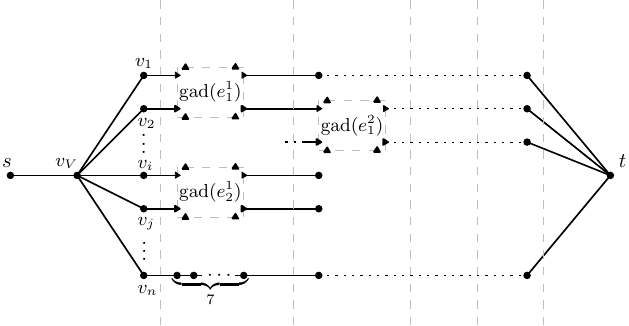}
    \caption{A partial construction of the graph $D$. There are four layers (separated by dotted lines).
        If $E_c$ contains an edge $e = \{v_i, v_j\}$, the layer $L_c$ contains the edge gadget $\gad(e)$ and the paths corresponding to $v_i$ and $v_j$ enter $\gad(e)$ at its vertex-input gates and leave at its vertex-output gates.
        All arcs are directed from left to right.}
    \label{fig:partial}
\end{figure}
By concatenating arcs $(s, v_V)$ and $(v_V, v_i)$ to this path, we have an $st$-path, which is denoted by $Q_i$.
We refer to the path $Q_i$ as a \emph{primal vertex path}.
Let us note that for two non-adjacent vertices $v_i$ and $v_j$ in $H$, the corresponding primal vertex paths $Q_i$ and $Q_j$ share only one arc $(s, v_V)$ in common.
Furthermore, all the $st$-paths have the same length $\ell = 35$.
Each path passing through $v_V$ from $s$ to $t$ is called a \emph{vertex path}.

To complete the construction, we also add a vertex $v_E$ and an arc $(s, v_E)$.
For each $e \in E$, we add two paths from $v_E$ to the auxiliary-input gates of $\gad(e)$ and two paths from the auxiliary-output gates to $t$.
The length of these paths are determined in such a way that all the $st$-paths in $D$ have the same length $\ell$.
The entire graph $D$ is illustrated in \Cref{fig:entire}.
\begin{figure}
    \centering
    \includegraphics{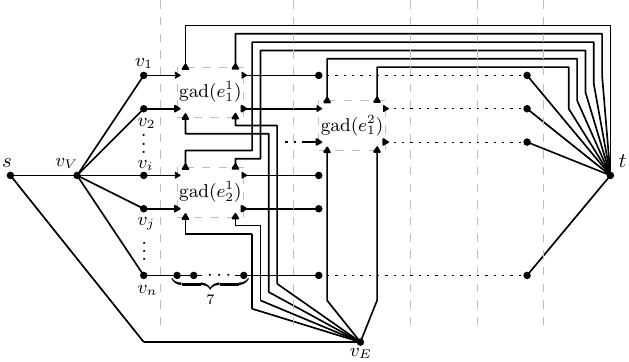}
    \caption{The figure depicts the complete picture of the graph $D$.}
    \label{fig:entire}
\end{figure}
We denote $Q_{e, a} = (s, v_E) + P_{e, a}$ and $Q_{e, b} = (s, v_E) + P_{e, b}$.
We refer to these paths $Q_{e, a}, Q_{e, b}$ as \emph{primal auxiliary paths}.
Similarly, each path passing through $v_E$ from $s$ to $t$ is called an \emph{auxiliary path}.

Now, we show that $H$ has an independent set of size at least $k$ if and only if $D$ has $k + 2|E|$ paths $P_1, \dots, P_{k + 2|E|}$ from $s$ to $t$ such that $|P_i \xor P_j| \ge 2\ell - 2$ for $1 \le i < j \le k + 2|E|$.

\begin{lemma}\label{lem:np-hardness:forward}
    Suppose that $H$ has an independent set of size at least $k$.
    Then, $D$ has $k + 2|E|$ paths $P_1, \dots, P_{k + 2|E|}$ from $s$ to $t$ such that $|P_i \xor P_j| \ge 2\ell - 2$ for $1 \le i < j \le k + 2|E|$.
\end{lemma}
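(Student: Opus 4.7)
The plan is to exhibit the desired family of $k+2|E|$ paths explicitly, using the independent set to select primal vertex paths, and to verify pairwise near-disjointness via a short case analysis.

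Since every $st$-path in $D$ has length exactly $\ell$, for any two such paths $P_i, P_j$ we have
\[
    |P_i \xor P_j| = |P_i| + |P_j| - 2|P_i \cap P_j| = 2\ell - 2|P_i \cap P_j|,
\]
so the required inequality $|P_i \xor P_j| \ge 2\ell - 2$ is equivalent to $|P_i \cap P_j| \le 1$. Thus my goal is simply to produce $k+2|E|$ pairwise almost-disjoint (sharing at most one arc) $st$-paths.

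Let $I = \{v_{i_1}, \dots, v_{i_k}\}$ be an independent set in $H$. The candidate family consists of the $k$ primal vertex paths $Q_{i_1}, \dots, Q_{i_k}$ and the $2|E|$ primal auxiliary paths $\{Q_{e,a}, Q_{e,b} : e \in E\}$. I would then verify pairwise intersection is at most one arc in four cases:
(i) Two primal vertex paths $Q_{i_r}, Q_{i_{r'}}$ with $v_{i_r}, v_{i_{r'}} \in I$ share the arc $(s, v_V)$; since $v_{i_r}$ and $v_{i_{r'}}$ are non-adjacent in $H$, no edge gadget is entered by both, and the internal ``length-$7$ bypass'' subpaths inside each layer are assigned uniquely to each vertex, so they share no arc beyond $(s,v_V)$.
(ii) Two primal auxiliary paths associated to different edges share only $(s, v_E)$ because they route into disjoint edge gadgets via disjoint input/output feeder paths.
(iii) The pair $Q_{e,a}, Q_{e,b}$ for the same edge $e$ shares $(s, v_E)$; within $\gad(e)$ the sub-paths $P_{e,a}$ and $P_{e,b}$ are arc-disjoint by construction, and the feeder paths from $v_E$ terminate at distinct auxiliary-input gates $a_e, b_e$ (respectively distinct auxiliary-output gates), so again only one arc is shared.
(iv) A primal vertex path $Q_{i_r}$ versus a primal auxiliary path $Q_{e,*}$: the arcs $(s,v_V)$ and $(s,v_E)$ differ, feeder paths to $v_V$ and $v_E$ are disjoint, and within any gadget $\gad(e')$ the only arcs they could potentially share lie in that gadget. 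If $v_{i_r}$ is not incident to $e$, then $Q_{i_r}$ does not enter $\gad(e)$ at all, giving intersection $0$; if $v_{i_r}$ is incident to $e$, then inside $\gad(e)$ the vertex sub-path is $P_{e,x}$ or $P_{e,y}$, and Observation~\ref{obs:edge-gadget} gives $|P_{e,x}\cap P_{e,a}|,|P_{e,x}\cap P_{e,b}|\le 1$ (and similarly for $P_{e,y}$), so the intersection is at most one arc.

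The main delicate point is case (iv), where I rely squarely on the structural guarantee in Observation~\ref{obs:edge-gadget} that the ``green'' vertex-traversal sub-path meets each of the ``red'' and ``blue'' auxiliary sub-paths in at most one arc; apart from that, every other intersection is immediately controlled by the tree-like routing outside the edge gadgets and by the fact that $I$ being independent forbids two selected vertex paths from meeting inside any single gadget. Combining the four cases yields $|P_i \cap P_j| \le 1$ for every pair in the chosen family, and hence $|P_i \xor P_j| \ge 2\ell - 2$, completing the proof.
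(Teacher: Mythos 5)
Your proposal is correct and follows essentially the same route as the paper: reduce the dissimilarity bound to "at most one shared arc" using the common length $\ell$, take the primal vertex paths of the independent set together with all $2|E|$ primal auxiliary paths, and verify each pair via the gadget property of \Cref{obs:edge-gadget} (vertex path vs.\ auxiliary path) and the near-disjoint routing outside the gadgets. Your case split is in fact a slightly more explicit version of the paper's argument (separating same-edge and different-edge auxiliary pairs), but the content is identical.
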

\begin{proof}
    Let $X \subseteq V$ be an independent set of size $k$.
    Then, we construct a set $\mathcal P$ of $k + 2|E|$ paths as $\mathcal P = \{Q_i : v_i \in H\} \cup \{Q_{e, a}, Q_{e, b} : e \in E\}$.
    For distinct primal auxiliary paths $P, P' \in \mathcal P$, we have $P \cap P' = \{(s, v_E)\}$, and hence $|P \xor P'| \ge 2\ell - 2|P \cap P'| = 2\ell - 2$.
    For a primal vertex path $P \in \mathcal P_X$ and a primal auxiliary path $P' \in \mathcal P_E$, $P \cap P' \neq \emptyset$ if and only if the edge $e$ corresponding to the primal auxiliary path $P'$ is incident to the vertex $v_i$ corresponding to $P$.
    Moreover, if $P \cap P’ \neq \emptyset$, they share only one common edge in $\gad(e)$, as observed in \Cref{obs:edge-gadget}.
    Thus, we have $|P \xor P'| \ge 2\ell - 2$.
    For distinct primal vertex paths $P, P' \in \mathcal P$, they share only one arc $(s, v_V)$ in common as the corresponding vertices are not adjacent in $H$.
    Therefore, for distinct $P, P \in \mathcal P$, we have $|P \xor P'| \ge 2\ell - 2$.
\end{proof}

\begin{lemma}\label{lem:np-hardness:backward}
    Suppose that $D$ has $k + 2|E|$ paths $P_1, \dots, P_{k + 2|E|}$ from $s$ to $t$ such that $|P_i \xor P_j| \ge 2\ell - 2$ for $1 \le i < j \le k + 2|E|$.
    Then, $H$ has an independent set of size at least~$k$.
\end{lemma}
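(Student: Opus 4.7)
The plan is to classify the $k + 2|E|$ paths into \emph{vertex paths} (those using the arc $(s,v_V)$) and \emph{auxiliary paths} (those using $(s,v_E)$), say $\alpha$ and $\beta$ of each with $\alpha + \beta = k + 2|E|$, and then to extract an independent set from the vertex paths via an edge-counting argument.

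First I would argue that every vertex path uses a distinct vertex $v_i$: any two vertex paths sharing $(v_V, v_i)$ would share both $(s,v_V)$ and $(v_V,v_i)$, giving intersection $\ge 2$ and violating $|P_i\xor P_j|\ge 2\ell-2$. Let $S\subseteq V$ be the set of those vertices, so $|S|=\alpha$. Next I would bound $\beta\le 2|E|$ by showing that at most two auxiliary paths can pass through any gadget $\gad(e)$: since the entry segment from $v_E$ to each auxiliary-input gate has length $\ge 1$, two such paths meeting at the same input gate would share at least two arcs; hence two auxiliary paths through $\gad(e)$ must enter at $a_e$ and $b_e$ respectively, and by \Cref{obs:edge-gadget} they continue as $a_e$-to-$a'_e$ and $b_e$-to-$b'_e$, so a third would duplicate an entry/exit and exceed the allowed overlap. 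This gives $\alpha\ge k$.

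The core step is a \emph{conflict} argument. Suppose $e=\{v_i,v_j\}$ has both endpoints in $S$. Then $Q_i$ and $Q_j$ each traverse $\gad(e)$, entering at $x_e$ and $y_e$ respectively, and since $|Q_i\cap Q_j|\le 1$ is absorbed by $(s,v_V)$, they are arc-disjoint inside $\gad(e)$. If $\gad(e)$ also contained two auxiliary paths $R_a,R_b$, then (by the previous paragraph) $R_a$ goes $a_e\to a'_e$ and $R_b$ goes $b_e\to b'_e$; moreover $|Q_i\cap R_a|,|Q_i\cap R_b|,|Q_j\cap R_a|,|Q_j\cap R_b|\le 1$ since these pairs share neither $(s,v_V)$ nor $(s,v_E)$. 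Restricting to $\gad(e)$ yields exactly the configuration forbidden by \Cref{obs:edge-gadget-2}, a contradiction. Hence $\gad(e)$ carries at most one auxiliary path whenever $e\in E(H[S])$. Writing $F=E(H[S])$, this gives $\beta\le 2(|E|-|F|)+|F|=2|E|-|F|$, whence $|F|\le 2|E|-\beta=\alpha-k$.

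Finally, since $H[S]$ has $\alpha$ vertices and at most $\alpha-k$ edges, a standard greedy argument (delete one endpoint from each edge of $F$) produces an independent set in $H[S]\subseteq H$ of size at least $\alpha-|F|\ge k$, completing the proof. The main obstacle I expect is step three: getting the hypotheses of \Cref{obs:edge-gadget-2} to line up cleanly, in particular verifying that the ``$\le 1$'' intersection conditions transfer from the global paths to their restrictions to $\gad(e)$, and carefully ruling out degenerate cases where some auxiliary path could re-enter the same gadget or overlap with itself. Once those bookkeeping issues are settled, the edge-counting conclusion is immediate.
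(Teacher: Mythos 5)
There is a genuine gap in the core ``conflict'' step, and it is not mere bookkeeping. You implicitly assume that the paths in the feasible set are \emph{primal}: you write that if $e=\{v_i,v_j\}$ has both endpoints in $S$ then ``$Q_i$ and $Q_j$ each traverse $\gad(e)$, entering at $x_e$ and $y_e$,'' and that two auxiliary paths through $\gad(e)$ ``continue as $a_e$-to-$a'_e$ and $b_e$-to-$b'_e$'' by \Cref{obs:edge-gadget}. Neither claim is available for an arbitrary feasible set. A vertex path is only constrained to start with $(s,v_V),(v_V,v_i)$; inside the first gadget it meets it may exit at the \emph{other} vertex-output gate or at an auxiliary-output gate, after which it follows a different route and may never reach $\gad(e)$ at all. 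Likewise, an auxiliary path entering at $a_e$ may leave $\gad(e)$ at a vertex-output gate and wander through further gadgets; \Cref{obs:edge-gadget} has as its \emph{hypothesis} that the two paths run from auxiliary-input to auxiliary-output gates, so you cannot invoke it to conclude this. Consequently the key inequality $\beta \le 2|E|-|F|$ (at most one auxiliary path per gadget of $E(H[S])$) is unproven: if the two vertex paths deviate before reaching $\gad(e)$, two auxiliary paths can happily occupy $\gad(e)$ with no violation of the dissimilarity constraint, and your edge count $|F|\le \alpha-k$ collapses.

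The paper closes exactly this gap by first \emph{normalizing} the solution: it picks a feasible set of maximum size that (i) maximizes the number of auxiliary paths and (ii) maximizes those leaving their gadget at an auxiliary-output gate, and then proves by exchange/rerouting arguments (\Cref{obs:proper-auximilary-path,obs:leaving-auxiliary-output-gate,obs:auxiliary-paths,obs:primal-auxiliary-paths}) that such a set contains exactly $2|E|$ auxiliary paths, all primal. Only then do \Cref{obs:edge-gadget,obs:edge-gadget-2} apply: every vertex path is forced onto its primal route, no two primal vertex paths can share a gadget, and the at least $k$ remaining vertex paths directly yield an independent set. Your final counting idea (tolerating edges inside $H[S]$ and deleting one endpoint per edge) is a genuinely different and potentially economical finish, but it still needs the structural control over non-primal vertex paths and over auxiliary paths exiting at vertex-output gates; without some analogue of the maximality/exchange machinery, the per-gadget claims your count rests on do not hold. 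Your correct ingredients are the distinctness of the $(v_V,v_i)$ arcs and the bound $\beta\le 2|E|$ via the auxiliary-input gates; the missing idea is the normalization of the solution before the gadget-level arguments can be applied.
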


To prove \Cref{lem:np-hardness:backward}, we need several observations on diverse shortest $st$-paths from in $D$.
From now on, we refer to the set of paths $\mathcal P = \{P_1, \dots, P_{q}\}$ from $s$ to $t$ in $D$ satisfying $|P_i \xor P_j| \ge 2\ell - 2$ for $1 \le i < j \le q$ as a \emph{feasible set}.
Let $\mathcal P = \{P_1, \dots, P_q\}$ be a feasible set of size at least $k + 2|E|$.
Without loss of generality, we assume that $D$ has no feasible set of size at least $q + 1$.
Moreover, we assume that $\mathcal P$ satisfies the following maximality condition:
\begin{itemize} \setlength{\leftskip}{0.2cm}
    \item[(i)] $\mathcal P$ maximizes the number of auxiliary paths among all feasible sets of size $q$.
\end{itemize} 
The following observations are essentially made in \cite{BachtlerBK22:arXiv:Almost}, whose proofs are deferred to Appendix.

\begin{observation}[$\star$]\label{obs:proper-auximilary-path}
    Each auxiliary path in $\mathcal P$ intersects with exactly one edge gadget.
\end{observation}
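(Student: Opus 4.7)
My plan is to argue by contradiction using a local replacement together with the maximality of $\mathcal P$ (and a natural tiebreaker compatible with condition~(i)). Suppose some auxiliary path $P^{\star} \in \mathcal P$ intersects two or more edge gadgets. Since $P^{\star}$ must start with the arc $(s, v_E)$, it is forced into the dispatch substructure of dedicated paths leading to auxiliary-input gates, and so it enters some first gadget $\gad(e)$ at $a_e$ or $b_e$ (say $a_e$, by symmetry).

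The key structural step is to show that $P^{\star}$ cannot exit $\gad(e)$ at an auxiliary-output gate: if it did, then by construction the remainder of $P^{\star}$ would be forced onto the dedicated path from $a'_e$ (or $b'_e$) to $t$, and $\gad(e)$ would be the only gadget that $P^{\star}$ meets, contradicting our assumption. Hence $P^{\star}$ exits $\gad(e)$ through a vertex-output gate $x'_e$ or $y'_e$, after which it must rejoin the vertex-path structure associated with the corresponding endpoint $v_i$ of $e$ and continue through further colour layers in order to reach $t$, necessarily meeting additional gadgets along the way.

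Next I would replace $P^{\star}$ with the canonical primal auxiliary path $P' = Q_{e,a}$, which enters $\gad(e)$ via the standard red sub-path and exits at $a'_e$ onto the dedicated path to $t$; in particular, $P'$ intersects only $\gad(e)$. The bulk of the work is to check that $\mathcal P' \coloneqq (\mathcal P \setminus \{P^{\star}\}) \cup \{P'\}$ is still a feasible set. This amounts to a small case analysis: with a primal auxiliary path $Q_{e', \cdot}$ for $e' \neq e$ the overlap with $P'$ is just $(s,v_E)$; with an auxiliary path that also passes through $\gad(e)$, \Cref{obs:edge-gadget} forces its internal route onto the blue sub-path, limiting the in-gadget overlap with $P'$ to at most one arc; and with any vertex path, the overlap inside $\gad(e)$ is bounded by one arc via the same analysis applied to the green sub-path. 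In each case $|P' \xor Q| \ge 2\ell - 2$, so feasibility is preserved.

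To extract the contradiction I would strengthen the choice of $\mathcal P$ by the tiebreaker of additionally minimising the total number of edge gadgets intersected by auxiliary paths among all feasible sets satisfying (i); this is clearly compatible with (i) since both $P^{\star}$ and $P'$ are auxiliary and the swap leaves their count unchanged. The replacement strictly decreases this tiebreaker, contradicting minimality. The main obstacle is the overlap case analysis in the previous paragraph, where one has to verify, using \Cref{obs:edge-gadget} repeatedly, that fixing $P'$ on the red sub-path pins down all other paths passing through $\gad(e)$ tightly enough that their in-gadget overlap with $P'$ cannot exceed one arc; this is the only place where the detailed geometry of the gadget really enters the argument.
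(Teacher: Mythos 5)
There is a genuine gap: your one-for-one swap $\mathcal P' = (\mathcal P \setminus \{P^{\star}\}) \cup \{Q_{e,a}\}$ need not be feasible, and your case analysis does not establish that it is. The dangerous conflicts are not inside $\gad(e)$ but on the dedicated path from $a'_e$ to $t$: nothing proved so far prevents $\mathcal P$ from containing a vertex path (or an auxiliary path entering at $b_e$) that leaves $\gad(e)$ at $a'_e$ and then runs along that dedicated suffix; such a path would share many arcs with $Q_{e,a}$, destroying feasibility. Your appeal to \Cref{obs:edge-gadget} to ``pin'' other routes onto the blue sub-path is not available here, since that observation only applies when there are \emph{two arc-disjoint} paths between the auxiliary gates, which is exactly what you cannot assume at this stage (ruling out such behaviour is the content of the \emph{later} observations). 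Moreover, if such a conflicting path $Q$ exists you cannot simply discard it: removing both $P^{\star}$ and $Q$ while adding only $Q_{e,a}$ drops the set size below $q$, so neither maximality nor your new tiebreaker yields a contradiction.

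This is precisely why the paper's argument splits the bad path into \emph{two} auxiliary paths instead of replacing it by one: $P_1$ follows $P^{\star}$ into $\gad(e)$ and exits at the \emph{other} auxiliary-output gate $b'_e$, and $P_2$ enters the next gadget $\gad(f)$ at $a_f$ and follows $P^{\star}$ from there; the exit $b'_e$ and entry $a_f$ are guaranteed free of other \emph{auxiliary} paths because auxiliary paths pairwise share only $(s, v_E)$, and the one possible conflict (a vertex path through $b'_e$) can be deleted while still keeping $q$ paths, which then contradicts either maximality of $q$ or condition~(i) since the auxiliary count strictly increases. Your approach, as written, lacks both this slack and a justification for why the canonical red route and its exit suffix are unobstructed, so the feasibility claim at the heart of your replacement does not go through.
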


The above observation ensures that each path in $\mathcal P$ passing through vertex-input gate is a vertex path.
In addition to~(i), we assume that $\mathcal P$ satisfies the following condition:
\begin{itemize} \setlength{\leftskip}{0.3cm}
    \item[(ii)] $\mathcal P$ maximizes the number of auxiliary paths leaving the unique intersecting edge gadget at its auxiliary-output gate among all feasible sets of size $q$ satisfying (i).
\end{itemize}

\begin{observation}[$\star$]\label{obs:leaving-auxiliary-output-gate}
    Each auxiliary path in $\mathcal P$ leaves the unique intersecting edge gadget at its auxiliary-output.
\end{observation}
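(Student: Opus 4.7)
The proof will proceed by contradiction with the maximality condition (ii). Suppose some auxiliary path $P \in \mathcal P$ intersects exactly the edge gadget $\gad(e)$ (by Observation~\ref{obs:proper-auximilary-path}) but exits $\gad(e)$ at a vertex-output gate; without loss of generality $P$ enters at $a_e$. My plan is to reroute $P$ to a new $st$-path $P'$ that agrees with $P$ on the initial segment from $s$ up to $a_e$, then follows the canonical internal path $P_{e,a}$ from $a_e$ to $a'_e$ (as in \Cref{fig:edge-gadget-path}), and finally traverses the unique auxiliary-connecting subpath from $a'_e$ to $t$. Since every $st$-path in $D$ has the same length $\ell$, $P'$ is a valid shortest $st$-path, and it leaves $\gad(e)$ at the auxiliary-output gate $a'_e$. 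Thus if I can show $\mathcal P' \coloneqq (\mathcal P \setminus \{P\}) \cup \{P'\}$ is still feasible, then $\mathcal P'$ will contradict the maximality of auxiliary paths exiting at auxiliary-output gates, proving the observation.

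Since every path in $\mathcal P(s,t)$ has length $\ell$, feasibility $|P' \xor P''| \ge 2\ell - 2$ reduces to verifying $|P' \cap P''| \le 1$ for each $P'' \in \mathcal P \setminus \{P\}$. I would split the argument by cases on the type of $P''$ and on whether it intersects $\gad(e)$. On the initial segment from $s$ to $a_e$, $P'$ coincides with $P$, so the intersection with $P''$ there equals the corresponding portion of $|P \cap P''|$ and is unchanged. Inside $\gad(e)$, the canonical path $P_{e,a}$ shares exactly one arc with each canonical vertex path $P_{e,x}, P_{e,y}$ and is arc-disjoint from $P_{e,b}$ by Observation~\ref{obs:edge-gadget}; for any vertex path $P''$ traversing $\gad(e)$, I would invoke the feasibility of $\mathcal P$ together with the gadget structure to bound the shared arcs inside the gadget by one. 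On the closing segment from $a'_e$ to $t$, the only paths that could share arcs with $P'$ are auxiliary paths that both enter $\gad(e)$ at $a_e$ and exit at $a'_e$, and the feasibility of $\mathcal P$ together with the forced common prefix $s \to v_E \to a_e$ rules out a second such auxiliary path, so this segment introduces no new intersections.

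The main obstacle is the intra-gadget case analysis when $P''$ also traverses $\gad(e)$ along a non-canonical internal path: such $P''$ must still satisfy $|P \cap P''| \le 1$, and I expect the arguments behind Observation~\ref{obs:edge-gadget} and Observation~\ref{obs:edge-gadget-2} to be adapted to conclude that replacing the internal portion of $P$ by the canonical $P_{e,a}$ cannot create two shared arcs with $P''$ inside $\gad(e)$. Once this local check is completed, the conclusion is immediate: $\mathcal P'$ is a feasible set of the same size $q$ but with strictly more auxiliary paths leaving their associated edge gadget at an auxiliary-output gate than $\mathcal P$, contradicting condition~(ii).
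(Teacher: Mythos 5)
Your overall strategy (contradict maximality condition (ii) by rerouting the offending auxiliary path so that it exits at an auxiliary-output gate) is the same as the paper's, but the specific reroute you choose leaves two genuine gaps. First, your claim about the closing segment is false: you assert that the only paths that could share arcs with $P'$ on the segment from $a'_e$ to $t$ are auxiliary paths entering $\gad(e)$ at $a_e$ and exiting at $a'_e$, but at this point in the argument a vertex path may exit $\gad(e)$ at $a'_e$, and an auxiliary path entering at $b_e$ may also exit at $a'_e$ (both configurations actually occur in the case analysis used later for \Cref{obs:auxiliary-paths}). Any such path already occupies the whole $a'_e$-to-$t$ segment, which has more than one arc, so your rerouted $P'$ would share at least two arcs with it and $(\mathcal P \setminus \{P\}) \cup \{P'\}$ would not be feasible. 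The paper's proof is built precisely around this difficulty: it reroutes $P$ only from the branch vertex $w_e$ onward toward the auxiliary-output gate, and when some $Q \in \mathcal P$ already contains that alternative tail $Q'$, it \emph{exchanges} the tails of $P$ and $Q$ (using the fact, established first, that no other path of $\mathcal P$ passes through the vertex-output gate $x'_e$, so handing $P$'s old tail to $Q$ is safe). Your proposal has no mechanism for this collision case, and without it the contradiction with (ii) does not go through.

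Second, your intra-gadget step is not a proof but a stated expectation: replacing the entire internal portion of $P$ by the canonical path $P_{e,a}$ can in principle create two shared arcs with another path that traverses $\gad(e)$ along a non-canonical route, and \Cref{obs:edge-gadget} and \Cref{obs:edge-gadget-2} only speak about arc-disjoint canonical configurations, so they do not directly give the bound $|P' \cap P''| \le 1$ inside the gadget. The paper avoids having to prove any such bound by keeping $P$ unchanged up to $w_e$ (so no new intersections are created before $w_e$) and changing only the short exit through $b'_e$ plus the tail, which is then controlled by the exchange argument. To repair your proof you would either have to carry out this intra-gadget case analysis explicitly or switch to the paper's minimal reroute-and-exchange scheme.
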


\begin{observation}[$\star$]\label{obs:auxiliary-paths}
    The set $\mathcal P$ contains exactly $2|E|$ auxiliary paths.
\end{observation}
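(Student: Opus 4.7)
The plan is to establish $\alpha = 2|E|$, where $\alpha$ denotes the number of auxiliary paths in $\mathcal{P}$, by proving matching upper and lower bounds and exploiting the maximality condition~(i).

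For the upper bound $\alpha \le 2|E|$, I rely on \Cref{obs:proper-auximilary-path} and \Cref{obs:leaving-auxiliary-output-gate}: every auxiliary path in $\mathcal{P}$ enters a unique gadget $\gad(e)$ at one of its two auxiliary-input gates and exits at an auxiliary-output gate. Since every pair of paths in $\mathcal{P}$ shares at most one arc and every pair of auxiliary paths already shares $(s, v_E)$, two auxiliary paths cannot enter the same gadget through the same input gate: the $v_E$-to-gate connector path has length at least $1$ and would be shared in its entirety, forcing at least two shared arcs. Thus each gadget hosts at most two auxiliary paths, giving $\alpha \le 2|E|$.

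For the lower bound $\alpha \ge 2|E|$, I argue by contradiction. Suppose $\alpha < 2|E|$; then some gadget $\gad(e^*)$ with $e^* = v_i v_j$ admits at most one auxiliary path of $\mathcal{P}$. Let $P_a^*$ and $P_b^*$ be the two canonical auxiliary paths through $\gad(e^*)$, obtained by concatenating $(s, v_E)$, the $v_E$-to-input-gate connector, the internal subpath $P_{e^*, a}$ or $P_{e^*, b}$, and the output-gate-to-$t$ connector. Let $\mathcal{V} \subseteq \{P_{V_i}, P_{V_j}\}$ be the (at most two) vertex paths of $\mathcal{P}$ that traverse $\gad(e^*)$. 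I would then perform a case analysis on $|\mathcal{V}|$, on the number of existing auxiliary paths through $\gad(e^*)$, and on the compatibility of each $P_V \in \mathcal{V}$ with $P_a^*$ and $P_b^*$. In every case the goal is to exhibit a feasible set $\mathcal{P}'$ of size $q$ with strictly more auxiliary paths than $\mathcal{P}$, constructed by removing an appropriate subset of $\mathcal{V}$ from $\mathcal{P}$ and inserting the corresponding subset of $\{P_a^*, P_b^*\}$. Compatibility of the inserted paths with the remainder is straightforward: $P_a^*$ and $P_b^*$ share only $(s, v_E)$ with one another and with any other auxiliary path in $\mathcal{P}$, and they share no arc with any vertex path that avoids $\gad(e^*)$.

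The main obstacle is handling the case where both $P_{V_i}$ and $P_{V_j}$ lie in $\mathcal{V}$ and each is non-standard within $\gad(e^*)$, so that one or both are incompatible with both $P_a^*$ and $P_b^*$. Here the contrapositive of \Cref{obs:edge-gadget} is decisive: any $P_V \in \mathcal{V}$ whose intersections with both $P_{e^*, a}$ and $P_{e^*, b}$ have size at most one must be a canonical vertex path, so a non-standard $P_V$ necessarily shares at least two arcs with at least one of $P_a^*, P_b^*$. A finite case check on which of $P_a^*, P_b^*$ each $P_V \in \mathcal{V}$ clashes with shows that in every configuration an appropriate swap yields a feasible set of the same size $q$ with strictly greater auxiliary count, contradicting condition~(i) and thereby forcing $\alpha = 2|E|$.
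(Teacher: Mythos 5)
Your upper bound is sound and essentially the paper's: auxiliary paths all contain $(s, v_E)$, hence must be arc-disjoint beyond it, and each one enters exactly one gadget through one of its two auxiliary-input gates, giving at most $2|E|$. The gap is in the lower bound. Your exchange repertoire is restricted to ``remove a subset of the vertex paths $\mathcal V$ traversing $\gad(e^*)$ and insert the corresponding subset of $\{Q_{e^*,a}, Q_{e^*,b}\}$'', and your compatibility claim that the canonical auxiliary paths share only $(s, v_E)$ with every other auxiliary path of $\mathcal P$ is false in precisely the hard configuration: if $\mathcal P$ already contains an auxiliary path $R$ entering $\gad(e^*)$ at $b_{e^*}$, then $R$ shares the entire $v_E$-to-$b_{e^*}$ connector with $Q_{e^*,b}$, and if $R$ leaves at $a'_{e^*}$ (which \Cref{obs:leaving-auxiliary-output-gate} permits), it also shares the $a'_{e^*}$-to-$t$ connector with $Q_{e^*,a}$. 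In that situation no nonempty subset of $\{Q_{e^*,a}, Q_{e^*,b}\}$ can be inserted while $R$ stays, so no swap of the kind you describe exists; and if instead you delete $R$ together with the two vertex paths through $x_{e^*}$ and $y_{e^*}$ and insert both canonical auxiliary paths, the resulting feasible set has size $q-1$, which contradicts neither the maximality of $q$ nor condition~(i), since (i) compares only feasible sets of size exactly $q$.

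The paper closes exactly this case (its Case III) with two extra moves that your plan lacks: it deletes $P$, $Q$, \emph{and} $R$, inserts $Q_{e,a}$ and $Q_{e,b}$, and additionally re-inserts a vertex path $P'$ or $Q'$ obtained by rerouting $P$ or $Q$ locally inside $\gad(e)$ (\Cref{fig:auxiliary-paths-proof3}), which restores the cardinality to $q$ while netting one more auxiliary path and thus contradicts (i); it also needs \Cref{obs:leaving-auxiliary-output-gate} (hence condition~(ii)) to pin down where $R$ can exit so that the rerouted vertex path is available. The same deficiency already appears with no vertex paths in the gadget, when $R$ enters at $b_e$ but exits at $a'_e$ (the second subcase of the paper's Case I): there one must delete $R$ and insert both canonical auxiliary paths, again a move outside your ``subset of $\mathcal V$'' scheme, though in that case the repair is easy since the size grows to $q+1$ and contradicts maximality of $q$. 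So your skeleton is recoverable, but only after enlarging the exchange toolkit to allow deleting a non-canonical auxiliary path and re-inserting an internally rerouted vertex path; as written, the contradiction does not go through in the decisive case.
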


\begin{observation}[$\star$]\label{obs:primal-auxiliary-paths}
    Every auxiliary path in $\mathcal P$ is a primal auxiliary path.
\end{observation}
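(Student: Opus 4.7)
The plan is to combine the three previous observations (a counting step) with \Cref{obs:edge-gadget} (a structural step) to pin down what an auxiliary path in $\mathcal P$ can look like. From \Cref{obs:auxiliary-paths,obs:proper-auximilary-path,obs:leaving-auxiliary-output-gate}, $\mathcal P$ contains exactly $2|E|$ auxiliary paths, each intersecting a unique edge gadget $\gad(e)$, entering at one of the auxiliary-input gates $a_e,b_e$ and leaving at one of the auxiliary-output gates $a'_e,b'_e$. The key global fact I would exploit is that any two auxiliary paths in $\mathcal P$ share the arc $(s,v_E)$, and $|P \xor P'| \ge 2\ell-2$ then forces $|P \cap P'| \le 1$, so they share no other arc.

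First I would show that each edge gadget $\gad(e)$ contains at most two auxiliary paths of $\mathcal P$. If three of them visited $\gad(e)$, two would necessarily enter at the same auxiliary-input gate or leave at the same auxiliary-output gate, because there are only two gates of each kind. In either case, those two auxiliary paths would share the entire nontrivial connecting path from $v_E$ to that input gate, or from that output gate to $t$, giving at least two common arcs together with $(s,v_E)$, contradicting $|P \cap P'| \le 1$. Since the total number of auxiliary paths is $2|E|$ and there are only $|E|$ gadgets, a pigeonhole argument yields exactly two auxiliary paths per gadget.

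Next, fix $e \in E$ and let $P, P' \in \mathcal P$ be the two auxiliary paths through $\gad(e)$. Since $|P \cap P'| \le 1$ and $P \cap P' \supseteq \{(s,v_E)\}$, they are arc-disjoint inside $\gad(e)$ and enter/leave at distinct auxiliary gates. Inside the gadget this provides two arc-disjoint paths from $\{a_e,b_e\}$ to $\{a'_e,b'_e\}$, and \Cref{obs:edge-gadget} asserts that such a pair is uniquely $P_{e,a}$ (from $a_e$ to $a'_e$) together with $P_{e,b}$ (from $b_e$ to $b'_e$). Concatenated with the unique connecting paths between $v_E$ and the auxiliary-input gates and between the auxiliary-output gates and $t$, this forces $\{P, P'\} = \{Q_{e,a}, Q_{e,b}\}$, proving that every auxiliary path in $\mathcal P$ is a primal auxiliary path.

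The main obstacle is the counting step, namely bounding the number of auxiliary paths per gadget via the $|P \cap P'| \le 1$ inequality; once at most two auxiliary paths share a gadget and they differ at both endpoints, the internal uniqueness supplied by \Cref{obs:edge-gadget} finishes the argument without further work.
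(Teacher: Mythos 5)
Your proposal is correct and follows essentially the same route as the paper: use the earlier observations to get exactly $2|E|$ auxiliary paths, each entering a unique gadget at an auxiliary-input gate and leaving at an auxiliary-output gate, count against the gates to pair up two such paths per gadget, and then invoke \Cref{obs:edge-gadget} to force them to be $P_{e,a}$ and $P_{e,b}$, hence $Q_{e,a}$ and $Q_{e,b}$. Your version merely makes the pigeonhole/counting step (via the shared arc $(s,v_E)$ and $|P \cap P'| \le 1$) more explicit than the paper does.
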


By~\Cref{obs:auxiliary-paths}, there are at least $k$ vertex paths in $\mathcal P$.
Moreover, by~\Cref{obs:edge-gadget,obs:primal-auxiliary-paths}, these paths must be primal vertex paths as there are two arc-disjoint subpaths from the auxiliary-input gates to the auxiliary output-gates in each edge gadget.
Let $X \subseteq V$ be the set of $q - 2|E|$ vertices of $H$ corresponding to these vertex paths.
By~\Cref{obs:edge-gadget-2}, there is no pair of two primal vertex paths entering a single edge gadget.
This implies that $X$ is an independent set of $H$, completing the proof of \Cref{lem:np-hardness:backward}.

\bibliography{main}

\newpage
\appendix
\section{Omitted proofs}

\setcounter{theorem}{26}

\begin{observation}[restated]
    Each auxiliary path in $\mathcal P$ intersects with exactly one edge gadget.
\end{observation}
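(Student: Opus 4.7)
The plan is a local-modification contradiction argument. Suppose for contradiction that some auxiliary path $P \in \mathcal P$ meets at least two edge gadgets. Since $P$ uses $(s,v_E)$, it enters its first intersected gadget $\gad(e_1)$ at an auxiliary-input gate, WLOG $a_{e_1}$. If $P$ exited $\gad(e_1)$ at an auxiliary-output gate, the dedicated auxiliary-output-to-$t$ tail would send $P$ directly to $t$, contradicting the presence of a second intersected gadget. Hence $P$ exits $\gad(e_1)$ at a vertex-output gate, WLOG $x'_{e_1}$, after which it is forced to follow the unique layer-spine of the corresponding endpoint $v_i$ of $e_1$ all the way to $t$.

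I then form the replacement $P^\ast := (s,v_E) + P_{e_1,a}$, the primal auxiliary path that shares the gate $a_{e_1}$ with $P$ and meets only $\gad(e_1)$. Observe that $P^\ast \notin \mathcal P$, since $P$ and $P^\ast$ both use $(s,v_E)$ and the unique outgoing arc of $a_{e_1}$ inside $\gad(e_1)$, giving $|P \cap P^\ast|\ge 2$ and contradicting feasibility of $\mathcal P$. Thus $\mathcal P' := (\mathcal P \setminus \{P\}) \cup \{P^\ast\}$ has the same cardinality and the same number of auxiliary paths as $\mathcal P$, but strictly reduces the quantity $\Phi(\mathcal P) := \sum_{P'' \in \mathcal P,\, P'' \text{ auxiliary}} |\{\gad(e) : P'' \text{ meets } \gad(e)\}|$. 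Imposing minimization of $\Phi$ as a WLOG tiebreaker on top of (i) then yields a contradiction, provided $\mathcal P'$ is feasible.

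The bulk of the work is verifying $|P^\ast \cap P_j| \le 1$ for every $P_j \in \mathcal P \setminus \{P\}$. For a vertex path $P_j$, any intersection with $P^\ast$ is confined to $\gad(e_1)$ and is at most one arc by \Cref{obs:edge-gadget}. For an auxiliary $P_j$ not meeting $\gad(e_1)$, the only shared arc is $(s,v_E)$. The decisive case is $P_j$ auxiliary and meeting $\gad(e_1)$. Here the preexisting bound $|P\cap P_j|\le 1$ combined with $(s,v_E) \in P \cap P_j$ forces $P_j$ to share no further arc with $P$; in particular $P_j$ cannot enter $\gad(e_1)$ at $a_{e_1}$ (else the unique outgoing arc of $a_{e_1}$ that $P$ uses would be shared), so $P_j$ enters at $b_{e_1}$ or at a vertex-input gate. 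In either sub-case, combining the arc-disjointness of $P_{e_1,a}$ from the alternative primal routings via \Cref{obs:edge-gadget} with the constraint that $P_j$ shares no internal gadget arc with $P$ caps $|P^\ast \cap P_j|$ at one.

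This auxiliary-vs-auxiliary sub-case is the main obstacle: it is the one place where one cannot appeal only to the gadget lemma in isolation, but must simultaneously exploit the inherited feasibility bound $|P \cap P_j|\le 1$ and the gadget's internal arc-routing structure to track exactly which in-gadget arcs can or cannot be shared with $P_{e_1,a}$.
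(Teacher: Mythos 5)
There is a genuine gap: the feasibility of your swapped family $\mathcal P' = (\mathcal P \setminus \{P\}) \cup \{P^\ast\}$ with $P^\ast = Q_{e_1,a}$ is not established, and in a configuration that the construction genuinely permits it fails outright. The decisive sub-case you flag (another auxiliary path $P_j \in \mathcal P$ meeting $\gad(e_1)$) is not resolved by your sketch. Since $P$ occupies $a_{e_1}$ and its access path from $v_E$, such a $P_j$ enters at $b_{e_1}$, and the gadget allows it to route from $b_{e_1}$ to the auxiliary-output gate $a'_{e_1}$ while staying arc-disjoint from $P$'s route $a_{e_1} \to x'_{e_1}$ (this very pattern appears in the paper's later case analysis, e.g.\ an auxiliary path through $b_e$ leaving at $a'_e$). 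If that happens, $P^\ast$ and $P_j$ share not just internal gadget arcs but the entire dedicated tail from $a'_{e_1}$ to $t$, plus $(s,v_E)$, so $|P^\ast \cap P_j|$ is far larger than $1$ and $\mathcal P'$ is infeasible. Your argument for this case only invokes arc-disjointness of $P_j$ from $P$ inside the gadget, but that constrains $P_j$ relative to the route $a_{e_1}\to x'_{e_1}$, not relative to $P_{e_1,a}$'s route $a_{e_1}\to a'_{e_1}$, and it says nothing about the shared tail after $a'_{e_1}$. (A secondary issue: your extra tiebreaker of minimizing $\Phi$ changes the hypotheses under which the observation is proved, and the later observations, which impose condition (ii) and perform further reroutings, would have to be re-threaded through this additional minimality.)

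The paper's proof is built precisely to dodge this collision: instead of rerouting $P$ onto the canonical path through $a_{e_1}$ and $a'_{e_1}$, it keeps $P$'s own prefix up to an internal vertex $w_e$ and exits at the \emph{other} auxiliary-output gate $b'_e$, which it first argues no other auxiliary path can use, and it separately adds a second auxiliary path entering the next gadget at $a_f$; after deleting at most one conflicting vertex path, the number of auxiliary paths strictly increases, contradicting maximality condition (i) directly, with no extra potential function. If you want to salvage your one-for-one swap, you would have to choose the replacement exiting at a gate and tail that no other path in $\mathcal P$ can touch (as the paper does with $b'_e$), rather than the fixed primal path $Q_{e_1,a}$.
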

\begin{proof}
    Suppose otherwise.
    Let $P \in \mathcal P$ be an auxiliary path that intersects with $\gad(e)$ and $\gad(f)$ for distinct $e, f \in E$.
    Without loss of generality, we assume that $P$ leaves $\gad(e)$ at $x'_e$ and then enters $\gad(f)$ at $x_f$ without intersecting any other edge gadget.
    Since each auxiliary path has only common arc $(s, v_E)$ with other auxiliary paths, $\mathcal P$ has no auxiliary path passing through $b'_e$ as well as auxiliary path passing through $a_f$.
    We let $w_e$ and $w_f$ be the vertices as in \Cref{fig:proper-path}.
    \begin{figure}
        \centering
        \includegraphics{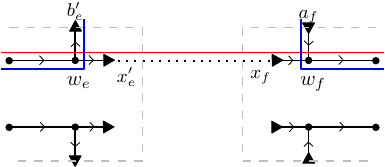}
        \caption{An illustration of two edge gadgets. The auxiliary path $P$ is depicted as the red line and two auxiliary paths $P_1$ and $P_2$ are depicted as the blue lines.}
        \label{fig:proper-path}
    \end{figure}
    We construct two auxiliary paths $P_1$ and $P_2$ as: $P_1$ passes from $s$ to $w_e$ along $P$, then leaves $\gad(e)$ at $b'_e$, and directly reaches to $t$; $P_2$ passes through $v_E$, then enters $\gad(f)$ at $a_f$, and passes from $w_f$ to $t$ along $P$.
    For any vertex path $Q \in \mathcal P$, we have $P_1 \cap Q \subseteq P \cap Q$ and $P_2 \cap Q \subseteq P \cap Q$ except for the case that $Q$ passes through $b'_e$.
    Thus, by removing such a exceptional vertex path $Q$ passing through $b'_e$ (if it exists), $(\mathcal P \cup \{P_1, P_2\}) \setminus \{P, Q\}$ is a feasible set of size at least $q$, which contradicts the maximality condition~(i).
\end{proof}


\begin{observation}[restated]
    Each auxiliary path in $\mathcal P$ leaves the unique intersecting edge gadget at its auxiliary-output.
\end{observation}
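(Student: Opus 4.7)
My plan is to parallel the argument of \Cref{obs:proper-auximilary-path}: assume for contradiction that some auxiliary path $P \in \mathcal{P}$ enters $\gad(e)$ at an auxiliary-input gate (WLOG $a_e$) but exits at a vertex-output gate (WLOG $x'_e$), and exhibit a feasible set of size $q$ with strictly more auxiliary paths exiting at auxiliary-output gates, contradicting condition (ii). I would define a replacement auxiliary path $P'$ that follows $P$ from $s$ through $v_E$ to $a_e$, then traverses the canonical internal subpath $P_{e,a}$ from $a_e$ to $a'_e$ inside $\gad(e)$, and finally takes the dedicated $a'_e t$-path added in the construction of $D_2$. By the uniform-length design of $D$, $P'$ is a shortest $st$-path, and it exits $\gad(e)$ at the auxiliary-output gate $a'_e$.

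The next step is to verify that $\mathcal{P}' := (\mathcal{P} \setminus \{P\}) \cup \{P'\}$, possibly amended by one further local swap, is a feasible set; since $\mathcal{P}'$ then has strictly more auxiliary paths exiting at auxiliary-output gates than $\mathcal{P}$, this contradicts (ii). For each $Q \in \mathcal{P} \setminus \{P\}$, the bound $|P' \cap Q| \le 1$ is straightforward outside $\gad(e)$: if $Q$ is an auxiliary path whose unique gadget differs from $\gad(e)$, then the internal disjointness of the $v_E$-to-input and output-to-$t$ paths in the construction of $D_2$ yields $P' \cap Q = \{(s, v_E)\}$; if $Q$ is a vertex path, it avoids $v_E$ and so shares no arcs with $P'$ outside $\gad(e)$; and if $Q$ is an auxiliary path intersecting $\gad(e)$, the bound $|Q \cap P| \le 1$ forces $Q$ to enter at $b_e$ rather than $a_e$, since otherwise $Q$ would additionally share the $v_E \to a_e$ subpath with $P$. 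The substantive work lies in controlling the internal overlap with $P_{e,a}$ inside $\gad(e)$.

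The main obstacle is this internal overlap. For the subcase of an auxiliary path $Q$ entering $b_e$ and exiting at $a'_e$ --- which would share the entire $a'_e t$-path with $P'$ --- I would perform a joint swap, simultaneously replacing $Q$ by the canonical auxiliary path $Q'$ entering $b_e$ and exiting $b'_e$ via $P_{e,b}$ and the dedicated $b'_e t$-path; by the arc-disjointness of $P_{e,a}$ and $P_{e,b}$ from \Cref{obs:edge-gadget} and the disjointness of the two output-to-$t$ subpaths by construction, $P'$ and $Q'$ share only $(s, v_E)$, and since $Q$ already exited at an auxiliary-output gate, this joint modification still strictly increases the count in (ii). For a vertex path $Q$ traversing $\gad(e)$, the structure of \Cref{obs:edge-gadget} --- combined with the feasibility constraints against $P$ and against the canonical subpaths $P_{e,a}, P_{e,b}$ --- forces $Q$'s internal subpath to be one of the canonical green paths, which share at most one arc with $P_{e,a}$. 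Carefully justifying this final restriction on $Q$ is the most delicate part of the argument.
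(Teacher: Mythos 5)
Your overall strategy (reroute the offending auxiliary path so that it exits at an auxiliary-output gate and argue that condition (ii) is violated) matches the paper's intent, but your specific rerouting has a genuine gap. You replace $P$ wholesale by the canonical path $P'$ that traverses $P_{e,a}$ inside $\gad(e)$ and then the dedicated $a'_e t$-path, and the feasibility of the result hinges on your final claim that any vertex path of $\mathcal P$ traversing $\gad(e)$ has a canonical ``green'' internal subpath sharing at most one arc with $P_{e,a}$. At this stage of the argument nothing supports this: the hypotheses of \Cref{obs:edge-gadget} are not available (there is no guarantee that two arc-disjoint canonical auxiliary subpaths exist inside $\gad(e)$ --- indeed the premise is that the only auxiliary path meeting $\gad(e)$ exits at a vertex-output gate and hence is \emph{not} canonical), and there are no ``feasibility constraints against $P_{e,a},P_{e,b}$'' because those paths need not belong to $\mathcal P$ at all; feasibility only bounds the overlap of other paths with $P$, whose internal route differs from $P_{e,a}$. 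So a vertex path of $\mathcal P$ may share two or more arcs with $P_{e,a}$ while sharing at most one with $P$, and then $(\mathcal P\setminus\{P\})\cup\{P'\}$ is simply infeasible, with no local swap offered to repair it. A second unhandled case is a \emph{vertex} path exiting $\gad(e)$ at $a'_e$ (this configuration is explicitly considered later, in Case~III of the proof of \Cref{obs:auxiliary-paths}, so it cannot be excluded here): such a path shares the entire $a'_e t$-path with $P'$, and your joint swap only covers the case of an \emph{auxiliary} path exiting at $a'_e$; symmetrically, after that swap your $Q'$ can collide with a vertex path exiting at $b'_e$.

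The paper's proof sidesteps exactly these difficulties by not touching the interior of $P$ at all: it reroutes only the suffix of $P$ from a vertex $w_e$ near the gadget's outputs, sending it out through the auxiliary-output gate $b'_e$ (see \Cref{fig:proper-path}), and if some path of $\mathcal P$ already uses that alternative exit route, it \emph{exchanges} the two suffixes at $w_e$ rather than discarding anything. Because the prefix of $P$ (including its possibly non-canonical route through $\gad(e)$) is preserved and the exchanged suffix is used by at most one other path, no control over overlaps with $P_{e,a}$ or $P_{e,b}$ is needed. To fix your write-up you would either have to adopt this suffix-exchange idea or supply case analyses for every path of $\mathcal P$ that can meet $P_{e,a}$ or the $a'_e t$-path, which is precisely the part you flagged as delicate and which, as stated, does not go through.
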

\begin{proof}
    Suppose otherwise.
    Let $P$ be an auxiliary path in $\mathcal P$ that intersects $\gad(e)$ for some $e \in E$ and leaves it from its vertex-output gate, say $x'_e$.
    By~\Cref{obs:proper-auximilary-path}, $P$ does not intersects with any other edge gadget.
    Moreover, there is no other path in $\mathcal P$ passing through $x'_e$, as otherwise, such a path has at least two common arcs with $P$.
    By letting $w_e$ as in~\Cref{fig:proper-path}, we can reroute $P$ as follows.
    If $\mathcal P$ has no path passing through $w_e$ other than $P$, we simply replace the subpath $P'$ of $P$ starting from $w_e$ with the path $Q'$ from $w_e$ to $t$ passing through $b'_e$.
    Otherwise, that is, $\mathcal P$ has a path $Q$ that contains $Q'$ as a subpath, we ``exchange'' their subpaths $P'$ and $Q'$.
    Since there are at least two arcs in $Q'$, $Q$ is the unique path in $\mathcal P$ containing $Q'$ as a subpath.
    Thus, by exchanging these subpaths, we have a new feasible set that has more auxiliary paths leaving their own unique intersecting edge gadget at its auxiliary-output gate, which contradicts (ii).
\end{proof} 

\begin{observation}[restated]
    The set $\mathcal P$ contains exactly $2|E|$ auxiliary paths.
\end{observation}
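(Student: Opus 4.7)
The plan is to establish both $r \le 2|E|$ (upper bound) and $r \ge 2|E|$ (lower bound), where $r$ denotes the number of auxiliary paths in $\mathcal P$, concluding $r = 2|E|$ exactly.

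\textbf{Upper bound.} Combining \Cref{obs:proper-auximilary-path} and \Cref{obs:leaving-auxiliary-output-gate}, every auxiliary path $P \in \mathcal P$ enters a unique edge gadget $\gad(e)$ at one of its two auxiliary-input gates $a_e, b_e$ and leaves at an auxiliary-output gate. This defines a map from auxiliary paths of $\mathcal P$ to the $2|E|$ (edge, auxiliary-input-gate) pairs. Two distinct auxiliary paths mapping to the same pair would share both the arc $(s, v_E)$ and the entire $v_E$-to-input-gate path, giving $|P \cap P'| \ge 2$ and contradicting $|P \xor P'| \ge 2\ell - 2$. Hence the map is injective, so $r \le 2|E|$.

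\textbf{Lower bound.} I proceed by contradiction: suppose $r < 2|E|$, so some pair $(e, a_e)$ (WLOG) is not used by any auxiliary path of $\mathcal P$. The plan is to show that, possibly after a local reroute of one existing path inside $\gad(e)$, the primal auxiliary path $Q_{e,a}$ can be added to $\mathcal P$, producing a feasible set of size $q + 1$ and contradicting the maximum-size assumption on $\mathcal P$. I verify $|Q_{e,a} \cap P| \le 1$ for each $P \in \mathcal P$ case by case: auxiliary paths through a different gadget $\gad(e')$ share only $(s, v_E)$ with $Q_{e,a}$ since the $v_E$-to-input paths and the output-to-$t$ paths of distinct gadgets are pairwise arc-disjoint; auxiliary paths through $\gad(e)$ must enter at $b_e$, and when routed as the natural $P_{e,b}$ they are arc-disjoint from $P_{e,a}$ inside the gadget by \Cref{obs:edge-gadget}, leaving only $(s, v_E)$ in common; vertex paths are disjoint from $Q_{e,a}$ outside $\gad(e)$ by construction, and share at most one arc inside by \Cref{obs:edge-gadget}.

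\textbf{Main obstacle.} The delicate subcase is when an existing $P \in \mathcal P$ is routed so as to exit $\gad(e)$ at $a'_e$ --- either an auxiliary path taking the non-primal $b_e \to a'_e$ route, or a vertex path diverted into the auxiliary-output side --- in which case $P$ and $Q_{e,a}$ share the whole terminal path $a'_e \to t$. The plan here is to reroute $P$ locally inside $\gad(e)$: swap its internal subpath for the arc-disjoint alternative supplied by \Cref{obs:edge-gadget} (redirecting the auxiliary path to $b_e \to b'_e$ via $P_{e,b}$, or the vertex path through $P_{e,x}$ or $P_{e,y}$). This reroute preserves $|P| = \ell$, preserves all pairwise intersections of $P$ with the other members of $\mathcal P$, and keeps the conditions (i) and (ii) intact while leaving $r$ unchanged. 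Now $Q_{e,a}$ is compatible with every path in the rerouted $\mathcal P$, yielding a feasible set of size $q + 1$ and contradicting the maximum-size assumption. Combining both bounds gives $r = 2|E|$.
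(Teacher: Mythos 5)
Your upper bound is fine and matches the paper's (two auxiliary paths already share $(s,v_E)$, so they are arc-disjoint elsewhere and the map to the $2|E|$ auxiliary-input gates is injective). The lower bound, however, has a genuine gap. First, your blanket claim that a vertex path shares at most one arc with $Q_{e,a}$ inside $\gad(e)$ ``by \Cref{obs:edge-gadget}'' is a misapplication of that observation: it only constrains a vertex-gate-to-vertex-gate path \emph{under the hypothesis} that the two arc-disjoint auxiliary subpaths $P_{e,a},P_{e,b}$ are present and that the path meets each of them in at most one arc. In the situation you are in, the gate $a_e$ is unused, and a path of $\mathcal P$ entering at $x_e$ or $y_e$ may be routed through two or more arcs of $P_{e,a}$ --- for instance exiting at $a'_e$, but also exiting at $y'_e$ while riding part of the red path. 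This is exactly why the paper's Case~III treats vertex paths leaving at $a'_e$ \emph{or} $y'_e$ as obstructions; your ``main obstacle'' only covers paths that exit at $a'_e$.

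Second, the rerouting step does not deliver what you claim. Rerouting a path that exited at $a'_e$ changes its entire tail (it no longer runs straight to $t$ but continues through later layers, or occupies $P_{e,b}$ and the $b'_e$--$t$ tail), so it can create new intersections of size $\ge 2$ with other members of $\mathcal P$; the assertion that the reroute ``preserves all pairwise intersections'' is unjustified and false in general. Worse, in the configuration where vertex paths use both $x_e$ and $y_e$ and an auxiliary path uses $b_e$ (the paper's Case~III with the path $R$), \Cref{obs:edge-gadget-2} shows that $Q_{e,a}$ \emph{cannot} coexist with all three, however you reroute them inside the gadget: no feasible set of size $q+1$ arises, so the maximum-size assumption alone cannot produce a contradiction. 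The paper instead removes $P,Q,R$, inserts $Q_{e,a},Q_{e,b}$ and one carefully rerouted vertex path, obtaining a \emph{same-size} feasible set with strictly more auxiliary paths, and contradicts the maximality conditions (i)/(ii) --- which is precisely why those conditions were imposed on $\mathcal P$. Your argument never uses (i) or (ii) in an essential way, and without them the lower bound does not go through.
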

\begin{proof}
    As all auxiliary paths have $(s, v_E)$ in common, $\mathcal P$ can have at most $2|E|$ auxiliary paths.
    In the following, we show that $\mathcal P$ contains at least $2|E|$ auxiliary paths.
    
    Suppose otherwise.
    Since there are at most $2|E| - 1$ auxiliary paths in $\mathcal P$, there is an edge gadget $\gad(e)$ such that at least one auxiliary-input gate is not used in any path in $\mathcal P$.
    We assume without loss of generality that $a_e$ is such an unused auxiliary-input gate.
    We distinguish several cases.

    \textbf{Case I: $\mathcal P$ has no vertex path passing through $x_e$.}
    Suppose moreover that $\mathcal P$ has no vertex path passing through $y_e$.
    If $\mathcal P$ has no auxiliary path passing through $b_e$ and leaving $\gad(e)$ at $a'_e$, then the primal auxiliary path $Q_{e, a}$ does not has a common edge of $\gad(e)$ with other paths in $\mathcal P$.
    Thus, $\mathcal P \cup \{Q_{e, a}\}$ is a feasible set of size at least $q + 1$, contradicting the choice of $\mathcal P$.
    Otherwise, $\mathcal P$ has an auxiliary path $P$ passing through $b_e$ and leaving $\gad(e)$ at $a'_e$ as in~\Cref{fig:auxiliary-paths-proof1}~(Left).
    \begin{figure}
        \centering
        \includegraphics{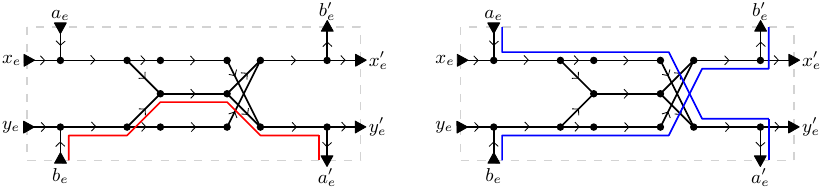}
        \caption{The left figure illustrates an auxiliary path passing through $b_e$ and leaving $\gad(e)$ at $a'_e$ or $y'_e$.
        We can replace the red path with two auxiliary paths as in the right figure.}
        \label{fig:auxiliary-paths-proof1}
    \end{figure}
    Since $\mathcal P$ has no other paths intersecting with $\gad(e)$, $\mathcal P' = (\mathcal P \setminus \{P\}) \cup \{Q_{e, a}, Q_{e, b}\}$ is a feasible set, which has more paths than $\mathcal P$, a contradiction.
    If $\mathcal P$ has a vertex path $Q$ passing through $y_e$, we apply the same argument to $\mathcal P \setminus \{Q\}$, which yields a contradiction (with respect to~(i)).

    \textbf{Case II: $\mathcal P$ has a vertex path $P$ passing through $x_e$ and leaves $\gad(e)$ at $b'_e$ or $x'_e$.}
    In this case, $\mathcal P$ has no vertex path passing through $y_e$.
    Thus, by applying the argument used in Case I to $\mathcal P \setminus \{P\}$, we have a contradiction (with respect to~(i)).

    \textbf{Case III: $\mathcal P$ has a vertex path $P$ passing through $x_e$ and leaves $\gad(e)$ at $a'_e$ or $y'_e$.}
    If $\mathcal P$ has no vertex path passing through $y_e$, we can apply the same argument as Case II.
    Thus, we assume that there is a vertex path $Q \in \mathcal P$ passing through $y_e$.
    As $P$ and $Q$ have a common arc $(s, v_V)$, they are arc-disjoint in $\gad(e)$, meaning that $Q$ passes through $b'_e$ or $x'_e$.
    Considering symmetric cases, there are essentially two cases (Left) and (Right) as in~\Cref{fig:auxiliary-paths-proof2}.   
    \begin{figure}
        \centering
        \includegraphics{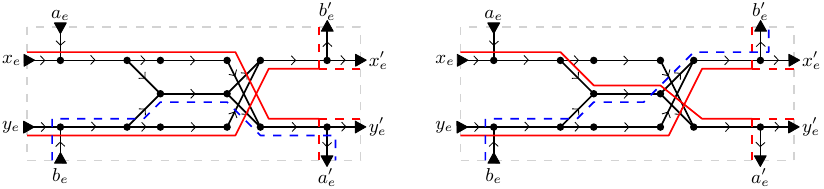}
        \caption{There are two possible cases (Left) and (Right). There can be an auxiliary path $R$ passing through $b_e$, depicted by the blue dashed line.}
        \label{fig:auxiliary-paths-proof2}
    \end{figure}
    If $\mathcal P$ has no auxiliary path passing through $b_e$, we can construct a feasible sat $(\mathcal P \setminus \{P, Q\}) \cup \{Q_{e, a}, Q_{e,b}\}$, which contradicts the maximality~(ii).
    Suppose otherwise, that is, $\mathcal P$ has an auxiliary path $R$ passing through $b_e$.
    By~\Cref{obs:leaving-auxiliary-output-gate}, the auxiliary path $R$ leaves $\gad(e)$ at $a'_e$ as in~\Cref{fig:auxiliary-paths-proof2}~(Left) or at $b'_e$ as in~\Cref{fig:auxiliary-paths-proof2}~(Right).
    In both cases, the vertex path passing through $x_e$ leaves $\gad(e)$ at $y'_e$ (Left), and the vertex path passing through $y_e$ leaves $\gad(e)$ at $x'_e$ (Right), respectively.
    \begin{figure}
        \centering
        \includegraphics{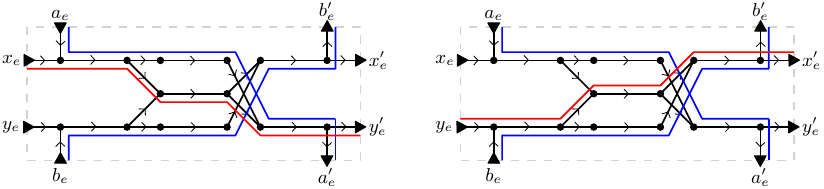}
        \caption{The figure illustrates the replacement of two cases (Left) and (Right) in \Cref{fig:auxiliary-paths-proof2}.}
        \label{fig:auxiliary-paths-proof3}
    \end{figure}
    Let $P'$ (resp.\ $Q'$) be the vertex path obtained from $P$ (resp.\ $Q$) by locally replacing the subpath inside $\gad(e)$ with the red path illustrated in the left (resp. right) figure of \Cref{fig:auxiliary-paths-proof3}.
    Thus, we construct a feasible set from $(\mathcal P \setminus \{P, Q, R\}) \cup \{Q_{e,a}, Q_{e, b}\}$ by appropriately adding $P'$ or $Q'$, depending on the cases.
    The obtained feasible set contains more auxiliary paths, contradicting the maximality~(i).
\end{proof}

\begin{observation}[restated]
    Every auxiliary path in $\mathcal P$ is a primal auxiliary path.
\end{observation}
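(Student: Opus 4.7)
The plan is to mirror the swap-and-maximality strategy used throughout the previous observations, by adding a third condition to the standing assumptions on $\mathcal{P}$: (iii) among all feasible sets of size $q$ satisfying (i) and (ii), $\mathcal{P}$ maximizes the number of primal auxiliary paths. I first combine the earlier observations to pin down the structure of auxiliary paths inside each edge gadget. By \Cref{obs:auxiliary-paths}, $\mathcal{P}$ contains exactly $2|E|$ auxiliary paths, and by \Cref{obs:leaving-auxiliary-output-gate} each one leaves its unique intersecting edge gadget at an auxiliary-output gate. Since every two paths in $\mathcal{P}$ have length $\ell$ and satisfy $|P \cap P'| \le 1$ (equivalent to $|P \xor P'| \ge 2\ell - 2$), and all auxiliary paths share the arc $(s, v_E)$, any two distinct auxiliary paths share only that arc; together with \Cref{obs:proper-auximilary-path}, each $\gad(e)$ is intersected by exactly two auxiliary paths, arc-disjoint inside $\gad(e)$. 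By the first part of \Cref{obs:edge-gadget}, they pair up as $P_a \colon a_e \to a'_e$ and $P_b \colon b_e \to b'_e$.

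For each edge $e$ I then split into two cases. If some vertex path $Q \in \mathcal{P}$ passes through $\gad(e)$, then $|Q \cap P_a|, |Q \cap P_b| \le 1$ by the same arc-sharing bound, and the uniqueness clause of \Cref{obs:edge-gadget} forces $P_a = P_{e,a}$ and $P_b = P_{e,b}$, so the two auxiliary paths at $\gad(e)$ are already primal. Otherwise no vertex path of $\mathcal{P}$ meets $\gad(e)$, and I replace $(P_a, P_b)$ in $\mathcal{P}$ by the primal pair $(Q_{e,a}, Q_{e,b})$. The modified set has the same size, the same total number of auxiliary paths, and the same multiset of auxiliary-output gates used, so it still satisfies (i) and (ii); but it contains strictly more primal auxiliary paths, which will contradict (iii) once feasibility is verified.

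The main obstacle, which is really just routine bookkeeping, is confirming that the swap preserves $|P \xor P'| \ge 2\ell - 2$ for every remaining pair. The two new primal paths $Q_{e,a}$ and $Q_{e,b}$ are arc-disjoint inside $\gad(e)$ and share only $(s, v_E)$, giving distance $2\ell - 2$. Any other auxiliary path of $\mathcal{P}$ intersects a different edge gadget (by \Cref{obs:proper-auximilary-path}) and therefore meets $Q_{e,a}$ and $Q_{e,b}$ only at $(s, v_E)$, again giving distance $2\ell - 2$. Every vertex path in $\mathcal{P}$ is, by the case assumption, arc-disjoint from $\gad(e)$, and since primal auxiliary paths avoid the arc $(s, v_V)$, they are completely arc-disjoint from $Q_{e,a}$ and $Q_{e,b}$, giving distance $2\ell$. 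This closes the contradiction, rules out Case~B for every edge, and shows that every auxiliary path of $\mathcal{P}$ is primal.
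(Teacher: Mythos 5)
Your first half retraces the paper's own route: with exactly $2|E|$ auxiliary paths (\Cref{obs:auxiliary-paths}), each intersecting one gadget (\Cref{obs:proper-auximilary-path}) and leaving at an auxiliary-output gate (\Cref{obs:leaving-auxiliary-output-gate}), every auxiliary-input gate of every $\gad(e)$ is used by exactly one auxiliary path, and the first part of \Cref{obs:edge-gadget} pairs them as $a_e \to a'_e$ and $b_e \to b'_e$. The paper then finishes immediately: it reads the uniqueness statement of \Cref{obs:edge-gadget} as pinning these two routes down to $P_{e,a}$ and $P_{e,b}$, so each auxiliary path is $Q_{e,a}$ or $Q_{e,b}$ with no case distinction, no condition (iii), and no exchange argument. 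Your proof instead hedges with a case split, and that is where a genuine gap appears. In your Case A you invoke the ``Moreover'' clause of \Cref{obs:edge-gadget}, but that clause is stated only for a path running from a \emph{vertex-input} gate to a \emph{vertex-output} gate, whereas a vertex path of $\mathcal P$ that merely ``passes through'' $\gad(e)$ may enter at $x_e$ or $y_e$ and leave at an auxiliary-output gate $a'_e$ or $b'_e$ -- exactly the configurations treated in Cases II and III of the proof of \Cref{obs:auxiliary-paths}. For such a vertex path the uniqueness clause cannot be applied as stated, so Case A does not force $P_a = P_{e,a}$ and $P_b = P_{e,b}$.

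Nor can you repair this by pushing those gadgets into your Case B: your feasibility check for the swap uses precisely the assumption that \emph{no} vertex path meets $\gad(e)$, and a vertex path exiting at $a'_e$ traverses the whole connector from $a'_e$ to $t$ (and possibly arcs of $P_{e,a}$ inside the gadget), so it may share two or more arcs with the inserted primal path $Q_{e,a}$, destroying the bound $|P \xor P'| \ge 2\ell - 2$ for the modified family. So as written the argument covers neither case correctly for gadgets traversed by a ``deviating'' vertex path. The fix is to argue (as the paper implicitly does, via the uniqueness asserted in \Cref{obs:edge-gadget}) that two arc-disjoint paths $a_e \to a'_e$ and $b_e \to b'_e$ inside the gadget are already forced to be the red and blue paths $P_{e,a}$ and $P_{e,b}$, independently of any vertex path; once you have that, your condition (iii), the exchange, and the entire case analysis become unnecessary.
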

\begin{proof}
    Let $P \in \mathcal P$ be an auxiliary path.
    By~\Cref{obs:proper-auximilary-path,obs:leaving-auxiliary-output-gate}, there is a unique edge gadget $\gad(e)$ such that $P$ enters it at an auxiliary-input gate and leaves it at an auxiliary-output gate.
    Since there are $2|E|$ auxiliary paths in $\mathcal P$, there is another auxiliary path $Q \in \mathcal P$ that enters $\gad(e)$ at the other auxiliary-input and leaves $\gad(e)$ at the other auxiliary-output gate.
    Thus, by~\Cref{obs:edge-gadget}, $P$ passes through $a_e$ (resp.\ $b_e$) and then $a'_e$ (resp.\ $b'_e$), which means $P = Q_{e, a}$ (resp.\ $Q_{e, b}$).
    Hence, $P$ is a primal auxiliary path.
\end{proof}

\end{document}